\documentclass[a4paper,UKenglish,cleveref]{lipics-v2021}

\usepackage[utf8]{inputenc}
\usepackage[T1]{fontenc}
\usepackage{microtype}
\usepackage{amsmath,amssymb,mathtools}
\usepackage{booktabs}
\usepackage{enumitem}
\usepackage{algorithm}
\usepackage{algpseudocode}
\usepackage{float}
\usepackage{array}
\usepackage{graphicx}
\usepackage{xurl}
\usepackage{makecell}

\let\le\leqslant
\let\ge\geqslant

\newcommand{\Floor}[1]{\left\lfloor #1 \right\rfloor}
\newcommand{\1}{\mathbf{1}}
\DeclareMathOperator{\mult}{\mathcal{W}}

\title{Counting All Lattice Rectangles in the Square Grid in Near-Linear Time}

\author{Dmitry Babichev}{France}{dimitry008@gmail.com}{}{}
\author{Sergey Babichev}{France}{bsl84848@gmail.com}{}{}
\authorrunning{Dmitry Babichev and Sergey Babichev}
\ccsdesc[500]{Theory of computation~Computational geometry}
\ccsdesc[500]{Theory of computation~Design and analysis of algorithms}
\ccsdesc[300]{Mathematics of computing~Discrete mathematics}
\keywords{Lattice rectangles, grid enumeration, floor sums, M\"obius inversion}
\category{}
\relatedversion{}
\supplement{}
\funding{}
\acknowledgements{}

\EventLongTitle{}
\EventShortTitle{}
\EventAcronym{}
\EventYear{}
\EventDate{}
\EventLocation{}
\SeriesVolume{}
\ArticleNo{}

\hideLIPIcs

\setlist{topsep=2pt,itemsep=1pt,parsep=0pt,partopsep=0pt}
\begin{document}
	\maketitle
	
	\begin{abstract}
		We study the exact counting problem for all lattice rectangles contained in the square
		$[0,n)\times[0,n)$, including non-axis-parallel ones.
		Starting from the standard parametrization by a primitive direction $(u,v)$ and two side lengths, we derive several exact algorithms: the classical $O(n^2)$ sweep, decompositions of complexity $O(n^{3/2}\log n)$ and $O(n^{4/3}\log n)$, a ten-moment weighted-floor-sum reduction of complexity $O(n\log^3 n)$, and a divisor-layer algorithm with the complexity $O(n\log^2 n)$. We also give an all-values algorithm that computes $F(1),\ldots,F(N)$ in $O(N^{3/2})$ arithmetic operations. The main idea behind the near-linear one-value algorithms is to reduce the geometric summation to constant-size families of weighted floor sums closed under Euclidean-style affine and reciprocal transformations. Besides the exact algorithmic results, we derive a two-term asymptotic expansion,
		$
		F(n)=\frac{4\log 2-1}{\pi^2}n^4\log n+B\,n^4+o(n^4)$ with the explicit formula for $B$, 
		which provides an independent consistency check for the large-$n$ numerical data produced by the algorithms.
	\end{abstract}
	
	\section{Introduction}

	We study the following problem.
	
	\medskip
	\noindent
	\textbf{Problem.}
	Count all rectangles whose vertices lie in the integer lattice and are contained in the square $[0,n)\times[0,n)$, including non-axis-parallel rectangles.
	
	\medskip
	
	Counting lattice configurations in planar grids is a classical topic at the interface of combinatorics, geometry of numbers, and lattice-point enumeration. In this paper we study an exact counting problem for lattice rectangles.
	
	The resulting counting function is the OEIS sequence A085582~\cite{OEISA085582}. Geometrically, it extends the classical axis-parallel rectangle count to arbitrary lattice orientations in the same $n \times n$ grid of points. Arithmetically, each non-axis-parallel family is governed by a primitive direction $(u,v)$, which brings in coprimality, divisor sums, and floor-sum recurrences. Computationally, the OEIS entry already collects exact values, tables, and scripts, so faster exact algorithms are useful both for extending the data and for testing asymptotic predictions.
	
	The classical split appears in the OEIS decomposition
	$
	A085582(n)=A000537(n-1)+A113751(n),
	$
	where $A000537(n-1)=\binom{n}{2}^2$ is the axis-parallel count and $A113751(n)$ records the non-axis-parallel contribution~\cite{OEISA085582}. Our goal is to show that the full sequence is not merely a table of values, but a structured counting problem admitting several increasingly efficient exact algorithms.
	
	We are not aware of prior work on this exact enumeration problem in the present algorithmic form. Our main contribution is an algorithmic reduction from geometric enumeration to a constant-size family of weighted Euclidean floor-sum kernels, which yields the complexity improvements proved below. For related background on floor-sum reciprocity, elementary number-theoretic estimates, and lattice-point enumeration, see~\cite[Section~3.5]{ConcreteMath} and~\cite{HardyWright,BeckRobins,BeckRobinsDedekind,ZagierDedekind}. The use of primitive lattice directions also connects the problem to the broader literature on primitive lattice points in planar domains and polygonal counting problems~\cite{HuxleyNowak,Patrascu}.
	
	Our main contributions are as follows.
	\begingroup\setlength{\itemsep}{2pt}
	\begin{enumerate}
		\item We derive several exact one-value algorithms beyond the quadratic baseline, with complexities $O(n^{3/2}\log n)$, $O(n^{4/3}\log n)$, $O(n\log^3 n)$, and $O(n\log^2 n)$. The $O(n\log^2 n)$ method uses M\"obius divisor layers and a square-root cover inside each layer, while the weighted floor-sum kernels are evaluated by Euclidean recurrences. See \cref{tab:algorithm-summary} for more details.
		\item We also derive an all-values algorithm computing the whole prefix table $F(1),\ldots,F(N)$ in $O(N^{3/2})$ arithmetic operations and $O(N)$ memory. It replaces repeated summatory evaluations by coefficient arrays indexed by the exact threshold $ax+by$ and then applies a M\"obius divisor convolution.
		\begin{table}[ht]
			\centering
			\small
			\setlength{\tabcolsep}{6pt}
			\renewcommand{\arraystretch}{0.98}
			\begin{tabular}{@{}l l c c l@{}}
				\toprule
				Method & Main idea & Time & Authors & Kernel \\
				\midrule
				Baseline &
				\makecell[tl]{primitive-direction\\enumeration} &
				$O(n^2)$ &
				Radcliffe &
				none \\
				\addlinespace
				Square-root &
				\makecell[tl]{small/large split;\\coprime prefixes} &
				$O(n^{3/2}\log n)$ &
				this paper &
				coprime prefixes \\
				\addlinespace
				Cubic-root &
				\makecell[tl]{dual parametrization;\\six moments} &
				$O(n^{4/3}\log n)$ &
				this paper &
				\makecell[tl]{six-moment\\weighted floor-sum} \\
				\addlinespace
				Ten-moment &
				\makecell[tl]{M\"obius inversion;\\weighted floor sums} &
				$O(n\log^3 n)$ &
				this paper &
				\makecell[tl]{ten-moment\\weighted floor-sum} \\
				\addlinespace
				Divisor-layer &
				\makecell[tl]{M\"obius layers;\\square-root cover} &
				$O(n\log^2 n)$ &
				this paper &
				\makecell[tl]{six-moment\\weighted floor-sum} \\
				\addlinespace
				All-values &
				\makecell[tl]{event arrays;\\M\"obius convolution} &
				$O(N^{3/2})$ &
				this paper &
				\makecell[tl]{arithmetic-progression\\updates} \\
				\bottomrule
			\end{tabular}
			\caption{Summary of the algorithms discussed in the paper.}
			\label{tab:algorithm-summary}
		\end{table}

		\item We derive a two-term asymptotic expansion
		$
		F(n)=\frac{4\log 2-1}{\pi^2}n^4\log n+B\,n^4+o(n^4),
		$
		with an explicit constant $B$. This asymptotic analysis is partly independent of the algorithmic development, but it is included here because it gives a stringent large-$n$ validation of the exact counts and in particular of the high-precision computations used in the fastest implementations.
	\end{enumerate}\endgroup

	\medskip
	
	The intermediate algorithms are included because they expose the structural reductions leading to the near-linear methods: the $O(n^{3/2}\log n)$ split introduces the direction/side-length duality, the $O(n^{4/3}\log n)$ algorithm isolates a finite floor-moment state space, the $O(n\log^3 n)$ algorithm uses a ten-state weighted floor-sum kernel, and the $O(n\log^2 n)$ divisor-layer algorithm combines the same six-state kernel with a M\"obius-layer square-root cover. The all-values algorithm uses the same threshold geometry, but stores coefficients by exact threshold so that the whole table is recovered by prefix sums.
	
	\medskip
	
	Throughout the paper we count arithmetic operations in a standard RAM model. In the implementation, all loop indices and intermediate affine parameters are stored in \texttt{std::int128\_t}, which is exact throughout the benchmarked range reported in \cref{sec:experiments}. Accordingly, the stated running times should be read as arithmetic-operation bounds rather than bit-complexity bounds for arbitrarily large integers. The recursive floor-sum kernels have $O(1)$ state size; the remaining memory usage comes from the preprocessing tables.
	
	\section{The standard parametrization}\label{sec:standard}
	This section fixes the basic parametrization and separates the easy axis-parallel and boundary contributions from the primitive non-axis-parallel part.
	
	Let $(u,v)$ be a primitive lattice direction, so $u\ge v > 0$ and $\gcd(u,v)=1$. The orthogonal direction is $(-v,u)$. A rectangle with side lengths $a,b\ge 1$ in these two directions is generated by $a\cdot (u,v)$ and $b\cdot (-v,u)$, hence its axis-aligned bounding box has side lengths $au+bv$ and $av+bu$. Therefore it fits into the half-open square $[0,n)\times[0,n)$ if and only if
	\begin{equation}\label{eq:basic-constraints}
		au+bv\le n,\qquad av+bu\le n.
	\end{equation}
	For such a quadruple $(u,v,a,b)$, the number of placements equals
	\begin{equation}\label{eq:placements}
		(n-au-bv)(n-av-bu).
	\end{equation}
	
	We split
	$
	F(n)=F_0(n)+F_1(n).
	$
	Here \(F_0(n)\) collects the axis-parallel orientations together with the boundary direction \((1,1)\), while \(F_1(n)\) contains all primitive directions with
	\(0<v<u\). The axis-parallel term is the classical rectangular-grid count, the direction \((1,1)\) is the simplest genuinely tilted family, and all remaining work is concentrated in the primitive non-axis-parallel directions. These cases are separated to avoid double counting and to isolate the
	boundary direction \((1,1)\). The axis-parallel rectangles contribute
	\(\binom{n}{2}^2 = n^2(n-1)^2/4\), and the direction \((1,1)\) contributes
	\(n(n-1)^2(n-2)/12\). Hence
	\[
	F_0(n)=\frac{n^2(n-1)^2}{4}+\frac{n(n-1)^2(n-2)}{12}.
	\]
	
	For \(F_1(n)\), by symmetry in \(a,b\), set \(x:=\max(a,b)\) and
	\(y:=\min(a,b)\), so \(x\ge y\ge 1\). The constraints become
	\(xu+yv\le n\) and \(xv+yu\le n\), and the second inequality is redundant because
	\(x\ge y\) and \(u>v\). Thus the two orderings of \((a,b)\) collapse into one
	term; swapping \((u,v)\) with \((v,u)\) only produces the reflected direction,
	excluded by the convention \(u>v>0\). Thus we get
	\begin{equation}\label{eq:xyuv-standard}
		F_1(n)= \!\!\!\!\!\!\!\!\!\!
		\sum_{\substack{u>v>0,\ x\ge y\ge 1\\ \gcd(u,v)=1,\ xu+yv\le n}}\!\!\!\!\!\!\!\!\!\!
		\mult(x,y)\,(n-xu-yv)(n-xv-yu), 
	\end{equation}
	where $\mult(x,y)=2$ if $x=y$ and $\mult(x,y)=4$ if $x>y$. This identity will be the common starting point for the reparametrizations below.

	\section{The classical quadratic algorithm}\label{sec:quadratic}
	We record the natural $O(n^2)$ baseline obtained by sweeping over primitive directions and summing admissible side-length pairs~\cite{OEISA085582,RadcliffeA085582}.
	
	\subsection{Direction sweep}
	
	Fix a primitive direction $(u,v)$ with $u > v$. For each admissible $y$, the variable $x$ runs through an interval determined by \cref{eq:xyuv-standard}. A straightforward implementation computes the contribution of one primitive direction in time
	$
	O\!\left(n/u\right),
	$
	or in time $O\!\left(n/u\cdot \log n\right)$ if one performs coprimality tests online and evaluates the inner arithmetic progressions without the closed-form simplifications.
	
	The number of primitive pairs with first coordinate $u$ is $\varphi(u)+O(1)$, so using the classical estimate
	$
	\sum_{u\le n}\frac{\varphi(u)}{u}=O(n)
	$ we get:
	\[
	\sum_{u\le n}\varphi(u)\frac{n}{u}=O(n^2),
	\qquad
	\sum_{u\le n}\varphi(u)\frac{n\log n}{u}=O(n^2\log n),
	\]

	\begin{proposition}[classical baseline]
		The rectangle-counting problem admits
		\begingroup\setlength{\itemsep}{2pt}
		\begin{itemize}
			\item a direct implementation of complexity $O(n^2\log n)$ if primitivity and inner summation are handled naively, and
			\item an $O(n^2)$ implementation after standard preprocessing of primitive directions and elementary arithmetic simplifications.
		\end{itemize}\endgroup
	\end{proposition}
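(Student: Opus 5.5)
The plan is to evaluate \cref{eq:xyuv-standard} directly, organized by primitive direction, and bound the cost per direction. $F_0(n)$ is given in closed form and costs $O(1)$, so all the work is in $F_1(n)$. Fix $u$ with $2\le u\le n$ and $v$ with $1\le v<u$, $\gcd(u,v)=1$. Since $x\ge y\ge 1$, the constraint $xu+yv\le n$ forces $x\le n/u$. Hence $y$ also ranges over at most $\Floor{n/u}$ values, and for each $y$ the admissible $x$ form the interval
\[
y\le x\le \Floor{(n-yv)/u}.
\]
The summand $\mult(x,y)(n-xu-yv)(n-xv-yu)$ is, for fixed $(u,v,y)$, a quadratic polynomial in $x$ on $x>y$, plus the separate term $x=y$. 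Its sum over the interval therefore has a closed form via $\sum x$ and $\sum x^2$, costing $O(1)$ per $y$. This gives $O(n/u)$ work per direction, as claimed in the direction sweep.

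\textbf{First bullet.} We loop over all $1\le v<u\le n$ and test $\gcd(u,v)=1$ by Euclid's algorithm at cost $O(\log n)$. Directions with $u>n$ contribute nothing and are skipped. We allow the inner sum to be computed naively, either as $O(n/u)$ closed-form blocks or, for a fully naive count, by charging $O(\log n)$ per step as in the text. The total cost is then
\[
\sum_{u\le n}\sum_{v<u}\Bigl(O(\log n)+O\bigl(\tfrac{n}{u}\log n\bigr)\Bigr)=O(n^2\log n),
\]
because $\sum_{u\le n} u\cdot \frac{n}{u}=n^2$. This already covers even the non-primitive pairs we test and discard.

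\textbf{Second bullet.} We remove the $\log$ from the gcd tests by precomputing coprimality for all pairs with $v<u\le n$. Making the gcd cost $O(1)$ amortized is the step I expect to be the main obstacle, because a naive table is $\Theta(n^2)$ entries. That size is acceptable for time but not elegant, so I would try to avoid the table. The first option is to enumerate primitive pairs with the Stern--Brocot/Farey recursion, generating every coprime $(u,v)$ with $u\le n$ exactly once in $O(1)$ amortized time each. The alternative is to observe that the contribution of a direction depends on $v$ only through the sums over $y$. Grouping by $u$, one needs $\sum_{v\perp u}v^k$ for $k\le 2$ and the like, which are computable from the factorization of $u$ via a linear sieve in $O(n)$ total. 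In either variant the per-direction work is $O(n/u)$ with $O(1)$ overhead, so the total is
\[
\sum_{u\le n}\varphi(u)\Bigl(O(1)+O\bigl(\tfrac{n}{u}\bigr)\Bigr)=O(n^2),
\]
by $\sum_{u\le n}\varphi(u)/u=O(n)$ and $\sum_{u\le n}\varphi(u)=O(n^2)$.

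Finally, we check that the closed-form per-$y$ evaluation is exact. The multiplicity $\mult$ is handled by splitting off $x=y$, and empty intervals, where $\Floor{(n-yv)/u}<y$, terminate the $y$-loop. This is routine.
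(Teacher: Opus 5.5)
Your argument is correct and essentially matches the paper's: you do $O(n/u)$ closed-form work per primitive direction (or the cruder $O(n\log n/u)$ for the naive bullet) and sum using $\sum_{u\le n}\varphi(u)/u=O(n)$, and your Farey/Stern--Brocot enumeration of coprime pairs in $O(1)$ amortized time spells out the ``standard preprocessing'' that the paper leaves implicit. Drop your second alternative, though: the admissible $(x,y)$-ranges depend on $v$ through $\lfloor n/(u+v)\rfloor$ and $\lfloor (n-yv)/u\rfloor$, so complete coprime power sums $\sum_{v\perp u}v^k$ do not determine $\sum_{v\perp u}c_{u,v}(n)$; repairing it would require coprime prefix sums at $(u,x,y)$-dependent cutoffs, as in the dual parametrization of \cref{sec:sqrt}, not precomputed full sums from a sieve.
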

	
	Algorithmically, the classical baseline sweeps over primitive directions $(u,v)$ and, for each such direction, over the admissible values of the secondary side length parameter; for fixed $(u,v,y)$, the remaining interval in $x$ is summed explicitly. We state this baseline explicitly because every later improvement modifies exactly one bottleneck of this sweep. Compare the standard divisor-summation and hyperbola-method viewpoints in~\cite{MontgomeryVaughan,Tenenbaum}; a more implementation-oriented version is recorded in \cref{app:pseudocode}. For readers coming from the OEIS entry, this is also the closest formalization of the straightforward computation behind the existing tables and scripts~\cite{RadcliffeA085582}.
	
	\section{\texorpdfstring{A square-root decomposition: $O(n^{3/2}\log n)$}{A square-root decomposition}}\label{sec:sqrt}
	
	Let
	$
	B:=\Floor{\sqrt n}.
	$
	We split the set of primitive directions into two regions
	\[
	\mathcal D_{\mathrm{small}}
	=
	\{(u,v): \gcd(u,v)=1,\ u > v > 0,\ u\le B\},
	\]
	\[
	\mathcal D_{\mathrm{large}}
	=
	\{(u,v): \gcd(u,v)=1,\ u > v > 0,\ u>B\}.
	\]
	
	\subsection{Small directions}
	
	For $(u,v)\in\mathcal D_{\mathrm{small}}$, we keep the classical parametrization $(u,v,x,y)$. The contribution of one direction is computed in time
	$
	O\!\left(n/u \cdot \log n\right),
	$
	and hence the total cost is
	\[
	\sum_{u\le B}\varphi(u)\frac{n\log n}{u}
	=
	O(nB\log n)
	=
	O(n^{3/2}\log n).
	\]
	
	\subsection{Large directions}
	If $(u,v)\in \mathcal D_{\mathrm{large}}$, then $u>B$, hence every admissible rectangle has
	$x\le n/u<n/B$. Therefore we switch the order of summation and use the
	parameterization from \cref{sec:standard}. Restricting the formula
	for $F_1(n)$ to $\mathcal D_{\mathrm{large}}$, we obtain
	\[
	F_{1,\mathrm{large}}(n)=
	\sum_{\substack{(u,v)\in \mathcal D_{\mathrm{large}}\\ x\ge y\ge 1,\ xu+yv\le n}}
	\mult(x,y)\,(n-xu-yv)(n-xv-yu).
	\]
	
	Fix $u,x,y$ and set
	\[
	v_{\max}:=\min\!\left(u-1,\left\lfloor \frac{n-xu}{y}\right\rfloor\right).
	\]
	Then the inner summation runs over all $v\in[1,v_{\max}]$ with $\gcd(u,v)=1$, and
	the placement factor is a quadratic polynomial in $v$:
	\[
	(n-xu-yv)(n-xv-yu)=A_0(u;x,y)+A_1(u;x,y)\,v+A_2(x,y)\,v^2,
	\]
	where
	\[
	A_0(u;x,y)=(n-xu)(n-yu),\qquad
	A_1(u;x,y)=-(x(n-xu)+y(n-yu)),\qquad
	A_2(x,y)=xy.
	\]
	Hence for fixed $u,x,y$ we need only the three coprime prefix sums
	\[
	C_j(u;X):=\sum_{\substack{1\le v\le X\\ \gcd(u,v)=1}} v^j,
	\qquad j\in\{0,1,2\}.
	\]
	Indeed, by M\"obius inversion, $\1_{\gcd(u,v)=1}=\sum_{d\mid u,\ d\mid v}\mu(d)$, and therefore
	\[
	C_j(u;X)=\sum_{d\mid u}\mu(d)\sum_{m\le X/d}(dm)^j.
	\]
	Consequently the $v$-sum becomes
	\[
	\sum_{\substack{1\le v\le v_{\max}\\ \gcd(u,v)=1}}(n-xu-yv)(n-xv-yu)
	=A_0C_0(u;v_{\max})+A_1C_1(u;v_{\max})+A_2C_2(u;v_{\max}).
	\]
	So the whole large-direction contribution is
	\[
	F_{1,\mathrm{large}}(n)=
	\sum_{u>B}\ \sum_{x\le n/u}\ \sum_{y\le x}
	\mult(x,y)
	\bigl(A_0C_0(u;v_{\max})+A_1C_1(u;v_{\max})+A_2C_2(u;v_{\max})\bigr).
	\]
	For each fixed $u$, the number of pairs $(x,y)$ is $O((n/u)^2)$, and one coprime-prefix query is answered in $O(\tau(u))$ time from the squarefree divisors of $u$. Summing over $u>B$ gives
	\[
	\sum_{u>B} O\!\left(\tau(u)\Bigl(\frac{n}{u}\Bigr)^2\right)
	=O\!\left(n^2\sum_{u>B}\frac{\tau(u)}{u^2}\right)
	=O\!\left(\frac{n^2\log n}{B}\right).
	\]
	
	\begin{theorem}[square-root decomposition]
		Choosing $B=\Floor{\sqrt n}$ and treating small directions and large directions by dual parametrizations yields an algorithm of complexity
		$
		O(n^{3/2}\log n).
		$
	\end{theorem}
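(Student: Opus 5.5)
The plan is to verify three things: that the two regions give an exact partition of the sum \cref{eq:xyuv-standard}, that each region is computed correctly in the claimed time, and that the choice $B=\Floor{\sqrt n}$ balances the two costs. Every primitive direction with $u>v>0$ has either $u\le B$ or $u>B$, so $F_1(n)=F_{1,\mathrm{small}}(n)+F_{1,\mathrm{large}}(n)$ with no overlap. $F_0(n)$ is given in closed form, so it costs $O(1)$.

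For the small region I will reuse the direction sweep of \cref{sec:quadratic} unchanged, restricted to $u\le B$. It costs $O((n/u)\log n)$ per direction and there are at most $\varphi(u)\le u$ directions with first coordinate $u$. Summing gives
\[
\sum_{u\le B} u\cdot\frac{n\log n}{u}=O(nB\log n)=O(n^{3/2}\log n).
\]

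For the large region I will check correctness first. With $u,x,y$ fixed, the admissible $v$ are exactly the integers with $1\le v\le u-1$, $yv\le n-xu$ and $\gcd(u,v)=1$, which is the range $[1,v_{\max}]$. As noted, the condition $xv+yu\le n$ is implied by $xu+yv\le n$. The placement factor expands into the stated quadratic in $v$, so the inner sum equals $A_0C_0+A_1C_1+A_2C_2$.

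Next come the preprocessing costs. A linear sieve up to $n$ gives $\mu$ and a smallest-prime-factor table in $O(n)$ time, so the squarefree divisors of each $u$ can be listed in $O(\tau(u))$ time. Each $C_j(u;X)$ is then a sum over $d\mid u$ of $\mu(d)d^j$ times a closed-form power sum $\sum_{m\le X/d}m^j$, which costs $O(\tau(u))$.

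Only triples with $x\le n/u$ and $y\le x$ are enumerated, and only those with $v_{\max}\ge 1$ contribute. That gives $O((n/u)^2)$ queries per $u$, and the large part costs $O(n^2\sum_{u>B}\tau(u)/u^2)$.

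The main obstacle is bounding this tail by $O(\log n/B)$. I would use partial summation with $\sum_{u\le t}\tau(u)=t\log t+O(t)$:
\[
\sum_{u>B}\frac{\tau(u)}{u^2}=\int_B^\infty \frac{2D(t)}{t^3}\,dt-\frac{D(B)}{B^2}=O\!\left(\frac{\log B}{B}\right),
\]
where $D(t)=\sum_{u\le t}\tau(u)$.

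A simpler alternative avoids the divisor function altogether. Over the range $x\le n/u$ the value of $u$ is at most $n/x$, so one can swap sums to $\sum_{x\le n/B}\sum_{y\le x}\sum_{B<u\le n/x}\tau(u)\ll\sum_{x<\sqrt n} x\,(n/x)\log n$. This again gives $O(n^{3/2}\log n)$, provided one adds the restriction $u\le n/x$, which the enumeration over $x\le n/u$ already enforces.

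With $B=\Floor{\sqrt n}$ the large part is $O(n^{3/2}\log n)$. Adding the $O(n)$ sieve, the small part and the large part gives total complexity $O(n^{3/2}\log n)$.
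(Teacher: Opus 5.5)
Your proposal is correct and is essentially the paper's argument: the same split at $B=\lfloor\sqrt n\rfloor$, the baseline $O((n/u)\log n)$ sweep for $u\le B$, and, for $u>B$, the coprime prefix sums $C_0,C_1,C_2$ evaluated over the squarefree divisors of $u$, together with the tail bound $\sum_{u>B}\tau(u)/u^2=O(\log n/B)$, which you justify by partial summation where the paper only asserts it. One small wording point: your ``simpler alternative'' does not actually avoid the divisor function, since after swapping the order of summation it still relies on $\sum_{u\le X}\tau(u)=O(X\log X)$.
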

	
	At the algorithmic level, one chooses the threshold $B=\Floor{\sqrt n}$, evaluates the small-direction range $u\le B$ by the one-direction routine from the previous section, and evaluates the complementary range by the dual sweep in the $(x,y)$ variables. \cref{app:pseudocode} records a fuller pseudocode version of this decomposition.
	
	\section{\texorpdfstring{A cubic-root decomposition: $O(n^{4/3}\log n)$}{A cubic-root decomposition}}\label{sec:cuberoot}
	
	The next refinement follows the same philosophy as the square-root split, but with a stronger one-direction kernel. Let
	\[
	B:=\Floor{n^{2/3}}.
	\]
	We again split the set of primitive directions at the threshold $u=B$.
	
	\subsection{\texorpdfstring{Small directions: one primitive pair in $O(\log n)$}{Small directions: one primitive pair}}\label{sec:lognpair}

	Fix a primitive direction $(u,v)$ with
	\[
	u>v\ge 1,\qquad \gcd(u,v)=1.
	\]
	Denote by $c_{u,v}(n)$ its total contribution in the parametrization of~\eqref{eq:xyuv-standard}:
	\[
	c_{u,v}(n)
	:=
	\sum_{\substack{x\ge y\ge 1\\ xu+yv\le n}}
	\mult(x,y)\,(n-xu-yv)(n-xv-yu),
	\]
	We sum over $x$, with $y$ as the inner variable. For fixed $x$, the admissible values of $y$ satisfy
	\[
	1\le y\le \min\!\left(x, M(x)\right),\text{ where }M(x) = \Floor{\frac{n-xu}{v}}
	\]
	Thus the $x$-range splits at the transition point $x=\frac{n}{u+v}$. Therefore we obtain two segments:
	\[
	I_1=\left[1,\Floor{\frac{n}{u+v}}\right],
	\qquad
	I_2=\left[\Floor{\frac{n}{u+v}}+1,\Floor{\frac{n}{u}}\right].
	\]
	
	On the first segment $I_1$, the bound coming from $xu+yv\le n$ is inactive, so $1\le y\le x$.
	On the second segment $I_2$, the active upper bound is
	$
	1\le y\le \Floor{\frac{n-xu}{v}}.
	$
	Thus $c_{u,v}(n)$ is the sum of two segment contributions.
	
	On $I_1$, we separate the diagonal $y=x$ from the strict range $1\le y<x$.
	The diagonal contribution is computable in $O(1)$ time.
	
	On $I_2$, the upper bound for $y$ is $M(x)$, and since $M(x)<x$ throughout this segment, we always have $x>y$, hence $\mult(x,y)=4$. Expanding
	\[
	(n-xu-yv)(n-xv-yu)=P_0(x)+P_1(x)y+P_2y^2,
	\]
	where $P_0(x)$ is quadratic in $x$, $P_1(x)$ is linear in $x$, and $P_2$ is constant, and then carrying out the inner summation over $y$, we obtain a polynomial in $x$ and $M(x)$. More precisely, the total contribution of $I_2$ is a fixed linear combination of the six floor moments
	\begin{equation}\label{eq:sixmoments}
		\sum_{x\in I_2} M(x), \ \ 
		\sum_{x\in I_2} xM(x), \ \ 
		\sum_{x\in I_2} x^2M(x), \ \ 
		\sum_{x\in I_2} M(x)^2, \ \ 
		\sum_{x\in I_2} xM(x)^2, \ \ 
		\sum_{x\in I_2} M(x)^3.
	\end{equation}
	To formalize the required queries, for integers $N\ge 0$, $m\ge 1$, and $a,b\ge 0$, define the six-moment kernel
	\begin{equation}\label{eq:six-kernel-family}
		\mathcal H_{p,q}(N;m,a,b)
		:=
		\sum_{t=0}^{N-1} t^p\Floor{\frac{at+b}{m}}^{q},
		\ \ 
		(p,q)\in\{(0,1),(1,1),(2,1),(0,2),(1,2),(0,3)\}.
	\end{equation}
	The interval sums appearing in \eqref{eq:sixmoments} have negative slope in the summation variable, so before invoking the kernel we first reverse the index.
	
	\begin{lemma}[sign reversal for floor moments]\label{lem:six-sign-reversal}
		Let $m\ge 1$, $u\ge 0$, $B\in\mathbb Z$, and $L\le R$. Put $N:=R-L+1$. Then for every $p\ge 0$ and $q\ge 1$,
		\[
		\sum_{x=L}^{R} x^p \Floor{\frac{B-ux}{m}}^{q}
		=
		\sum_{j=0}^{p} (-1)^j \binom{p}{j} R^{p-j}
		\mathcal H_{j,q}(N;m,u,B-uR).
		\]
	\end{lemma}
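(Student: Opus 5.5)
The plan is to reverse the summation index, then expand the power $x^p$ with the binomial theorem. Substitute $x=R-t$, so as $x$ runs from $L$ to $R$, the new index $t$ runs from $0$ to $R-L=N-1$. The floor argument transforms as
\[
\frac{B-ux}{m}=\frac{B-u(R-t)}{m}=\frac{ut+(B-uR)}{m},
\]
which is exactly the affine numerator $at+b$ of the kernel \eqref{eq:six-kernel-family} with $a=u$ and $b=B-uR$. Consequently
\[
\Floor{\frac{B-ux}{m}}^{q}=\Floor{\frac{ut+(B-uR)}{m}}^{q},
\]
and the slope is now nonnegative, as the kernel requires.

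Next, expand the weight:
\[
x^p=(R-t)^p=\sum_{j=0}^{p}\binom{p}{j}R^{p-j}(-1)^j t^j .
\]
Interchanging the finite sums over $t$ and $j$ gives
\[
\sum_{x=L}^{R} x^p \Floor{\frac{B-ux}{m}}^{q}
=\sum_{j=0}^{p}(-1)^j\binom{p}{j}R^{p-j}\sum_{t=0}^{N-1}t^j\Floor{\frac{ut+B-uR}{m}}^{q}.
\]
The inner sum is by definition $\mathcal H_{j,q}(N;m,u,B-uR)$, which yields the stated identity. The case $L=R$ (so $N=1$) is covered by the same computation, and with $t=0$ the convention $0^0=1$ must be used.

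The only real care point is the definition of $\mathcal H_{p,q}$ itself. The kernel is stated for $b\ge 0$, but $b=B-uR$ may be negative. I would therefore read the identity as an identity of the formal sums, which holds for arbitrary integer $b$. Separately, I would note that in the intended use the offset can be normalized when needed: writing $b=km+b'$ with $0\le b'<m$ gives $\Floor{(at+b)/m}=k+\Floor{(at+b')/m}$, and the power $q$ can then be expanded binomially in $k$. That normalization belongs to the kernel evaluation, not to this lemma, so the proof itself is just the two substitutions above.
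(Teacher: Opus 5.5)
Your proposal is correct and is the paper's proof: substitute $x=R-t$ so the slope becomes $+u$ with offset $B-uR$, expand $(R-t)^p$ binomially, and interchange the finite sums. Your remark that $b=B-uR$ may be negative, so the kernel must be read as the formal sum and the normalization deferred to the affine step, is a fair point that the paper passes over silently.
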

	
	\begin{proof}
		Let $x=R-t$, where $0\le t\le R-L$. Then \[\Floor{\frac{B-ux}{m}}=\Floor{\frac{B-uR+ut}{m}},
		\qquad
		x^p=(R-t)^p=\sum_{j=0}^{p} (-1)^j\binom{p}{j}R^{p-j}t^j.
		\]
		Substituting and exchanging the finite sums gives the claim.
	\end{proof}
	
	In particular, each of the six sums in \eqref{eq:sixmoments} is a linear combination of six-moment kernel values with nonnegative slope $a=u$; these six direct states are already recursively closed, since the affine and reciprocal Euclidean steps do not increase the total weighted degree $p+q$, so the family with $q\ge 1$ and $p+q\le 3$ is stable. The exact transition formulas are recorded in \cref{app:six-moments}. The reciprocal step used below is analogous to the classical conjugation/reciprocity viewpoint in integer-partition decompositions~\cite{AndrewsEriksson}.
	
	\begin{lemma}[affine closure of the six-moment kernel]\label{lem:six-affine}
		Each affine normalization step expresses a state $\mathcal H_{p,q}(N;m,a,b)$ as a linear combination of states $\mathcal H_{p',q'}(N;m,a',b')$ inside the same six-state family, together with ordinary polynomial sums.
	\end{lemma}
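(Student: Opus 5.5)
The plan is to write out the two affine normalization steps explicitly: reduction of the slope $a$ modulo $m$, and reduction of the offset $b$ modulo $m$. Put $a=\alpha m+a'$ and $b=\beta m+b'$ with $0\le a',b'<m$, where $\alpha=\Floor{a/m}$ and $\beta=\Floor{b/m}$. For every integer $t\ge 0$ we have the exact identity
\[
\Floor{\frac{at+b}{m}}=\alpha t+\beta+\Floor{\frac{a't+b'}{m}},
\]
since $\alpha t+\beta$ is an integer. Writing $f(t):=\Floor{(a't+b')/m}$, the summand of $\mathcal H_{p,q}(N;m,a,b)$ becomes $t^p(\alpha t+\beta+f(t))^q$.

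The next step is to expand this by the binomial theorem:
\[
t^p(\alpha t+\beta+f(t))^q=\sum_{r=0}^{q}\binom{q}{r}t^p(\alpha t+\beta)^{q-r}f(t)^r .
\]
Then expand $(\alpha t+\beta)^{q-r}$ as a polynomial in $t$ of degree $q-r$. Summing over $0\le t\le N-1$, the terms with $r=0$ are sums of the form $\sum_{t<N}t^k$ with $k\le p+q\le 3$. These are the ordinary polynomial sums allowed in the statement, evaluated in closed form by Faulhaber's formulas. Each term with $r\ge 1$ is a constant times $\sum_{t<N}t^{p+s}f(t)^r=\mathcal H_{p+s,r}(N;m,a',b')$, where $0\le s\le q-r$.

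The key point, and the only real thing to check, is that every index pair produced here lies in the six-state family. We have $r\ge1$, and the new weighted degree is $(p+s)+r\le p+(q-r)+r=p+q\le 3$. The set $\{(p',q'):q'\ge1,\ p'+q'\le3\}$ is exactly $\{(0,1),(1,1),(2,1),(0,2),(1,2),(0,3)\}$. So no state outside the family appears, and the coefficients are explicit polynomials in $\alpha,\beta$ with binomial factors.

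I expect the main obstacle to be bookkeeping rather than any conceptual difficulty. One must confirm that the family is downward closed under $p'\le p+(q-r)$ together with $q'=r\le q$. One must also make sure that a normalization performed only on $a$, or only on $b$, is covered; it is, as the special cases $\beta=0$ or $\alpha=0$. Finally, it should be noted that $a',b'\ge0$ and $m$ is unchanged, so the new states satisfy the standing hypotheses of \eqref{eq:six-kernel-family}.
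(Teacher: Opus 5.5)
Your proof is correct and follows the paper's argument. You split off the Euclidean quotients so that the floor becomes $\alpha t+\beta+\Floor{(a't+b')/m}$, expand binomially, and observe two things: every term carrying a positive power of the reduced floor has weighted degree $(p+s)+r\le p+q\le 3$, and the $q'=0$ terms are ordinary power sums. Your explicit degree check is the point the paper's short proof leaves implicit, and your expansion reproduces the coefficient identities listed in the appendix.
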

	
	\begin{proof}
		Write the floor argument as $q_1t+q_0+g(t)$ after removing the easy Euclidean quotients. Expanding $(q_1t+q_0+g(t))^q$ shows that every nontrivial term still has the form $t^{p'}g(t)^{q'}$ with $(p',q')\in\{(0,1),(1,1),(2,1),(0,2),(1,2),(0,3)\}$, while the terms with $q'=0$ are ordinary polynomial sums. The explicit identities are listed in \cref{app:six-moments}.
	\end{proof}
	
	\begin{lemma}[reciprocal closure of the six-moment kernel]\label{lem:six-reciprocal}
		Under the reciprocal Euclidean step, each state of the six-moment family is expressible as a linear combination of states of the same family and polynomial sums.
	\end{lemma}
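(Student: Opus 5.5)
The plan is to make the reciprocal step explicit and verify the claim state by state, as in the classical floor-sum recursion. After the affine normalization of \cref{lem:six-affine} we may assume $0\le a<m$ and $0\le b<m$, so that $f(t):=\Floor{(at+b)/m}$ takes values in $[0,M]$ with $M:=f(N-1)$. If $a=0$ or $M=0$, every state reduces to a polynomial sum and there is nothing to prove. Otherwise I would use the level-set (conjugation) identity
\[
f(t)\ge j \iff t\ge \tau(j),\qquad \tau(j):=\Floor{\frac{mj-b+a-1}{a}},\quad 1\le j\le M,
\]
which is the reciprocal Euclidean step $(m,a,b)\mapsto(a,m,\cdot)$. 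The identity holds because $at+b\ge mj \iff t\ge \lceil (mj-b)/a\rceil$. Shifting $j=s+1$ with $0\le s\le M-1$ gives $\tau(s+1)=\Floor{(ms+m-b+a-1)/a}$. This is a floor expression with modulus $a$, slope $m$, and nonnegative offset, because $b<m$.

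Next I would write each state as a double sum over the region under the graph. The tool is Abel summation in the $q$-variable: for a polynomial weight $w(t)$ and a power $g(k)=k^q$,
\[
\sum_{t=0}^{N-1} w(t)\,g(f(t))=\sum_{j=1}^{M}\bigl(g(j)-g(j-1)\bigr)\sum_{t=\tau(j)}^{N-1} w(t).
\]
The inner sum is $W(N)-W(\tau(j))$, where $W$ is the discrete antiderivative of $w$. It has degree $p+1$ in $\tau(j)$, and its coefficients are constants. The difference $g(j)-g(j-1)$ is a polynomial in $j$ of degree $q-1$. So every state becomes a polynomial sum in $j$ plus a linear combination of terms $\sum_{j} j^{r}\tau(j)^{k}$ with $k\ge1$ and $r+k\le (q-1)+(p+1)=p+q\le 3$.

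The final step is to check the index set. After reindexing by $s=j-1$ and expanding $j^r=(s+1)^r$ binomially, each such term is a combination of $\mathcal H_{r',k}(M;a,m,m-b+a-1)$ with $k\ge1$ and $r'+k\le3$. These are exactly the six pairs $(0,1),(1,1),(2,1),(0,2),(1,2),(0,3)$. The terms with $k=0$ are polynomial sums. This is precisely the family in \eqref{eq:six-kernel-family}.

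I expect the main obstacle to be the bookkeeping at the boundary, not the degree count. One issue is that the offset $m-b+a-1$ may be $\ge a$, which would violate the normalization $b<m$ for the new state. That is harmless, since the next affine step of \cref{lem:six-affine} absorbs it, but I will state it explicitly. The other issue is the endpoint terms at $t=N-1$, $j=M$, including the case $\tau(j)$ exceeding $N-1$. I would handle these by confirming that $\tau(j)\le N-1$ for all $j\le M$, which follows from the definition of $M$. Then the closure statement follows, and the explicit coefficients can be deferred to \cref{app:six-moments}.
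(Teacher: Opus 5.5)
Your proposal is correct and follows essentially the paper's route: you regroup the staircase by horizontal levels through the conjugation $f(t)\ge j\iff t\ge\tau(j)$, and your $\tau(s+1)=\Floor{(ms+m-b+a-1)/a}$ is just the paper's $g(s)+1$ with $g(s)=\Floor{(ms+m-b-1)/a}$. Your Abel-summation degree count $r+k\le (q-1)+(p+1)=p+q\le 3$ makes explicit the closure that the paper only asserts; note that your offset $m-b+a-1$ is in fact always $\ge a$, which is harmless because the family allows any offset $b\ge 0$.
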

	
	\begin{proof}
		Write
		\[
		\mathcal H_{p,q}(N;m,a,b)=\sum_{x=0}^{N-1} x^p y(x)^q,
		\qquad y(x):=\Floor{\frac{ax+b}{m}},
		\]
		and assume $0\le a,b<m$. Put
		\[
		Y:=\Floor{\frac{a(N-1)+b}{m}}.
		\]
		Instead of summing by columns indexed by $x$, we regroup the same staircase region by horizontal levels $t=0,1,\dots,Y-1$. The condition $y(x)\ge t+1$ is equivalent to
		\[
		x\ge \Floor{\frac{mt+(m-b-1)}{a}}+1.
		\]
		Hence the transposed staircase is encoded by the reciprocal floor function
		\[
		g(t):=\Floor{\frac{mt+(m-b-1)}{a}},
		\qquad 0\le t<Y,
		\]
		which is exactly the same construction with the roles of $a$ and $m$ interchanged. When one rewrites the sums over the horizontal strips, the lower bounds contribute only polynomial weights in $t$, so every term becomes a linear combination of moments
		\[
		\sum_{t=0}^{Y-1} t^{p'} g(t)^{q'}
		=
		\mathcal H_{p',q'}(Y;a,m,m-b-1),
		\]
		with $(p',q')\in\{(0,1),(1,1),(2,1),(0,2),(1,2),(0,3)\}$, together with ordinary power sums. Thus the same six-state family is preserved under the reciprocal step, and the larger Euclidean parameter decreases from $m$ to $a<m$. The explicit coefficient identities are listed in \cref{app:six-moments}.
	\end{proof}
	
	\begin{corollary}[evaluation of the six-moment kernel]\label{cor:floor-kernel}
		For fixed integers $(n,m,a,b)$, the six moments in \eqref{eq:sixmoments} can be computed in time
		$
		O(\log n)
		$
		by an Euclidean recursion. The recursive system is closed: no moments outside the family \eqref{eq:sixmoments} are generated.
	\end{corollary}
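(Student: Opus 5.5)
The plan is to run the standard Euclidean floor-sum recursion on the whole six-state vector
\[
\bigl(\mathcal H_{p,q}(N;m,a,b)\bigr)_{(p,q)\in\{(0,1),(1,1),(2,1),(0,2),(1,2),(0,3)\}}
\]
simultaneously. Closure under each step is supplied by \cref{lem:six-affine,lem:six-reciprocal}. First, \cref{lem:six-sign-reversal} converts each of the six interval sums in \eqref{eq:sixmoments} (with $M(x)=\Floor{(n-xu)/v}$, i.e.\ $m=v$, slope $u$, $B=n$, $[L,R]=I_2$) into a fixed linear combination of kernel values $\mathcal H_{j,q}(N;v,u,n-uR)$ with $j+q\le 3$. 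These are exactly the six states, so one evaluation of the full vector suffices, at $O(1)$ extra cost. Note that $n-uR\ge 0$ because $R=\Floor{n/u}$, so the offset is nonnegative as the kernel requires.

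Next I define the recursion on a state $(N;m,a,b)$ returning all six values at once.

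\textbf{Affine step.} If $a\ge m$ or $b\ge m$, write $a=q_1m+a'$ and $b=q_0m+b'$, so $\Floor{(at+b)/m}=q_1t+q_0+\Floor{(a't+b')/m}$. By \cref{lem:six-affine}, the six outputs are fixed polynomial combinations (in $q_0,q_1$) of the six outputs at $(N;m,a',b')$ plus power sums $\sum_{t<N}t^k$ with $k\le 3$, which are closed form. This costs $O(1)$.

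\textbf{Reciprocal step.} If $0\le a,b<m$, set $Y=\Floor{(a(N-1)+b)/m}$. If $Y=0$ or $a=0$, every floor term vanishes (or is constant), and we return directly. Otherwise, by \cref{lem:six-reciprocal}, the outputs are combinations of the six outputs at $(Y;a,m,m-b-1)$ plus power sums in $N$ and $Y$, again in $O(1)$.

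\textbf{Termination and cost.} The pair $(m,a)$ passes from $(m,a)$ to $(a,m)$ and then, after the next affine step, to $(a,m\bmod a)$. This is exactly the Euclidean algorithm on $(m,a)$, so the depth is $O(\log\min(m,a))+O(1)$. Here $m=v<n$ and $a=u\le n$, so the depth is $O(\log n)$. Each level does $O(1)$ arithmetic on a constant-size vector, giving total time $O(\log n)$. Since every step maps the six-state family into itself plus polynomial sums, no moment outside \eqref{eq:sixmoments} is ever generated, which is the stated closedness.

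The main obstacle I expect is verifying \cref{lem:six-reciprocal} concretely, namely that the transposition genuinely stays within $p+q\le 3$, $q\ge 1$.

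For $q=1$ I use $y(x)^{q}=\sum_{t<y(x)}\bigl((t+1)^q-t^q\bigr)$. Swapping sums gives
\[
\sum_{t<Y}\bigl((t+1)^q-t^q\bigr)\sum_{x=g(t)+1}^{N-1}x^p .
\]
The inner sum is a polynomial of degree $p+1$ in $g(t)$ with polynomial-in-$N$ coefficients. The outer weight has degree $q-1$ in $t$. Hence the resulting terms are $t^{q-1}g(t)^{k}$ with $k\le p+1$, so the total degree is at most $p+q$.

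The remaining concern is the pure $q'=0$ terms and the $k=0$ terms. These are power sums in $t$, which is acceptable. The terms with $k\ge 1$ land in the family provided $(q-1)+k\le 3$ and the pair is admissible. I will check the six cases explicitly. For example, $(0,3)\mapsto t^2g$, $tg$, $g$; and $(2,1)\mapsto g,g^2,g^3$.

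Boundary details also need care: the case $a=0$, the case $N=0$, and the off-by-one coming from $x\le N-1$ versus $g(t)\le N-1$. I would handle these by noting that $g(t)<N-1$ for $t<Y$, which holds by the definition of $Y$.
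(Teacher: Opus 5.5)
Your proposal is correct and follows the same route as the paper. Both arguments use sign reversal into the kernel, then alternating affine and reciprocal steps whose closure is supplied by \cref{lem:six-affine,lem:six-reciprocal}, and finally a Euclidean-algorithm bound on the pair $(m,a)$ giving $O(\log n)$ levels of $O(1)$ work each. Your telescoping identity $y^q=\sum_{t<y}\bigl((t+1)^q-t^q\bigr)$ with the degree count $(q-1)+(p+1)\le 3$ is a useful explicit justification of the reciprocal closure, which the paper only sketches and then records via the formulas in \cref{app:six-moments}.
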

	
	\begin{proof}
		By \cref{lem:six-affine,lem:six-reciprocal}, the recursion alternates between two Euclidean-style operations: affine normalization replaces $(a,b)$ by their residues modulo $m$, and the reciprocal step then swaps the active pair $(m,a)$ to $(a,m)$ with $a<m$. Thus after each nontrivial cycle the larger Euclidean parameter strictly decreases, exactly as in the ordinary Euclidean algorithm. The recursion therefore has depth $O(\log m)=O(\log n)$, and each step performs only $O(1)$ arithmetic operations on the constant-size family of moments. Hence the six-moment kernel is evaluable in $O(\log n)$ time.
	\end{proof}
	
	On the first segment, all required quantities reduce to polynomial sums, hence are computable in $O(1)$ time. On the second segment, \cref{lem:six-sign-reversal,cor:floor-kernel} gives an $O(\log n)$ evaluation.
	
	\begin{corollary}\label{cor:one-direction}
		For a fixed primitive direction $(u,v)$ with $u\ge v\ge 1$, the contribution $c_{u,v}(n)$ can be computed in time
		$
		O(\log n).
		$
	\end{corollary}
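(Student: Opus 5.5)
The plan is to split $c_{u,v}(n)$ along the two segments $I_1$ and $I_2$ introduced above and show that each contributes in $O(\log n)$ time. The split point $\Floor{n/(u+v)}$ and the endpoint $\Floor{n/u}$ are computed in $O(1)$.

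\emph{First segment $I_1$.} Here $y$ ranges over $1\le y\le x$ and no floor function appears. The diagonal $y=x$ has weight $\mult=2$ and summand $(n-x(u+v))^2$, which is a quadratic polynomial in $x$. The strict range $1\le y<x$ has weight $4$. Summing the quadratic $P_0(x)+P_1(x)y+P_2y^2$ over $y\le x-1$ with Faulhaber's formulas gives a polynomial of degree at most $4$ in $x$. Summing that polynomial over $x\in I_1$ again uses closed-form power sums, up to degree $5$. So $I_1$ costs $O(1)$ arithmetic operations.

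\emph{Second segment $I_2$.} The first step is to check that $M(x)<x$ throughout, since this is what makes $\mult=4$ hold uniformly. For $x>n/(u+v)$ we have $n-xu<xv$, hence $M(x)<x$. I also need $M(x)\ge 0$, which holds because $x\le n/u$.

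The inner sum $\sum_{y=1}^{M(x)}\bigl(P_0(x)+P_1(x)y+P_2y^2\bigr)$ equals
\[
P_0(x)M+P_1(x)\tfrac{M(M+1)}{2}+P_2\tfrac{M(M+1)(2M+1)}{6}, \qquad M=M(x).
\]
Here $P_0$ is quadratic in $x$, $P_1$ is linear in $x$, and $P_2=uv$. Expanding shows that only monomials $x^pM^q$ with $q\ge1$ and $p+q\le3$ occur. These are exactly the six moments in \eqref{eq:sixmoments}.

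Each moment has the form $\sum_{x=L}^R x^p\Floor{(n-ux)/v}^q$ with $L=\Floor{n/(u+v)}+1$ and $R=\Floor{n/u}$. By \cref{lem:six-sign-reversal}, taking $B=n$, $m=v$ and slope $u$, each such moment becomes a binomial combination of at most three kernel values $\mathcal H_{j,q}(N;v,u,n-uR)$ with $j+q\le3$. All of these stay inside the six-state family \eqref{eq:six-kernel-family}. The offset $n-uR$ lies in $[0,u)$, so it is nonnegative, as the kernel's definition requires.

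\emph{Evaluating the kernel.} By \cref{cor:floor-kernel}, a single Euclidean recursion returns all six kernel values simultaneously in $O(\log n)$ time, because the family is closed under \cref{lem:six-affine,lem:six-reciprocal}. The final combination uses $O(1)$ further operations, so the total cost is $O(\log n)$.

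The main obstacle is a bookkeeping issue rather than a conceptual one. I must verify that after sign reversal every state has $q\ge1$ and $p+q\le3$; the $q=0$ terms are just power sums. I must also handle the edge cases $I_2=\emptyset$ (take $N=0$) and $v=u$. The latter arises only for $(1,1)$, which is already included in $F_0$, but if it is needed the same argument works because $M(x)\le x$ still holds.
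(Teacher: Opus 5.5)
Your proposal is correct and follows essentially the same route as the paper: $I_1$ reduces to closed-form polynomial sums in $O(1)$ operations, and on $I_2$ the six floor moments are turned into six-state kernel values by \cref{lem:six-sign-reversal} and then evaluated by the Euclidean recursion of \cref{cor:floor-kernel} in $O(\log n)$ time. You also supply details the paper leaves implicit: that $M(x)<x$ on $I_2$, that the offset $n-u\Floor{n/u}$ is nonnegative, and that exactly the six pairs $(p,q)$ arise; your degree bounds on $I_1$ are slight overestimates (the true degrees are $3$, then $4$ after summing over $x$), which is harmless.
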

	
	Since the number of primitive pairs with $u\le B$ is $O(B^2)$, the total cost of the small part is
	$
	O(B^2\log n)=O(n^{4/3}\log n).
	$
	
	\subsection{Large directions}
	
	We now consider the complementary region $u>B=n^{2/3}$. Geometrically, this is completely analogous to the large-direction analysis in \cref{sec:sqrt}. The difference is only that the cutoff is now $B=n^{2/3}$ and that, instead of the square-root sweep, we evaluate the resulting one-dimensional sums by coprime prefix sums.
	\[
	\sum_{u>B} O\!\left(\tau(u)\Bigl(\frac{n}{u}\Bigr)^2\right)
	=O\!\left(n^2\sum_{u>B}\frac{\tau(u)}{u^2}\right)
	=O\!\left(\frac{n^2\log n}{B}\right) = O(n^{4/3}\log n).
	\]

	Thus the large-direction part matches the cost of the small-direction part, but it uses a different kernel: coprime prefix sums of orders $0,1,2$, not floor moments.
	
	\begin{theorem}[cubic-root decomposition]
		The lattice-rectangle counting problem admits an algorithm of complexity
		$
		O(n^{4/3}\log n).
		$
	\end{theorem}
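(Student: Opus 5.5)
The plan is to assemble the two halves of the split at $B=\Floor{n^{2/3}}$ and check that both cost $O(n^{4/3}\log n)$. We write $F(n)=F_0(n)+F_1(n)$, where $F_0(n)$ is given in closed form in \cref{sec:standard} and costs $O(1)$. We then split
\[
F_1(n)=\sum_{\substack{u\le B\\ 0<v<u,\ \gcd(u,v)=1}} c_{u,v}(n)\;+\;F_{1,\mathrm{large}}(n),
\]
where the two ranges of primitive directions are disjoint and exhaust the index set in \eqref{eq:xyuv-standard}. This identity is immediate from the definition of $c_{u,v}(n)$, so correctness reduces to the correctness of the two evaluation routines.

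\textbf{Small part.} We enumerate the pairs $u\le B$, $v<u$, and test $\gcd(u,v)=1$. Either a gcd computation ($O(\log n)$ each) or a sieve of smallest prime factors up to $B$ suffices; both stay within budget. For each primitive pair we invoke \cref{cor:one-direction}. This relies on:
\begin{itemize}
\item the segment split $I_1,I_2$;
\item the closed-form polynomial sums on $I_1$;
\item on $I_2$, the reduction via \cref{lem:six-sign-reversal} to kernel values $\mathcal H_{p,q}(N;v,u,B')$, evaluated by \cref{cor:floor-kernel}.
\end{itemize}
Before calling the kernel we must verify its hypotheses. The modulus is $m=v\ge1$ and the slope is $u\ge0$. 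The offset $B'=(n-uR)$ with $R=\Floor{n/u}$ is nonnegative, so the argument $at+b$ is nonnegative on the range. All parameters are $O(n)$, so the Euclidean depth is $O(\log n)$. The total cost is therefore $O(B^2\log n)=O(n^{4/3}\log n)$.

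\textbf{Large part.} For $u>B$ we use the triple loop of \cref{sec:sqrt}: $u\in(B,n]$, $x\le n/u$, $y\le x$. For each triple we compute $v_{\max}$ and the three coprime prefix sums $C_j(u;v_{\max})$ by Möbius inversion over the squarefree divisors of $u$, using closed-form power sums.

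Preprocessing consists of a linear sieve up to $n$ that provides $\mu$ and the factorization of every $u\le n$, so the squarefree-divisor list of each $u$ can be generated in $O(\tau(u))$ time when $u$ is visited. This takes $O(n)$ time in total, plus $O(\tau(u))$ transient memory.

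The cost is $\sum_{u>B}\tau(u)(n/u)^2$. By partial summation with $\sum_{u\le X}\tau(u)=O(X\log X)$, we get $\sum_{u>B}\tau(u)/u^2=O(\log B/B)$. This gives $O(n^2\log n/B)=O(n^{4/3}\log n)$. One point needs care: when $n-xu<y$, the value $v_{\max}$ may be $0$ or negative, and those triples must contribute zero.

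\textbf{Main obstacle.} The substantive difficulty sits in \cref{cor:floor-kernel}. One must confirm that the Euclidean recursion genuinely closes on the six states, including the degenerate base cases $a=0$ and $Y=0$, and that the reciprocal step's boundary terms (the lower bounds $g(t)+1$ and the cap at $N-1$) produce only polynomial sums. If any gap appeared there, I would first verify closure by a degree count: the total weighted degree $p+q\le3$ with $q\ge1$ is preserved under both the substitution $y\mapsto q_1t+q_0+g$ and the transposition $x^p\mapsto$ power sums in $g$. Given that corollary, the theorem follows by adding the two bounds and the $O(n)$ sieve cost.
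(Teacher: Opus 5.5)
Your proposal is correct and follows the paper's argument: you split at $B=\lfloor n^{2/3}\rfloor$ and treat the $O(B^2)$ primitive directions with $u\le B$ in $O(\log n)$ each, via the segment split, sign reversal and the six-moment kernel. You then handle $u>B$ by the dual $(x,y)$ sweep with coprime prefix sums $C_0,C_1,C_2$, at total cost $O(n^2\log n/B)$; your extra checks (nonnegative offset $n-uR$, the $v_{\max}=0$ case, gcd/sieve costs) are consistent with, and slightly more explicit than, the paper's.
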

	
	Fix $B=\Floor{n^{2/3}}$. For $u\le B$, each primitive direction is reduced to $O(1)$ queries to the six-moment kernel; for $u>B$, the dual parametrization is handled by the coprime prefix sums $C_0,C_1,C_2$. The main text proves the reduction and the resulting complexity bound, while \cref{app:pseudocode} and \cref{app:six-moments} record the corresponding implementation details and explicit kernel transitions.
	
	\section{\texorpdfstring{A ten-moment reduction to weighted floor sums: $O(n\log^3 n)$}{A ten-moment reduction to weighted floor sums}}\label{sec:final}
	
	We now turn to the ten-moment weighted-floor-sum reduction. This section explains how the quantities $R_{u,y,d}$ arise and how they reduce to weighted floor sums. We first isolate the required weighted floor-sum queries and derive the global complexity conditionally on an $O(\log n)$ query bound; this temporary assumption is discharged below by proving that the relevant kernel is recursively closed and evaluable in $O(\log n)$ time.
	
	\subsection{From primitive directions to M\"obius inversion}
	
	We start from the contribution of the non-axis-parallel directions, restricting to
	$u>v\ge 1$. The coprimality condition is removed by the Möbius identity.
	Writing $v=dt$ with $d\mid u$, $t\ge 1$, and $dt<u$, we obtain
	\[
	F_1(n)
	=
	\sum_{u=2}^{n}
	\sum_{\substack{d\mid u\\ \mu(d)\neq 0}}
	\mu(d)
	\sum_{1\le t<u/d}
	\sum_{\substack{x\ge y\ge 1\\ xu+y\,dt\le n}}
	\mult(x,y)\,(n-xu-y\,dt)(n-x\,dt-yu).
	\]
	Now fix $u$, $y$, and $d$. The remaining inner contribution depends only on these
	three outer parameters, so we define
	\begin{equation}\label{eq:Ruyd}
		R_{u,y,d}
		:=
		\sum_{1\le t<u/d}
		\sum_{\substack{x\ge y\ge 1\\ xu+y\,dt\le n}}
		\mult(x,y)\,(n-xu-y\,dt)(n-x\,dt-yu).
	\end{equation}
	Since necessarily $1\le y\le \lfloor n/u\rfloor$, the whole sum becomes
	\begin{equation}\label{eq:F1-outer}
		F_1(n)
		=
		\sum_{u=2}^{n}
		\sum_{1\le y\le \lfloor n/u\rfloor}
		\sum_{d\mid u}\mu(d)\,R_{u,y,d}.
	\end{equation}
	
	\subsection{How many outer triples?}
	
	Let $T(n)$ be the number of admissible outer triples $(u,y,d)$, where
	$1\le y\le \lfloor n/u\rfloor$ and $d\mid u$ is squarefree. For fixed $u$, the
	number of squarefree divisors of $u$ equals $2^{\omega(u)}$, where $\omega(u)$ is
	the number of distinct prime divisors of $u$. Thus
	\[
	T(n)=\sum_{u\le n}2^{\omega(u)}\Bigl\lfloor\frac{n}{u}\Bigr\rfloor.
	\]
	
	For the algorithmic analysis we only need the upper bound $T(n)=O(n\log^2 n)$. Indeed, this upper bound follows from
	$\lfloor n/u\rfloor\le n/u$, which gives
	$T(n)\le n\sum_{u\le n}2^{\omega(u)}/u$. Next, using
	$2^{\omega(m)}=\sum_{d\mid m}\mu^2(d)$, where $\mu^2(d)$ is the indicator of
	squarefree integers, we obtain
	\[
	\sum_{u\le n}\frac{2^{\omega(u)}}{u}
	=
	\sum_{d\le n}\frac{\mu^2(d)}{d}\sum_{m\le n/d}\frac{1}{m}.
	\]
	Now $\sum_{m\le X}1/m=O(\log X)$ and $\sum_{d\le n}\mu^2(d)/d=O(\log n)$, so
	$\sum_{u\le n}2^{\omega(u)}/u=O(\log^2 n)$ and therefore
	$
	T(n)=O(n\log^2 n).
	$
	Consequently, if each set $R_{u,y,d}$ can be processed in time $O(\log n)$ then the total running time is $O(n\log^3 n)$. The remainder of this section proves exactly this $O(\log n)$ evaluation bound by showing that all required weighted floor sums belong to a finite recursively closed Euclidean kernel.

	\subsection{Weighted floor-sum reduction}\label{sec:wfloors}
	
	The purpose of this subsection is to isolate the one-dimensional arithmetic kernel behind the ten-moment algorithm. The explicit affine and reciprocal transition identities for this kernel are somewhat lengthy and are therefore deferred to \cref{app:ten-moments}; here we state the kernel, explain why it is the right one, and derive the global $O(n\log^3 n)$ bound from the structural properties proved there.
	
	Fix $u,y,d$. By definition,
	\[
	R_{u,y,d}
	=
	\sum_{t<u/d}
	\sum_{\substack{x\ge y\ge 1\\ xu+y\,dt\le n}}
	\mult(x,y)\,(n-xu-y\,dt)(n-x\,dt-yu).
	\]
	For each fixed $x\ge y$, the variable $t$ runs over
	\[
	1\le t\le T(x),
	\qquad
	T(x):=\min\!\left(\Bigl\lfloor \frac{u-1}{d}\Bigr\rfloor,\,
	\Bigl\lfloor \frac{n-ux}{yd}\Bigr\rfloor\right).
	\]
	Write
	\[
	T_0:=\Bigl\lfloor \frac{u-1}{d}\Bigr\rfloor,
	\qquad
	G(x):=\Bigl\lfloor \frac{n-ux}{yd}\Bigr\rfloor.
	\]
	Then the summation over $x$ naturally splits into two zones:
	\begingroup\setlength{\itemsep}{2pt}
	\begin{itemize}
		\item the \emph{capped zone}, where $G(x)\ge T_0$ and therefore $T(x)=T_0$;
		\item the \emph{active-floor zone}, where $1 \le G(x)<T_0$ and therefore $T(x)=G(x)$.
	\end{itemize}\endgroup
	The split point is determined by the inequality $n-ux\ge ydT_0$, that is,
	\[
	x\le X_0:=\Bigl\lfloor \frac{n-ydT_0}{u}\Bigr\rfloor.
	\]
	Hence the capped zone contributes only polynomial sums in $x$, while the active-floor zone is exactly where the weighted floor-sum kernel appears. More precisely,
	\[
	R_{u,y,d}
	=
	\sum_{y\le x\le X_0}\mult(x,y)\,S_x^{\mathrm{cap}}
	+
	\sum_{x> X_0}\mult(x,y)\,S_x^{\mathrm{act}}, \quad \text{where}
	\]
	\[
	S_x^{\mathrm{cap}}:=\sum_{1\le t \le T_0}(n-xu-y\,dt)(n-xdt-yu),
	\qquad
	S_x^{\mathrm{act}}:=\sum_{1\le t \le G(x)}(n-xu-y\,dt)(n-xdt-yu).
	\]
	In the capped zone the inner sum is a polynomial in $x$ because the upper limit $T_0$ is constant, and in the active-floor zone the same expansion reduces everything to moments of $G(x)=\Floor{(n-ux)/(yd)}$. Expanding the summand as a quadratic polynomial in $t$, we obtain
	\[
	(n-xu-y\,dt)(n-xdt-yu)=P_0(x)+P_1(x)t+P_2(x)t^2, \quad \text{where}
	\]
	\[
	P_0(x)=(n-xu)(n-yu),\qquad
	P_1(x)=-d\bigl(y(n-yu)+x(n-xu)\bigr),\qquad
	P_2(x)=xyd^2.
	\]
	Using the identities
	\[
	\sum_{t=1}^{T}1=T,\qquad
	\sum_{t=1}^{T}t=\frac{T(T+1)}{2},\qquad
	\sum_{t=1}^{T}t^2=\frac{T(T+1)(2T+1)}{6},
	\]
	we see that each inner sum is a polynomial in $x$ and in its upper limit $T_0$ or $G(x)$. Thus the direct expansion expresses
	$R_{u,y,d}$ through the following seven basic weighted floor sums:
	\[
	\begin{aligned}
		&\sum T(x),\quad \sum xT(x),\quad \sum x^2T(x),
		&\sum T(x)^2,\quad \sum xT(x)^2,\quad \sum x^2T(x)^2,\quad \sum xT(x)^3.
	\end{aligned}
	\]
	Now $\mult(x,y)=4-2\,\mathbf 1_{x=y}$, so the diagonal contribution $x=y$ contributes only a constant-size correction.
	
	\begin{lemma}[direct weighted floor-sum reduction]\label{lem:weighted-floor-reduction}
		For fixed $(u,y,d)$, the quantity $R_{u,y,d}$ is a linear combination of the seven basic sums displayed above over the active-floor zone $x>X_0$, together with ordinary polynomial sums in $x$ coming from the capped zone $y\le x\le X_0$ and the diagonal correction $x=y$.
	\end{lemma}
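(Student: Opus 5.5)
The plan is to make the decomposition described before the statement explicit and track the polynomial degrees in each piece. First, I would fix $(u,y,d)$ and use the split of the $x$-range at $X_0$ into the capped zone $y\le x\le X_0$ and the active-floor zone $x>X_0$. Since $T(x)\ge 1$ is needed for any contribution, the active zone is also bounded above by $x\le \lfloor (n-yd)/u\rfloor$. This gives a finite interval with constant endpoints computable in $O(1)$. I would then write $\mult(x,y)=4-2\cdot\1_{x=y}$ and treat the term $x=y$ separately. In that term $T(y)$ is a single explicit integer, whichever zone $y$ lies in, so it is an $O(1)$ correction; it is one of the ``polynomial'' terms of the statement. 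Everywhere else the weight is the constant $4$.

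Next comes the capped zone. There $T(x)=T_0$ does not depend on $x$. I would use the three power-sum identities to get
\[
S_x^{\mathrm{cap}}=P_0(x)T_0+P_1(x)\tfrac{T_0(T_0+1)}{2}+P_2(x)\tfrac{T_0(T_0+1)(2T_0+1)}{6}.
\]
This is a polynomial in $x$ of degree at most $2$, with coefficients depending only on $(n,u,y,d)$. Summing it over $y\le x\le X_0$ gives ordinary power sums $\sum x^k$ with $k\le 2$.

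For the active zone I would substitute $T=G(x)$ into the same formula. I would then read off the monomials $x^iG(x)^j$ that appear with nonzero coefficient, using $\deg P_0\le 2$, $\deg P_1\le 2$ and $P_2=xyd^2$ (degree $1$ in $x$). From $P_0 G$ we get $G$, $xG$, $x^2G$. From $P_1\cdot(G^2+G)/2$ we get $G,xG,x^2G$ and $G^2,xG^2,x^2G^2$. From $P_2\cdot(2G^3+3G^2+G)/6$ we get $xG$, $xG^2$, $xG^3$. Collecting terms, every monomial lies in the displayed list $\{G,xG,x^2G,G^2,xG^2,x^2G^2,xG^3\}$, and no pure polynomial term is left over. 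Summing over the active interval therefore gives exactly a linear combination of the seven basic sums, with $T(x)=G(x)$ on that zone.

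The main care point is the bookkeeping at the zone boundaries. Three things need checking: that $X_0$ may be smaller than $y$ (an empty capped zone), that the active zone requires $G(x)\ge1$ (where $G(x)\le 0$ the $t$-sum is empty and must not be counted), and that the cap $t<u/d$ really means $t\le T_0=\lfloor (u-1)/d\rfloor$. Once the interval endpoints are clipped correctly, the result is purely the monomial-degree check above. A finer point is that $P_1$ contributes $x^2G^2$ but $P_2$ contributes only $xG^3$ (not $x^2G^3$), because $P_2$ is linear in $x$. This is exactly why the list has seven entries and not more.
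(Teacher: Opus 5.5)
Your proposal is correct and follows the paper's own argument. You split the $x$-range at $X_0$, expand the summand as a quadratic in $t$, apply the three power-sum identities with upper limit $T_0$ or $G(x)$, and handle the diagonal via $\mult(x,y)=4-2\cdot\1_{x=y}$. Your explicit monomial bookkeeping and endpoint clipping make precise what the paper leaves implicit: you exclude $G(x)\le 0$, where the closed-form power sums would not vanish, and you start the active zone at $\max(y,X_0+1)$. One minor point: $P_0=(n-xu)(n-yu)$ is actually linear in $x$, so $x^2G$ arises only from $P_1$, but your degree upper bound already suffices.
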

	
	The seven direct states are not closed under the Euclidean transitions. We therefore enlarge them to the ten-moment family from \cref{app:ten-moments}. For integers $N\ge 0$, $m\ge 1$, and $a,b\ge 0$, define
	\begin{equation}\label{eq:main-floor-family}
		\mathcal H_{p,q}(N;m,a,b)
		:=
		\sum_{x=0}^{N-1}
		x^p
		\Floor{\frac{ax+b}{m}}^{q},
		\qquad q \ge 1,
		\quad p \ge 0,
		\quad p + q \leqslant 4.
	\end{equation}
	Equivalently, the state space is indexed by
	\[
	(p,q)\in\{(0,1),(1,1),(2,1),(3,1),(0,2),(1,2),(2,2),(0,3),(1,3),(0,4)\}.
	\]
	As in the six-state kernel discussed earlier, the affine and reciprocal Euclidean steps do not increase the total weighted degree $p+q$; since the seven direct queries all satisfy $p+q\le 4$, their closure is therefore contained in the ten-state family above, while the five cases with $q=0$ are ordinary monomial sums treated separately. Since the floor term has negative slope in $x$, we first reverse the index. Thus for any interval $[L,R]$,
	\[
	\sum_{x=L}^{R} x^p\Floor{\frac{N-ux}{yd}}^{q}
	=
	\sum_{j=0}^{p}(-1)^j\binom{p}{j}R^{p-j}
	\mathcal H_{j,q}(R-L+1;yd,u,N-uR).
	\]
	
	\begin{lemma}[affine closure of the weighted kernel]\label{lem:weighted-affine}
		For the evaluation of \eqref{eq:main-floor-family}, each affine normalization step expresses a state $\mathcal H_{p,q}(N;m,a,b)$ as a linear combination of states $\mathcal H_{p',q'}(N;m,a',b')$ with $q'\ge 1$, $p'\ge 0$, $p'+q'\le 4$, together with polynomial sums.
	\end{lemma}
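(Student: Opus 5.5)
The plan mirrors the proof of \cref{lem:six-affine}, now with weighted degree up to $4$. Write $a=q_1m+a'$ and $b=q_0m+b'$ with $0\le a',b'<m$ (Euclidean division). Then for every integer $x\ge 0$,
\[
\Floor{\frac{ax+b}{m}}=q_1x+q_0+g(x),\qquad g(x):=\Floor{\frac{a'x+b'}{m}}.
\]
The identity holds because $q_1x+q_0$ is an integer and can be pulled out of the floor. Substituting into $\mathcal H_{p,q}(N;m,a,b)=\sum_{x<N}x^p(q_1x+q_0+g(x))^q$, I would expand by the multinomial theorem:
\[
x^p(q_1x+q_0+g)^q=\sum_{i+j+k=q}\binom{q}{i,j,k}q_1^{\,i}q_0^{\,j}\,x^{p+i}g(x)^{k}.
\]

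The key step is tracking degrees. Each term has the shape $x^{p'}g^{q'}$ with $p'=p+i$ and $q'=k$. Hence
\[
p'+q'=p+i+k\le p+q\le 4 .
\]
So the weighted degree never increases. Terms with $k=0$ are pure monomial sums $\sum_{x<N}x^{p+i}$ of degree at most $4$; these are the "polynomial sums" allowed in the statement and are given by Faulhaber formulas in $O(1)$. Terms with $k\ge 1$ are exactly the states $\mathcal H_{p',q'}(N;m,a',b')$ with $q'\ge1$, $p'\ge0$, $p'+q'\le4$. These are members of the ten-state family \eqref{eq:main-floor-family}, with the same $N$ and $m$ and reduced parameters $(a',b')$.

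I would also check the edge cases. If $a'=0$, then $g(x)=\Floor{b'/m}=0$, so everything collapses to polynomial sums. The constraints $a',b'\ge0$ required in the definition of the family hold by construction. If the initial $b$ could be negative (as after the sign reversal with $b=N-uR$), the same floor division with remainder in $[0,m)$ still gives $b'\ge 0$. In that case $q_0$ may be negative, but this only changes the coefficients.

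The only real obstacle is bookkeeping: making sure that no state with $p'+q'>4$, and no state outside the index list, is produced. The inequality $p+i+k\le p+q$ settles this uniformly. The explicit coefficients, multinomials times powers of $q_1,q_0$, can be left to \cref{app:ten-moments}.
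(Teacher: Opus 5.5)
Your proposal is correct and follows the paper's route. You split off the Euclidean quotients via $\lfloor (ax+b)/m\rfloor = q_1x+q_0+\lfloor (a'x+b')/m\rfloor$, expand the $q$-th power, and observe that each term $x^{p+i}g(x)^{k}$ has weighted degree $p+i+k\le p+q\le 4$, with the $k=0$ terms being the power sums $P_0,\dots,P_4$; the multinomial-times-$q_1^{\,i}q_0^{\,j}$ coefficients you describe are exactly the explicit affine-step formulas listed in the ten-moment appendix.
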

	
	\begin{lemma}[reciprocal closure of the weighted kernel]\label{lem:weighted-reciprocal}
		Under the reciprocal Euclidean step, each state of the extended family is expressible as a linear combination of states of the same family and polynomial sums.
	\end{lemma}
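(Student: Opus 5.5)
We prove the reciprocal closure in the same way as \cref{lem:six-reciprocal}, by transposing the staircase region, but now we track the total weighted degree $p+q$ explicitly. Assume $0\le a,b<m$ after the affine normalization of \cref{lem:weighted-affine}. Put $y(x)=\Floor{(ax+b)/m}$, $Y=y(N-1)$, and $g(t)=\Floor{(mt+m-b-1)/a}$ for $0\le t<Y$. As shown in \cref{lem:six-reciprocal}, $y(x)\ge t+1$ holds if and only if $x\ge g(t)+1$.

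\textbf{Step 1: rewrite the power of the floor.} For $q\ge 1$ we write
\[
y(x)^q=\sum_{t=0}^{y(x)-1}\bigl((t+1)^q-t^q\bigr).
\]
Then
\[
\mathcal H_{p,q}(N;m,a,b)=\sum_{t=0}^{Y-1}\bigl((t+1)^q-t^q\bigr)\sum_{x=g(t)+1}^{N-1}x^p .
\]

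\textbf{Step 2: evaluate the inner sum.} By Faulhaber's formula,
\[
\sum_{x=g(t)+1}^{N-1}x^p=S_p(N)-S_p\bigl(g(t)+1\bigr),
\]
where $S_p(K)=\sum_{x<K}x^p$ is a polynomial of degree $p+1$ in $K$ with no constant term. Hence $S_p(g(t)+1)$ is a polynomial in $g(t)$ of degree $p+1$. The factor $(t+1)^q-t^q$ is a polynomial in $t$ of degree $q-1$. Each term is therefore of the form $t^{i}g(t)^{j}$ with
\[
i\le q-1,\qquad j\le p+1,\qquad i+j\le p+q.
\]
The terms with $j=0$, together with the part involving $S_p(N)$, are ordinary power sums in $t$ over $0\le t<Y$. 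All remaining terms are
\[
\mathcal H_{i,j}(Y;a,m,m-b-1)\quad\text{with } j\ge1,\ i+j\le p+q\le 4.
\]
These lie in the ten-state family \eqref{eq:main-floor-family}.

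\textbf{Step 3: check the range of the new parameters.} We need $m-b-1\ge 0$, which holds since $b<m$. We also need $a\ge 1$. If $a=0$, the floor is constant and the state is itself a polynomial sum, so no recursion is needed. The reciprocal state has modulus $a<m$. This gives the Euclidean decrease used later, just as in \cref{cor:floor-kernel}.

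The main obstacle is bookkeeping rather than anything conceptual. The coefficients coming from the Bernoulli expansion of $S_p$ and from the binomial expansion of $(t+1)^q-t^q$ must be collected correctly. The inequality $i+j\le p+q$ must hold in every case, and in particular for the top state $(0,4)$: there $i\le 3$ and $j\le 1$, which gives the states $(0,1),(1,1),(2,1),(3,1)$. Note that $(3,1)$ appears only here, which is why the ten-state family is needed.

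This degree-counting argument proves closure. The explicit coefficient identities are those recorded in \cref{app:ten-moments}.
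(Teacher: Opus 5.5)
Your proof is correct and follows the paper's route: you regroup the staircase by horizontal levels through $g(t)=\Floor{(mt+m-b-1)/a}$, exactly as in the six-state reciprocal lemma, and your expansion $y(x)^q=\sum_{t<y(x)}\bigl((t+1)^q-t^q\bigr)$ combined with Faulhaber's formula reproduces precisely the reciprocal identities listed in the ten-moment appendix. Your explicit bound $i\le q-1$, $j\le p+1$, $i+j\le p+q$ makes rigorous the degree non-increase that the paper only asserts.
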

	
	\begin{corollary}[evaluation of the weighted kernel]\label{cor:weighted-kernel-log}
		The ten-moment kernel \eqref{eq:main-floor-family} is evaluable in $O(\log n)$ time.
	\end{corollary}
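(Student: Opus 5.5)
The plan mirrors the proof of \cref{cor:floor-kernel}, with the ten-state family \eqref{eq:main-floor-family} in place of the six-state one. The evaluation routine will compute the whole vector
\[
\bigl(\mathcal H_{p,q}(N;m,a,b)\bigr)_{q\ge 1,\ p+q\le 4},
\]
together with the power sums $\sum_{x<N}x^p$ for $p\le 4$, simultaneously and by recursion on $(m,a)$. The recursion alternates two steps. First comes an affine normalization, which replaces $(a,b)$ by $(a\bmod m,\,b\bmod m)$. Then comes a reciprocal step, which replaces $(N;m,a,b)$ by $(Y;a,m,m-b-1)$ with $Y=\Floor{(a(N-1)+b)/m}$. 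The base cases are $a=0$, where the floor is constant and everything reduces to power sums, and $Y=0$, where all states with $q\ge 1$ vanish.

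For the affine step I would invoke \cref{lem:weighted-affine}. Writing $a=q_1m+a'$ and $b=q_0m+b'$, we have $\Floor{(ax+b)/m}=q_1x+q_0+g(x)$. Expanding $x^p(q_1x+q_0+g)^q$ gives only monomials $x^{p'}g^{q'}$ with $p'+q'\le p+q\le 4$. The terms with $q'\ge1$ are states of the family with parameters $(N;m,a',b')$, and the terms with $q'=0$ are power sums of degree at most $4$. The coefficients are binomial coefficients times powers of $q_0,q_1$, so the step costs $O(1)$ operations, because the family has ten members.

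For the reciprocal step I would invoke \cref{lem:weighted-reciprocal}. Write $y(x)^q=\sum_{t<y(x)}\bigl((t+1)^q-t^q\bigr)$ and exchange the order of summation. The inner sum then becomes $\sum_{x=g(t)+1}^{N-1}x^p$, a polynomial of degree $p+1$ in $g(t)$ whose coefficients are polynomial in $t$. The total weighted degree of each resulting monomial $t^{i}g(t)^{j}$ is at most $(q-1)+(p+1)=p+q\le 4$, with $j\ge1$ whenever it is not a pure power sum. Hence the step lands back in the family with parameters $(Y;a,m,m-b-1)$.

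This degree bookkeeping is the step I expect to be the main obstacle. One must check that the cross terms, for example $t^{q-1}g^{p+1}$, do not leave the index set. They do not, since $(q-1)+(p+1)=p+q$, but verifying it carefully for all ten pairs is exactly why the seven direct queries had to be enlarged. I would rely on the explicit identities in \cref{app:ten-moments} for this verification.

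For termination and the complexity bound, note that after normalization $0\le a<m$, and the reciprocal step makes $a$ the new modulus. The pair $(m,a)$ therefore evolves exactly as in the Euclidean algorithm, so the depth is $O(\log m)$. In our application $m=yd\le n$, so the depth is $O(\log n)$. Each level performs $O(1)$ operations on a constant-size vector, which gives total time $O(\log n)$. Finally, each of the seven direct queries of \cref{lem:weighted-floor-reduction} is obtained from this vector through the index-reversal identity preceding \cref{lem:weighted-affine}, at $O(1)$ extra cost.
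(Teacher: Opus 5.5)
Your proposal is correct and follows essentially the same route as the paper. Both arguments alternate affine normalization with the reciprocal step $(N;m,a,b)\mapsto(Y;a,m,m-b-1)$, use the bound $p'+q'\le p+q\le 4$ to stay inside the ten-state family, and get Euclidean depth $O(\log m)=O(\log n)$ with $O(1)$ work per level; your layer-cake derivation reproduces the explicit identities of \cref{app:ten-moments}. One small imprecision: $\sum_{x=g(t)+1}^{N-1}x^p$ has constant coefficients as a polynomial in $g(t)$, and the $t$-dependence (hence the bound $i\le q-1$ on the power of $t$) comes from the factor $(t+1)^q-t^q$.
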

	
	The proofs follow the same Euclidean-recursion scheme as in \cref{lem:six-affine,lem:six-reciprocal,cor:floor-kernel}. The same affine and reciprocal Euclidean steps preserve the enlarged ten-state family; only the explicit coefficient formulas are longer, so they are deferred to \cref{app:ten-moments}.
	
	\begin{corollary}[evaluation of outer contributions]\label{cor:oracle}
		Each outer contribution $R_{u,y,d}$ is computable in time $O(\log n)$.
	\end{corollary}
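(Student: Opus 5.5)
The plan is to combine the decomposition of \cref{lem:weighted-floor-reduction} with the kernel bound of \cref{cor:weighted-kernel-log}. We then check that every glue step between the two uses only $O(1)$ arithmetic operations, apart from a constant number of kernel calls.

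First, for fixed $(u,y,d)$ compute the scalars $T_0=\lfloor (u-1)/d\rfloor$, $X_0=\lfloor (n-ydT_0)/u\rfloor$, and the last admissible index $X_1=\lfloor (n-uy\,d)/u\rfloor$. At $X_1$ we have $G(x)\ge 1$, and beyond it the $t$-range is empty. Each is a single division.

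The capped zone $y\le x\le \min(X_0,X_1)$ is handled next. Its summand $\mult(x,y)S_x^{\mathrm{cap}}$ equals $4S_x^{\mathrm{cap}}$ minus the diagonal correction. Here $S_x^{\mathrm{cap}}$ is an explicit polynomial in $x$ of degree at most $2$, whose coefficients are polynomials in $n,u,y,d,T_0$. Its sum over an interval is therefore a closed-form combination of power sums $\sum x^k$, $k\le 2$, which costs $O(1)$. The diagonal term $x=y$ is handled the same way: we evaluate $S_y^{\mathrm{cap}}$ or $S_y^{\mathrm{act}}$ once, according to which zone contains $y$, which is also $O(1)$.

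The active zone is $\max(y,X_0+1)\le x\le X_1$. On this range $T(x)=G(x)$. Substituting the power-sum identities into $S_x^{\mathrm{act}}$ gives a fixed polynomial in $x$ and $G(x)$. Its monomials are exactly the seven basic sums, plus monomials with $q=0$, which are again power sums.

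Each basic sum $\sum_{x=L}^{R} x^pG(x)^q$ is rewritten using the index-reversal identity stated just before \cref{lem:weighted-affine}, with numerator $n-ux$ and modulus $yd$. This turns it into at most $p+1\le 3$ kernel values $\mathcal H_{j,q}(R-L+1;\,yd,\,u,\,n-uR)$ with $j+q\le 4$. These lie in the ten-state family. The offset $n-uR$ is nonnegative because $G(R)\ge 1$, so the kernel's hypotheses $a,b\ge 0$, $m\ge1$ hold.

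A single run of the Euclidean recursion of \cref{cor:weighted-kernel-log} returns all ten states at once for one parameter tuple. The argument $(N;m,a,b)$ is the same for every basic sum. Hence one $O(\log n)$ call supplies every needed value, and combining them with constant binomial coefficients and powers of $R$ costs $O(1)$.

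The main point to verify is this last claim, that the recursion really computes the whole closed family simultaneously in $O(\log n)$ rather than only one state. It follows because the state vector has constant size ten, and each affine or reciprocal step is a fixed linear map on it (\cref{lem:weighted-affine,lem:weighted-reciprocal}) with polynomial-sum corrections computable in $O(1)$. The depth is $O(\log\max(m,a))=O(\log n)$, as in \cref{cor:floor-kernel}.

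Adding the capped, active, and diagonal pieces gives $R_{u,y,d}$ in $O(\log n)$ arithmetic operations. Degenerate cases, such as empty zones or $T_0=0$ when $d=u$, only cause pieces to vanish.
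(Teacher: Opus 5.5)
Your argument is correct and follows the paper's proof, which simply combines \cref{lem:weighted-floor-reduction} with \cref{cor:weighted-kernel-log}. You fill in more of the glue: the zone endpoints, the index reversal, and the observation that a single kernel call on the common tuple $(R-L+1;\,yd,\,u,\,n-uR)$ returns all ten states. One slip to fix: the last admissible index should be $X_1=\lfloor (n-yd)/u\rfloor$, the largest $x$ with $G(x)\ge 1$; as written, $\lfloor (n-uyd)/u\rfloor=\lfloor n/u\rfloor-yd$ would cut off part of the active zone.
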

	
	\begin{proof}
		By Lemma~\ref{lem:weighted-floor-reduction}, each $R_{u,y,d}$ reduces to a constant number of shifted instances of the ten-moment kernel \eqref{eq:main-floor-family} plus polynomial sums. By Corollary~\ref{cor:weighted-kernel-log}, each such kernel query costs $O(\log n)$ time, while the polynomial-zone contribution is evaluated in $O(1)$ time.
	\end{proof}
	
	Combining the $O(\log n)$ evaluation of each outer contribution $R_{u,y,d}$ with the $O(n\log^2 n)$ bound on the number of admissible outer triples yields the stated complexity bound.
	For an explicit implementation, we also precompute the M\"obius values and the squarefree divisor lists
	\[
	D(u)=\{(d,\mu(d)):d\mid u,\ d\text{ squarefree}\},\qquad u\le n.
	\]
	These lists can be generated from a smallest-prime-factor sieve; their total size is
	$\sum_{u\le n}2^{\omega(u)}=O(n\log n)$. Thus the preprocessing time and storage are $O(n\log n)$ and are dominated by the main $O(n\log^3 n)$ summation.
	
	\begin{theorem}[ten-moment complexity]
		After $O(n\log n)$ preprocessing and using $O(n\log n)$ storage for the squarefree divisor lists, the lattice-rectangle counting problem admits an exact $O(n\log^3 n)$ algorithm.
	\end{theorem}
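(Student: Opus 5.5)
The plan is to assemble the theorem from three ingredients already established: the outer decomposition \eqref{eq:F1-outer}, the bound $T(n)=O(n\log^2 n)$ on the number of admissible outer triples $(u,y,d)$, and \cref{cor:oracle}, which evaluates each $R_{u,y,d}$ in $O(\log n)$ time. First we dispose of $F_0(n)$, which is given by the closed formula in \cref{sec:standard} and costs $O(1)$. It then suffices to compute $F_1(n)$ by the triple loop over $u\in[2,n]$, $y\in[1,\lfloor n/u\rfloor]$ and $(d,\mu(d))\in D(u)$. The loop accumulates $\mu(d)R_{u,y,d}$, and its correctness is exactly the identity \eqref{eq:F1-outer}, obtained by M\"obius inversion of the coprimality condition in \eqref{eq:xyuv-standard}.

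\textbf{Preprocessing.} We run a linear (or Eratosthenes-type) sieve computing the smallest prime factor of every $u\le n$, in $O(n)$ or $O(n\log\log n)$ time. From the factorization of $u$ we then list its distinct primes and generate all $2^{\omega(u)}$ squarefree divisors together with their signs $\mu(d)=(-1)^{\omega(d)}$, which costs $O(2^{\omega(u)})$ per $u$. The total is $\sum_{u\le n}2^{\omega(u)}=\sum_{d\le n}\mu^2(d)\lfloor n/d\rfloor=O(n\log n)$, which gives both the preprocessing time and the storage bound. Alternatively, we can generate the divisors on the fly to avoid the storage, but the statement only asks for $O(n\log n)$.

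\textbf{Main loop.} The number of oracle calls is $T(n)\le n\sum_{u\le n}2^{\omega(u)}/u=O(n\log^2 n)$ by the computation already given. Each call invokes \cref{lem:weighted-floor-reduction}, which requires the following steps:
\begin{enumerate}
\item compute $T_0$, $X_0$ and the zone boundaries in $O(1)$;
\item evaluate the capped-zone and diagonal polynomial sums in closed form in $O(1)$;
\item reverse the index and issue a constant number of ten-moment kernel queries $\mathcal H_{p,q}(R-L+1;yd,u,n-uR)$.
\end{enumerate}
By \cref{cor:weighted-kernel-log}, each kernel query costs $O(\log(\max(u,yd)))=O(\log n)$. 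Multiplying gives $O(n\log^3 n)$, which dominates the preprocessing.

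\textbf{Main obstacle.} The hard step is \cref{cor:weighted-kernel-log}: showing that the ten-state family is genuinely closed under the affine and reciprocal steps. The issue is the reciprocal transposition of $\sum x^p y(x)^q$. The plan is to write $y(x)^q=\sum_{t<y(x)}\bigl((t+1)^q-t^q\bigr)$ and swap the order of summation. This produces sums $\sum_t\bigl((t+1)^q-t^q\bigr)\sum_{x>g(t)}x^p$, in which the inner sum is a polynomial in $g(t)$ of degree $p+1$ and the outer weight has degree $q-1$ in $t$. The total degree therefore stays at most $p+q$, with the $g$-exponent at least $1$ apart from pure polynomial terms, so the family stays within $p+q\le 4$. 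The affine step preserves total degree by the binomial expansion of \cref{lem:six-affine}. Since the parameter pair then follows the Euclidean algorithm on $(m,a)$, the recursion depth is $O(\log n)$, with $O(1)$ work per level on a $10$-dimensional state vector.

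A secondary check is that the intermediate values fit in the word model. Since the paper counts arithmetic operations, this is outside the scope of the statement.
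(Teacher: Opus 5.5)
Your proposal is correct and follows the paper's proof: count the $O(n\log^2 n)$ outer triples $(u,y,d)$, charge $O(\log n)$ to each $R_{u,y,d}$ via \cref{cor:oracle}, and bound the squarefree divisor lists by $\sum_{u\le n}2^{\omega(u)}=O(n\log n)$ in both time and storage. Your telescoping argument for reciprocal closure, $y(x)^q=\sum_{t<y(x)}\bigl((t+1)^q-t^q\bigr)$, is exactly the mechanism that produces the explicit transition formulas in \cref{app:ten-moments}, so it adds detail without changing the route.
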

	
	\begin{proof}
		There are $O(n\log^2 n)$ admissible triples $(u,y,d)$, and each corresponding contribution $R_{u,y,d}$ is computable in $O(\log n)$ time by Corollary~\ref{cor:oracle}. Summing over all outer triples proves the stated running time. The squarefree-list preprocessing has size and construction time $O(n\log n)$, so it does not change the asymptotic time bound; it accounts for the stated storage.
	\end{proof}
	
	Algorithmically, the ten-moment method iterates over the admissible outer triples $(u,y,d)$, evaluates each contribution $R_{u,y,d}$ by the recursive weighted floor-sum kernel, and accumulates it with weight $\mu(d)$. The next two sections give the faster divisor-layer one-value algorithm and the all-values algorithm; see \cref{app:pseudocode,app:ten-moments} for the corresponding pseudocode and full transition identities for this ten-moment stage.

	\section[A divisor-layer algorithm for one value: O(n log squared n)]{A divisor-layer algorithm for one value: \texorpdfstring{$O(n\log^2 n)$}{O(n log^2 n)}}\label{sec:one-value}
	The ten-moment algorithm of \cref{sec:final} inserts M\"obius inversion inside a larger summation over directions and side lengths. The final one-value improvement comes from reversing this order. We keep the M\"obius divisor fixed, remove the coprimality condition for the whole layer at once, and then apply the same square-root principle used in the earlier decompositions. Thus the proof has three parts: a divisor-layer identity, a disjoint square-root cover inside one primitive-free layer, and a reduction of the required layer moments to the six-state floor-sum kernel.
	
	\subsection{Divisor layers}
	
	Let $F_1(n)$ denote the non-axis-parallel contribution, so that $F(n)=F_0(n)+F_1(n)$ and $F_0(n)$ is the closed-form term handled earlier. In the standard variables,
	\begin{equation}\label{eq:F1-one-value-start}
		F_1(n)=
		\sum_{\substack{u>v>0,\ x\ge y\ge 1\\ \gcd(u,v)=1,\ xu+yv\le n}}
		\mult(x,y)\,(n-xu-yv)(n-xv-yu).
	\end{equation}
	Here, as before, $\mult(x,y)=2$ for $x=y$ and $\mult(x,y)=4$ for $x>y$. We now insert
	\[
	\mathbf 1_{\gcd(u,v)=1}=\sum_{d\mid u,\ d\mid v}\mu(d).
	\]
	All sums are finite, so the order of summation may be changed. In the summand with fixed $d$, write $u=da$ and $v=db$. Then $xu+yv\le n$ becomes $ax+by\le \lfloor n/d\rfloor$. Hence
	\begin{equation}\label{eq:F1-divisor-layers}
		F_1(n)=\sum_{d\le n}\mu(d)S_d(n),
	\end{equation}
	where $N_d=\lfloor n/d\rfloor$ and
	\begin{equation}\label{eq:Sd-def}
		S_d(n)=
		\sum_{\substack{a>b\ge1,\ x\ge y\ge1\\ ax+by\le N_d}}
		\mult(x,y)
		\bigl(n-d(ax+by)\bigr)\bigl(n-d(bx+ay)\bigr).
	\end{equation}
	This is the same contribution as in \eqref{eq:F1-one-value-start}: summing $\mu(d)$ over common divisors of $u$ and $v$ restores exactly the primitive-direction condition. The advantage is that the inner layer \eqref{eq:Sd-def} is primitive-free. Its weight is the quadratic polynomial
	\begin{equation}\label{eq:weight-expanded-one-value}
		\bigl(n-d(ax+by)\bigr)\bigl(n-d(bx+ay)\bigr)
		=n^2-nd(a+b)(x+y)
		+d^2\bigl(ab(x^2+y^2)+(a^2+b^2)xy\bigr).
	\end{equation}
	Thus each layer reduces to quadratic moments over the region $a>b$, $x\ge y$, $ax+by\le N_d$.
	
	\subsection{A square-root cover inside one layer}
	
	Fix $d$, write $N=N_d$, and put $B=\lfloor\sqrt N\rfloor$. Since $ax\le ax+by\le N$, an admissible tuple cannot have both $a>B$ and $x>B$. We use the disjoint form of this square-root cover,
	\begin{equation}\label{eq:one-value-disjoint-cover}
		\{ax+by\le N,\ a>b,\ x\ge y\}
		=
		\{x\le B\}\ \sqcup\ \{a\le B,\ x>B\}.
	\end{equation}
	It is equivalent to the symmetric cover $\{a\le B\}\cup\{x\le B\}$, but avoids an explicit inclusion--exclusion step. The first part fixes a small side pair and sums over all direction pairs; the second fixes a small direction pair and keeps only the large-side range $x>B$, which is imposed by subtracting capped side moments. For $1\le q<p\le B$ and $0\le i+j\le2$, define
	\begin{equation}\label{eq:Mij-def-one-value}
		M_{ij}(p,q;N)=
		\sum_{\substack{X\ge Y\ge1\\ pX+qY\le N}}X^iY^j,
		\qquad
		T=\left\lfloor\frac{N}{p+q}\right\rfloor,
		\qquad
		D_j(T)=\sum_{t=1}^T t^j.
	\end{equation}
	Here $p,q$ denote the fixed pair, and $X,Y$ denote the remaining pair being summed.
	Let
	\begin{equation}\label{eq:Phi-def-one-value}
		\begin{aligned}
			\Phi_{p,q}^{(d)}
			={}&n^2M_{00}-nd(p+q)(M_{10}+M_{01})\\
			&+d^2pq(M_{20}+M_{02})+d^2(p^2+q^2)M_{11},
		\end{aligned}
	\end{equation}
	where the moments are evaluated at $(p,q;N)$. Thus $\Phi_{p,q}^{(d)}$ is the unweighted sum of \eqref{eq:weight-expanded-one-value} after fixing one ordered non-diagonal pair equal to $(p,q)$. Let
	\begin{equation}\label{eq:Delta-def-one-value}
		\Delta_{p,q}^{(d)}
		=n^2D_0(T)-2nd(p+q)D_1(T)+d^2(p+q)^2D_2(T).
	\end{equation}
	We also use capped moments
	\[
	M_{ij}^{\cap}(p,q;N,B)=
	\sum_{\substack{B\ge X\ge Y\ge1\\ pX+qY\le N}}X^iY^j
	\]
	and define $\Phi_{p,q}^{(d),\cap}$ from \eqref{eq:Phi-def-one-value} with $M_{ij}$ replaced by $M_{ij}^{\cap}$. The capped diagonal term $\Delta_{p,q}^{(d),\cap}$ is obtained from \eqref{eq:Delta-def-one-value} with $T$ replaced by $\min(B,\lfloor N/(p+q)\rfloor)$.
	
	There remains the diagonal-side part of the first set in \eqref{eq:one-value-disjoint-cover}. For $x=y=t\le B$ put
	\begin{equation}\label{eq:side-diagonal-one-value}
		C_t^{(d)}
		=2\sum_{s\le N/t}
		\left\lfloor\frac{s-1}{2}\right\rfloor(n-dts)^2.
	\end{equation}
	Here $s=a+b$, and $\lfloor(s-1)/2\rfloor$ is the number of pairs $a>b\ge1$ with sum $s$.
	
	\begin{lemma}[one layer formula]\label{lem:assembled-layer}
		For every $d\le n$,
		\begin{equation}\label{eq:layer-assembled-one-value}
			S_d(n)=
			\sum_{1\le q<p\le B}
			\left(8\Phi_{p,q}^{(d)}-6\Delta_{p,q}^{(d)}
			-4\Phi_{p,q}^{(d),\cap}+2\Delta_{p,q}^{(d),\cap}\right)
			+\sum_{t=1}^{B}C_t^{(d)}.
		\end{equation}
	\end{lemma}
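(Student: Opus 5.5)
The plan is to split $S_d(n)$ along the disjoint cover \eqref{eq:one-value-disjoint-cover} and evaluate each piece as a combination of the moments \eqref{eq:Mij-def-one-value}. Everything rests on one symmetry of the weight \eqref{eq:weight-expanded-one-value}. Write $w(a,b;x,y)=\bigl(n-d(ax+by)\bigr)\bigl(n-d(bx+ay)\bigr)$. Since $ax+by=xa+yb$ and $bx+ay=ya+xb$, we have $w(a,b;x,y)=w(x,y;a,b)$, so the roles of the direction pair and the side pair may be exchanged freely. Consequently, for any fixed ordered pair $(p,q)$, the unweighted sum of $w$ over the other pair $X\ge Y\ge1$ with $pX+qY\le N$ equals $\Phi_{p,q}^{(d)}$, whichever pair was fixed. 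Its diagonal part $X=Y=t$ has weight $(n-dt(p+q))^2$ for $t\le T=\lfloor N/(p+q)\rfloor$. Expanding this square gives exactly $\Delta_{p,q}^{(d)}$ from \eqref{eq:Delta-def-one-value}.

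First I check the cover. The two sets in \eqref{eq:one-value-disjoint-cover} are disjoint, and their union is the full admissible region. The only point needing care is that $x>B$ forces $a\le B$: from $a(B+1)\le ax\le N<(B+1)^2$ we get $a<B+1$. Thus in the second piece the direction pair $(p,q)=(a,b)$ ranges over $1\le q<p\le B$. In the first piece the non-diagonal side pairs $(p,q)=(x,y)$ with $x>y$ range over the same index set.

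Next I treat the piece $\{x\le B\}$. For non-diagonal sides $x>y$ the multiplicity is $\mult=4$, and by the symmetry above the sum over the direction pair runs over $X=a\ge Y=b$ with $pX+qY\le N$. That sum includes the forbidden direction diagonal $a=b$, so I subtract it and obtain $4(\Phi_{p,q}^{(d)}-\Delta_{p,q}^{(d)})$. Here $\Delta$ is uncapped, because the direction pair is unrestricted in this piece. For diagonal sides $x=y=t\le B$ the multiplicity is $2$ and the weight equals $(n-dt(a+b))^2$. Grouping by $s=a+b$, there are $\lfloor (s-1)/2\rfloor$ pairs $a>b\ge1$ with sum $s$, and this gives exactly $C_t^{(d)}$ from \eqref{eq:side-diagonal-one-value}.

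Then I treat the piece $\{a\le B,\ x>B\}$ with direction pair $(a,b)=(p,q)$ fixed. Writing $\mult=4-2\cdot\1_{X=Y}$, the sum over sides $X\ge Y$ with $X>B$ equals the full moment minus the capped moment ($X\le B$). This yields $4(\Phi-\Phi^{\cap})-2(\Delta-\Delta^{\cap})$. The capped diagonal restricts $t\le\min(B,T)$, which is exactly how $\Delta^{\cap}$ is defined.

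Adding the three contributions gives
\[
4\Phi-4\Delta+4\Phi-4\Phi^{\cap}-2\Delta+2\Delta^{\cap}
=8\Phi-6\Delta-4\Phi^{\cap}+2\Delta^{\cap},
\]
summed over $1\le q<p\le B$, plus $\sum_{t\le B}C_t^{(d)}$, which is \eqref{eq:layer-assembled-one-value}. I expect the main obstacle to be bookkeeping rather than any single hard step. Two things must be checked carefully. The first is that the diagonal subtraction in the first piece is uncapped while the one in the second piece is capped. The second is that the roles-swap symmetry is applied with the correct orientation: the moments $M_{ij}$ are taken in the variables $(X,Y)$ of the summed pair, and $\Phi$ pairs $M_{20}+M_{02}$ with $pq$ and $M_{11}$ with $p^2+q^2$. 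I would verify this orientation directly against \eqref{eq:weight-expanded-one-value}.
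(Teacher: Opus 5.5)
Your proposal is correct and follows essentially the same route as the paper's proof. You split along the disjoint cover and use the exchange symmetry $w(a,b;x,y)=w(x,y;a,b)$ to obtain $4(\Phi-\Delta)$ plus $\sum_t C_t^{(d)}$ on $\{x\le B\}$, and $4(\Phi-\Phi^{\cap})-2(\Delta-\Delta^{\cap})$ on $\{a\le B,\ x>B\}$. Beyond the paper, you spell out two checks it leaves implicit: that $x>B$ forces $a\le B$, and that the diagonal weight $(n-dt(p+q))^2$ expands to $\Delta_{p,q}^{(d)}$.
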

	\begin{proof}
		Use the disjoint cover \eqref{eq:one-value-disjoint-cover}. First consider the part $x\le B$. If the fixed side is non-diagonal, say $(x,y)=(p,q)$ with $p>q$, then the direction variables satisfy $a>b$ and therefore the diagonal $a=b$ must be removed completely. The quadratic weight \eqref{eq:weight-expanded-one-value} is symmetric under simultaneously exchanging the fixed pair and the summed pair, so the required moment is the same $\Phi_{p,q}^{(d)}$. Thus the non-diagonal side contribution is $4\Phi_{p,q}^{(d)}-4\Delta_{p,q}^{(d)}$. The diagonal sides in this same part have $x=y=t\le B$ and contribute the terms $C_t^{(d)}$.
		
		It remains to count the second disjoint part, where $a\le B$ and $x>B$. Fix the direction $(a,b)=(p,q)$. Without the restriction $x>B$, summing over all sides gives $4\Phi_{p,q}^{(d)}-2\Delta_{p,q}^{(d)}$, because the side multiplier is $4$ for $X>Y$ and $2$ for $X=Y$. The excluded subrange $x\le B$ is exactly the capped contribution $4\Phi_{p,q}^{(d),\cap}-2\Delta_{p,q}^{(d),\cap}$. Hence this second part contributes
		\[
		4\Phi_{p,q}^{(d)}-2\Delta_{p,q}^{(d)}
		-4\Phi_{p,q}^{(d),\cap}+2\Delta_{p,q}^{(d),\cap}.
		\]
		Adding it to the non-diagonal and diagonal pieces of the $x\le B$ part gives \eqref{eq:layer-assembled-one-value}.
	\end{proof}
	
	\subsection{Moment evaluation and complexity}
	
	It remains to show that the moments in \eqref{eq:layer-assembled-one-value} are cheap. For example,
	\begin{equation}\label{eq:Mij-floor-reduction-one-value}
		M_{ij}(p,q;N)
		=\sum_{Y=1}^{\lfloor N/(p+q)\rfloor}Y^j
		\left(P_i\!\left(\left\lfloor\frac{N-qY}{p}\right\rfloor\right)-P_i(Y-1)\right),
	\end{equation}
	where $P_i(R)=\sum_{1\le X\le R}X^i$. Since $i\le2$, expanding $P_i$ expresses every non-polynomial term as a constant-size linear combination of sums
	\begin{equation}\label{eq:six-state-needed-one-value}
		\sum_{0\le z<m}z^\alpha
		\left\lfloor\frac{Az+C}{M}\right\rfloor^\beta,
		\qquad
		(\alpha,\beta)\in\{(0,1),(1,1),(2,1),(0,2),(1,2),(0,3)\},
	\end{equation}
	up to harmless shifts of the summation interval. This is precisely the six-state kernel of \cref{app:six-moments}. Capped moments are treated the same way: in
	\[
	M_{ij}^{\cap}(p,q;N,B)=
	\sum_{Y=1}^{\min(B,\lfloor N/(p+q)\rfloor)}Y^j
	\left(P_i\!\left(\min\!\left(B,\left\lfloor\frac{N-qY}{p}\right\rfloor\right)\right)-P_i(Y-1)\right),
	\]
	the minimum is removed by splitting the range of $Y$ at $\lfloor(N-pB)/q\rfloor$, with empty ranges ignored. One side of the split is polynomial, and the other is another instance of \eqref{eq:six-state-needed-one-value}. Finally, each $C_t^{(d)}$ is a sum over $s$ of a quadratic polynomial times $\lfloor(s-1)/2\rfloor$; splitting by the parity of $s$ evaluates it in $O(1)$ time.
	
	\begin{lemma}[one layer cost]\label{lem:one-layer-cost}
		For fixed $d$, the layer $S_d(n)$ can be evaluated in $O(N_d\log N_d)$ arithmetic operations and $O(N_d)$ working memory.
	\end{lemma}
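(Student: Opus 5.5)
The plan is to count the work in \cref{lem:assembled-layer} term by term. We take $N=N_d$ and $B=\lfloor\sqrt N\rfloor$. The formula \eqref{eq:layer-assembled-one-value} has two kinds of terms: the side-diagonal sum $\sum_{t\le B}C_t^{(d)}$, and a double sum over the $O(B^2)=O(N)$ pairs $1\le q<p\le B$.

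The side-diagonal part is cheap. As noted before the statement, each $C_t^{(d)}$ is a sum over $s\le N/t$ of a quadratic polynomial in $s$ times $\lfloor (s-1)/2\rfloor$. Splitting $s$ by parity turns it into closed-form power sums over an interval, so it costs $O(1)$. Hence this part costs $O(B)$ in total.

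For each pair $(p,q)$ I would bound the cost as follows.
\begin{itemize}
\item The diagonal corrections $\Delta_{p,q}^{(d)}$ and $\Delta_{p,q}^{(d),\cap}$ depend only on $T=\lfloor N/(p+q)\rfloor$ (resp.\ $\min(B,T)$) through $D_0,D_1,D_2$. They are therefore $O(1)$.
\item The uncapped moments $M_{ij}(p,q;N)$ with $i+j\le 2$ are handled by \eqref{eq:Mij-floor-reduction-one-value}. Expanding $P_i$ (a polynomial of degree $\le 3$) in the floor $\lfloor (N-qY)/p\rfloor$ and multiplying by $Y^j$ gives terms $Y^{\alpha}\lfloor (N-qY)/p\rfloor^{\beta}$ with $\alpha+\beta\le 3$ and $\beta\ge1$. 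All other terms are polynomial in $Y$. So every non-polynomial term lies in the index set of \eqref{eq:six-state-needed-one-value}.
\item These floor sums have negative slope. \cref{lem:six-sign-reversal} converts them into $O(1)$ instances of the six-moment kernel $\mathcal H_{\alpha,\beta}(\,\cdot\,;p,q,\cdot\,)$. By \cref{cor:floor-kernel} all six states are obtained together in one Euclidean recursion of depth $O(\log p)\le O(\log N)$.
\item The capped moments $M_{ij}^{\cap}$ first need the $Y$-range split at $\lfloor (N-pB)/q\rfloor$. One side is purely polynomial (the cap $B$ is active). The other is again a six-state instance.
\end{itemize}
Altogether each pair costs $O(1)$ kernel calls and $O(1)$ polynomial sums, hence $O(\log N)$. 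Summing over $O(N)$ pairs gives $O(N\log N)$ operations, and adding $O(B)$ preserves this.

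The step needing most care is checking that every interval shift and cap-split produces a legal kernel instance: nonnegative $a,b$ after reversal, $m\ge1$, and empty ranges discarded. It must also be checked that the recursion depth is governed by $\log\max(p,q)\le\log N$ and not by the size of $N$. The affine normalization of \cref{lem:six-affine} first reduces $b$ modulo $m$ in $O(1)$, so the magnitude of the offset $N-qR$ does not affect the depth. With this, the bound $O(\log N_d)$ per pair holds uniformly.

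For memory, the kernel has $O(1)$ state, and each pair is processed on the fly and accumulated into a running total. Only the loop variables are kept, so the working memory is $O(1)$ beyond any optional $O(N_d)$ tables, for instance precomputed power sums $D_j$ or $P_i$ up to $N_d$ if one prefers table lookup to closed forms. This is within the claimed $O(N_d)$.
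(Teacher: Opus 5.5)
Your proposal is correct and follows the paper's argument. There are $O(B^2)=O(N_d)$ pairs $(p,q)$, each reduced (via sign reversal and the cap split) to a constant number of six-state kernel calls costing $O(\log N_d)$, plus $O(B)$ closed-form diagonal-side terms $C_t^{(d)}$. Your additional checks are consistent with the paper's proof and make explicit details it leaves implicit: nonnegative offsets $N-qR\ge 0$ after reversal, recursion depth governed by the Euclidean pair $(p,q)$, and $O(1)$ working state per pair.
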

	\begin{proof}
		There are $O(B^2)=O(N_d)$ pairs $1\le q<p\le B$. For each pair, the uncapped and capped moment reductions above use only a constant number of six-state kernel calls, and each such call costs $O(\log N_d)$ by \cref{cor:floor-kernel,app:six-moments}. The diagonal-side terms contribute $O(B)$ closed-form evaluations. Therefore the layer cost is $O(N_d\log N_d)$.
	\end{proof}
	
	\begin{theorem}[one value]\label{thm:one-value}
		A single value $F(n)$ can be computed exactly in $O(n\log^2 n)$ arithmetic operations and $O(n)$ memory.
	\end{theorem}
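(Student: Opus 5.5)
The plan is to assemble three ingredients: the decomposition $F(n)=F_0(n)+F_1(n)$, the divisor-layer identity \eqref{eq:F1-divisor-layers}, and the per-layer cost bound of \cref{lem:one-layer-cost}. The only new work is summing the layer costs over $d$ and checking preprocessing and memory.

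First, $F_0(n)$ is given in closed form, so it costs $O(1)$ operations. For $F_1(n)$ we use \eqref{eq:F1-divisor-layers}. Since $\mu$ is needed only for $d\le n$, a linear (or Eratosthenes) sieve computes $\mu(1),\dots,\mu(n)$ in $O(n\log\log n)$ time and $O(n)$ memory. Layers with $\mu(d)=0$ are skipped. Layers with $N_d=\lfloor n/d\rfloor<3$ contribute nothing, since $a>b\ge1$ forces $a+b\ge3$, so $ax+by\le N_d$ has no solution. For each remaining $d$ we evaluate $S_d(n)$ using \cref{lem:assembled-layer}. By \cref{lem:one-layer-cost} this costs $O(N_d\log N_d)\le O\bigl((n/d)\log n\bigr)$ operations, and the working memory is $O(N_d)\le O(n)$. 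That memory can be reused from one layer to the next, because each layer only adds a single scalar $\mu(d)S_d(n)$ to a running total.

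Summing the costs gives
\[
\sum_{d\le n}O\!\left(\frac{n}{d}\log n\right)=O\!\left(n\log n\sum_{d\le n}\frac1d\right)=O(n\log^2 n).
\]
This dominates the sieve. The total memory is $O(n)$, coming from the $\mu$ table plus one reused layer workspace.

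The main point to verify is that the per-layer bound in \cref{lem:one-layer-cost} is genuinely $O(N_d\log N_d)$ with no hidden $n$-dependent overhead. Two things need checking.
\begin{itemize}
\item The six-state kernel calls must have all their parameters $(m,a,b)$ bounded by $N_d$, so that the Euclidean depth is $O(\log N_d)$ and not $O(\log n)$. This is harmless in any case, because $\log N_d\le\log n$ and the sum above already uses $\log n$.
\item The coefficients involving $n$ and $d$ in \eqref{eq:Phi-def-one-value} and \eqref{eq:Delta-def-one-value} must be computed in $O(1)$ operations per pair $(p,q)$. This holds in the arithmetic-operation model fixed in the introduction.
\end{itemize}
The capped moments also need the split at $\lfloor (N-pB)/q\rfloor$, and this split adds only $O(1)$ kernel calls per pair. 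With these checks done, the theorem follows.
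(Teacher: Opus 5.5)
Your proposal is correct and follows essentially the same route as the paper. Both use the divisor-layer identity \eqref{eq:F1-divisor-layers}, sum the per-layer bound of \cref{lem:one-layer-cost} harmonically over $d$ to get $O(n\log^2 n)$, sieve $\mu$, reuse one layer's workspace for $O(n)$ memory, and add $F_0(n)$ in closed form; your extra checks (skipping layers with $N_d<3$, using $\log N_d\le\log n$) are harmless refinements of the same argument.
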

	\begin{proof}
		By \eqref{eq:F1-divisor-layers}, it is enough to compute $S_d(n)$ for $d\le n$ and combine the results with the weights $\mu(d)$. Using \cref{lem:one-layer-cost}, the total time is
		\[
		\sum_{d\le n}O\!\left(\frac nd\log\frac nd\right)=O(n\log^2 n).
		\]
		The values of $\mu(d)$ are obtained by a linear or Eratosthenes-type sieve. At any moment we store only this array and the temporary data for one layer, so the memory consumption is $O(n)$. The closed form for $F_0(n)$ is then added in constant time.
	\end{proof}
	
	\begin{remark}[quotient grouping]\label{rem:quotient-grouping}
		The implementation may group all divisors with the same quotient $N_d=\lfloor n/d\rfloor$. For fixed $N_d$, the layer formula is a quadratic polynomial in $d$, so prefix sums of $\mu(d)$, $d\mu(d)$, and $d^2\mu(d)$ combine such blocks at once. This is only a constant-factor improvement and is not used in the asymptotic bound.
	\end{remark}
	
	\section[All values in O(N to the 3/2)]{All values in \texorpdfstring{$O(N^{3/2})$}{O(N^(3/2))}}\label{sec:all-values}
	
	We now compute the whole table $F(1),F(2),\ldots,F(N)$ in one run. This section follows the same divisor-expanded viewpoint as \cref{sec:one-value}, but replaces repeated one-value summation by event arrays. A primitive-free quadruple is stored at the exact threshold where it first becomes active; after all thresholds have been filled, prefix sums recover every value of the sequence. As before, only the non-axis-parallel part $F_1$ is generated explicitly, and the closed form for $F_0(n)$ is added at the end.
	
	\subsection{Event arrays and recovery}
	
	For a primitive-free quadruple $a>b\ge1$, $x\ge y\ge1$, put
	\[
	L=ax+by,
	\qquad
	K=bx+ay.
	\]
	After applying the M\"obius divisor $d$, this quadruple contributes exactly for those $n$ with $n\ge dL$, and its contribution at such an $n$ is
	\[
	\mu(d)\mult(x,y)(n-dL)(n-dK).
	\]
	Thus the activation threshold is $dL$; the value of $K$ only enters the polynomial coefficient. For every event with threshold at most $N$, add
	\begin{align*}
		E_0[dL]&\mathrel{+}=\mu(d)\mult(x,y),\\
		E_1[dL]&\mathrel{+}=\mu(d)d\mult(x,y)(L+K),\\
		E_2[dL]&\mathrel{+}=\mu(d)d^2\mult(x,y)LK.
	\end{align*}
	Let
	\[
	P_i(n)=\sum_{t\le n}E_i[t].
	\]
	Then
	\begin{equation}\label{eq:all-values-recover}
		F_1(n)=n^2P_0(n)-nP_1(n)+P_2(n),
	\end{equation}
	and $F(n)=F_0(n)+F_1(n)$.
	
	\begin{lemma}[event recovery]\label{lem:event-recovery}
		If the arrays $E_0,E_1,E_2$ contain exactly the event contributions above for all thresholds at most $N$, then \eqref{eq:all-values-recover} gives the correct value of $F_1(n)$ for every $1\le n\le N$.
	\end{lemma}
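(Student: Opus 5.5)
The plan is to start from the divisor-layer identity \eqref{eq:F1-divisor-layers} together with the layer definition \eqref{eq:Sd-def}. Inserting $N_d=\lfloor n/d\rfloor$ and exchanging the finite sums gives, for every $1\le n\le N$,
\[
F_1(n)=\sum_{d\le n}\mu(d)\sum_{\substack{a>b\ge1,\ x\ge y\ge1\\ d(ax+by)\le n}}\mult(x,y)\,(n-dL)(n-dK),
\]
with $L=ax+by$ and $K=bx+ay$. Here the condition $ax+by\le\lfloor n/d\rfloor$ has been rewritten as $dL\le n$, which is legitimate because $dL$ is an integer. The first step is therefore to view $F_1(n)$ as a single sum over \emph{events} $(d,a,b,x,y)$ with threshold $dL\le n$. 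The range $d\le n$ is automatic: $L\ge 3$, so $dL\le n$ already forces $d\le n$.

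Next, for a fixed event I expand the weight as a polynomial in $n$:
\[
(n-dL)(n-dK)=n^2-n\,d(L+K)+d^2LK.
\]
The coefficients $1$, $d(L+K)$ and $d^2LK$ do not depend on $n$. Multiplying by $\mu(d)\mult(x,y)$ and summing over all events with $dL\le n$ gives
\[
F_1(n)=n^2\sum_{dL\le n}\mu(d)\mult(x,y)-n\sum_{dL\le n}\mu(d)d\mult(x,y)(L+K)+\sum_{dL\le n}\mu(d)d^2\mult(x,y)LK.
\]

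I then group the events by their exact threshold value $t=dL$. By hypothesis, $E_i[t]$ is exactly the sum of the $i$-th coefficient over the events with $dL=t$, for each $t\le N$. Hence for $n\le N$ each of the three sums equals $\sum_{t\le n}E_i[t]=P_i(n)$. Only thresholds $t\le n\le N$ are involved, so events with threshold above $N$ are never needed. Substituting yields \eqref{eq:all-values-recover}, and then $F=F_0+F_1$ follows from the split in \cref{sec:standard}.

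The only delicate step is making sure the event sum is exactly the layer sum and not merely equal to it for $n$ in some restricted range. Concretely, one must check that the set of events with $dL\le n$ coincides with the index set of $\sum_d\mu(d)S_d(n)$ for every $n\le N$ simultaneously. This rests on the equivalence $ax+by\le\lfloor n/d\rfloor \iff d(ax+by)\le n$, and on the fact that the constraints $a>b\ge1$ and $x\ge y\ge1$ do not depend on $n$. The rest is bookkeeping of finite sums.
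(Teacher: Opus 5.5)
Your proof is correct and follows the same route as the paper: via the divisor-layer identity, $F_1(n)$ becomes a sum over events $(d,a,b,x,y)$ with $dL\le n$. You then expand $(n-dL)(n-dK)=n^2-nd(L+K)+d^2LK$ and group by exact threshold to obtain $n^2P_0(n)-nP_1(n)+P_2(n)$. You add detail the paper leaves implicit, namely that $ax+by\le\lfloor n/d\rfloor\iff dL\le n$ holds simultaneously for every $n\le N$, and that the remaining constraints do not depend on $n$.
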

	\begin{proof}
		For a fixed quadruple and divisor, the term contributes to $F_1(n)$ if and only if $dL\le n$. Summing all events with threshold at most $n$ gives
		\[
		\sum_{dL\le n}\mu(d)\mult(x,y)
		\left(n^2-nd(L+K)+d^2LK\right),
		\]
		which is exactly $n^2P_0(n)-nP_1(n)+P_2(n)$ by the definitions of the three event arrays.
	\end{proof}
	
	\subsection{Separating the M\"obius convolution}
	
	First build divisor-free coefficient arrays $G_0,G_1,G_2$ indexed by $L$. This separates the geometric construction of the coefficients from the arithmetic M\"obius convolution, which is applied only once:
	\begin{align*}
		G_0[L]&=\sum_{ax+by=L}\mult(x,y),\\
		G_1[L]&=\sum_{ax+by=L}\mult(x,y)(L+K),\\
		G_2[L]&=\sum_{ax+by=L}\mult(x,y)LK,
	\end{align*}
	where the sums are over $a>b\ge1$ and $x\ge y\ge1$. Then
	\begin{align}\label{eq:mobius-event-convolution}
		E_0[t]&=\sum_{d\mid t}\mu(d)G_0(t/d),\\
		E_1[t]&=\sum_{d\mid t}\mu(d)d\,G_1(t/d),\nonumber\\
		E_2[t]&=\sum_{d\mid t}\mu(d)d^2G_2(t/d).\nonumber
	\end{align}
	The convolution is evaluated by looping over $d$ and then over multiples $t=dL\le N$. Its cost is $\sum_{d\le N}O(N/d)=O(N\log N)$, so the main remaining task is the construction of the three arrays $G_i$.
	
	\subsection{Constructing the coefficient arrays}
	
	Let $B=\lfloor\sqrt N\rfloor$. Since $L=ax+by\le N$ implies $ax\le N$, every quadruple satisfies $a\le B$ or $x\le B$. We use the disjoint cover
	\begin{equation}\label{eq:all-values-disjoint-cover}
		\{x\le B\}\ \sqcup\ \{a\le B,\ x>B\}.
	\end{equation}
	
	In the part $a\le B$, fix $a>b$ and write $x=y+s$ with $s\ge0$. Then
	\[
	L=(a+b)y+as,
	\qquad
	K=(a+b)y+bs.
	\]
	The disjoint condition is $x=y+s>B$. Hence for fixed $a,b,y$ the parameter $s$ runs over the interval
	\[
	\max(0,B+1-y)\le s\le \left\lfloor\frac{N-(a+b)y}{a}\right\rfloor,
	\]
	with the interval omitted if the upper bound is smaller than the lower bound. Along this interval $L$ is an arithmetic progression with step $a$, and the updates to $G_0,G_1,G_2$ are polynomials of degree at most two in $s$. The possible singleton $s=0$ has multiplier $2$, while the range $s\ge1$ has multiplier $4$.
	
	In the part $x\le B$, fix $x>y$ and write $a=b+r$ with $r\ge1$. Then
	\[
	L=(x+y)b+xr,
	\qquad
	K=(x+y)b+yr.
	\]
	For fixed $x,y,r$, the parameter $b$ runs over
	\[
	1\le b\le \left\lfloor\frac{N-xr}{x+y}\right\rfloor,
	\]
	again with empty intervals ignored. Along this interval $L$ is an arithmetic progression with step $x+y$, and the three coefficient updates are polynomials of degree at most two in $b$. Finally, for diagonal sides $x=y=t$, write $s=a+b$. Since $L=K=ts$ and the number of pairs $a>b\ge1$ with $a+b=s$ is $\lfloor(s-1)/2\rfloor$, we add, for $ts\le N$,
	\begin{align}\label{eq:all-values-diagonal-side}
		G_0[ts]&\mathrel{+}=2\Floor{\frac{s-1}{2}},\\
		G_1[ts]&\mathrel{+}=4ts\Floor{\frac{s-1}{2}},\nonumber\\
		G_2[ts]&\mathrel{+}=2(ts)^2\Floor{\frac{s-1}{2}}.\nonumber
	\end{align}
	
	All non-diagonal updates have the generic form
	\begin{equation}\label{eq:ap-update}
		G[L_0+m\ell]\mathrel{+}=c_0+c_1\ell+c_2\ell^2,
		\qquad \ell_0\le\ell\le\ell_1.
	\end{equation}
	For a fixed step $m\le2B$ and residue class modulo $m$, rewrite the right-hand side as a quadratic polynomial in the quotient index of the progression. Three ordinary difference arrays add such a polynomial on an interval in $O(1)$ time. The construction processes the steps one at a time: while step $m$ is active, the buffers store only the quotient-index difference arrays for the residue classes modulo $m$. Their combined length is $O(N)$, because every integer $L\le N$ belongs to exactly one residue class for this fixed step. After all updates with step $m$ have been inserted, a single flush over the residue classes materializes their contribution to the global arrays $G_0,G_1,G_2$; the buffers are then cleared and reused for the next step. Thus the $O(N)$ flush cost is paid separately for each $m$, but the buffer memory is not multiplied by the number of steps.
	
	\begin{lemma}[coefficient-array construction]\label{lem:coefficient-array-construction}
		The arrays $G_0,G_1,G_2$ for all $L\le N$ can be constructed in $O(N^{3/2})$ arithmetic operations and $O(N)$ memory.
	\end{lemma}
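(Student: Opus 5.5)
We prove the lemma by bounding separately the correctness of the three update families, the number of arithmetic-progression updates, the flush cost, and the memory.

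\emph{Correctness.} The disjoint cover \eqref{eq:all-values-disjoint-cover} partitions all admissible quadruples with $L\le N$, because $ax\le L\le N$ forbids $a>B$ and $x>B$ simultaneously. Within the part $a\le B,\ x>B$, the substitution $x=y+s$ is a bijection onto triples $(a,b,y)$ together with $s$ in the stated interval. The multiplier is $2$ exactly at $s=0$ and $4$ for $s\ge1$. Within the part $x\le B$, the non-diagonal sides $x>y$ are parametrized bijectively by $(x,y,r,b)$ with $a=b+r$. The diagonal sides $x=y=t$ are handled by \eqref{eq:all-values-diagonal-side}, using the count $\lfloor (s-1)/2\rfloor$ of pairs $a>b\ge1$ with $a+b=s$. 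In every case $L$ and $K$ are affine in the running parameter, so $\mult(x,y)$, $\mult(x,y)(L+K)$ and $\mult(x,y)LK$ are polynomials of degree at most two in it. This puts each update in the form \eqref{eq:ap-update}, with step $m=a\le B$ or $m=x+y\le 2B$.

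\emph{Number of updates.} In the first family the outer triples $(a,b,y)$ satisfy $b<a\le B$ and $(a+b)y\le N$. Their number is at most
\[
\sum_{a\le B}\sum_{b<a}\frac{N}{a+b}\le \sum_{a\le B} a\cdot\frac{N}{a}=NB=O(N^{3/2}).
\]
In the second family the triples $(x,y,r)$ satisfy $y<x\le B$ and $xr\le N$. Their number is at most $\sum_{x\le B} x\cdot N/x=O(N^{3/2})$. The diagonal family has $\sum_{t\le B}N/t=O(N\log N)$ single-point updates, each done directly in $O(1)$. Every progression update costs $O(1)$ once the quotient-index difference arrays are in place. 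Converting from $\ell$ to the quotient index $k=(L_0+m\ell-\rho)/m$, where $\rho$ is the residue of $L_0$ modulo $m$, is an affine change of variable. It therefore only re-expands a quadratic in $O(1)$ operations. Adding a quadratic on an interval then uses the standard trick of three stacked difference arrays, equivalently third-order differences, with $O(1)$ endpoint corrections.

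\emph{Flushes and memory.} This is the step that needs care. To process all updates of a fixed step $m$ together, the generation order must be rearranged: enumerate all $(a,b,y)$ with $a=m$, and all $(x,y,r)$ with $x+y=m$, then flush. The first family has step $a$, so looping $a$ outermost already groups it. The second family needs the outer loop to run over $m=x+y$, with an inner loop over the pairs $x>y$, $x\le B$, that have this sum. This is a harmless reordering. Since each family contributes one flush per value of $m$, there are $O(B)$ flushes in total. Each flush runs prefix sums over all residue classes, which cover each index $\le N$ exactly once, plus $O(m)$ class overhead, so it costs $O(N+m)=O(N)$. The total flush cost is therefore $O(B\cdot N)=O(N^{3/2})$. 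The buffers for a fixed $m$ have total length $\sum_{\rho<m}(\lceil N/m\rceil+O(1))=O(N)$, they are reused for every $m$, and the arrays $G_i$ themselves have size $O(N)$. Adding the update count, the flush cost and the memory gives the claimed $O(N^{3/2})$ operations and $O(N)$ memory.
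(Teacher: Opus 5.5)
Your proof is correct and follows the paper's argument: the same disjoint cover, the same $O(NB)$ bounds on progression updates in the two families plus $O(N\log N)$ diagonal updates, and $O(B)$ flushes of reusable $O(N)$-size residue-class buffers. Your regrouping of the second family by $m=x+y$ and the affine re-indexing into quotient coordinates spell out details that the paper's phrase ``processes the steps one at a time'' leaves implicit.
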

	\begin{proof}
		The cover \eqref{eq:all-values-disjoint-cover} is disjoint and exhaustive, so no tuple is missed or counted twice. In the part $a\le B$, for fixed $a,b$ the number of possible $y$ is $O(N/(a+b))$, so the number of arithmetic-progression updates is
		\[
		\sum_{a\le B}\sum_{b<a}O\!\left(\frac{N}{a+b}\right)=O(NB),
		\]
		because $\sum_{b<a}(a+b)^{-1}=O(1)$ for each $a$. In the non-diagonal part $x\le B$, for fixed $x,y$ the number of possible $r$ is $O(N/x)$, so the number of updates is
		\[
		\sum_{x\le B}\sum_{y<x}O\!\left(\frac{N}{x}\right)=O(NB).
		\]
		The diagonal-side updates contribute $\sum_{t\le B}O(N/t)=O(N\log N)$ more. For each step $m\le2B$, the total length of all residue-class buffers is $O(N)$ and flushing them once costs $O(N)$; over all $O(B)$ possible steps this gives $O(NB)$ flush time. Since the steps are processed sequentially, the same buffers are reused after each flush. Therefore the working memory consists of the global arrays $G_i$ plus the buffers for a single step, all of total size $O(N)$, rather than $O(NB)$. With $B=\lfloor\sqrt N\rfloor$, the total time is $O(N^{3/2})$.
	\end{proof}
	
	\begin{theorem}[all values]\label{thm:all-values}
		The whole table $F(1),F(2),\ldots,F(N)$ can be computed exactly in
		$
		O(N^{3/2})
		$
		arithmetic operations and $O(N)$ memory.
	\end{theorem}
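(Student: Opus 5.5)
The plan is to assemble the pipeline already built in this section in three stages, with cost and memory tracked separately for each.

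\emph{Stage 1: coefficient arrays.} Build $G_0,G_1,G_2$ on $\{1,\dots,N\}$ using \cref{lem:coefficient-array-construction}. This costs $O(N^{3/2})$ operations and $O(N)$ memory. Correctness follows because the disjoint cover \eqref{eq:all-values-disjoint-cover} is exhaustive. Every primitive-free quadruple $a>b\ge1$, $x\ge y\ge1$ with $L=ax+by\le N$ is inserted exactly once at index $L$, with weight $\mult(x,y)$ times $1$, $L+K$, or $LK$. I will check three points of the multiplier bookkeeping:
\begin{itemize}
\item in the part $a\le B$, the case $s=0$ gets multiplier $2$ and $s\ge1$ gets $4$;
\item in the part $x\le B$ with $x>y$, the multiplier is $4$;
\item the diagonal updates \eqref{eq:all-values-diagonal-side} carry $\mult=2$, and $L+K=2ts$, $LK=(ts)^2$.
\end{itemize}
Quadruples with $L>N$ never affect any $F(n)$ with $n\le N$, since $d\ge1$ forces $dL>N$. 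So truncating at $N$ is harmless.

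\emph{Stage 2: M\"obius convolution.} Sieve $\mu(d)$ for $d\le N$ in $O(N)$ time. Then form $E_0,E_1,E_2$ by \eqref{eq:mobius-event-convolution}: loop over $d$ and over multiples $t=dL\le N$. The cost is $\sum_{d\le N}N/d=O(N\log N)$, and the convolution can write into fresh arrays of size $N$.

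I need to verify that this produces exactly the event contributions. Substituting $u=da$, $v=db$ into \eqref{eq:F1-one-value-start} and inserting $\1_{\gcd(u,v)=1}=\sum_{d\mid u,\,d\mid v}\mu(d)$ gives, as in \eqref{eq:F1-divisor-layers}, a sum over $(d,a,b,x,y)$. The summand is $\mu(d)\mult(x,y)(n-dL)(n-dK)$, active exactly when $dL\le n$. The constraint $xv+yu\le n$ is automatic, because $K\le L$ when $a>b$ and $x\ge y$. Grouping these terms by the threshold $t=dL$ gives precisely the $E_i$ defined in the event-array subsection.

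\emph{Stage 3: recovery.} Compute the prefix sums $P_i$ in $O(N)$ time. Then \cref{lem:event-recovery} gives $F_1(n)=n^2P_0(n)-nP_1(n)+P_2(n)$ for every $n\le N$. Adding the closed form for $F_0(n)$ costs $O(1)$ per $n$, so $O(N)$ in total.

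Summing the three stages gives $O(N^{3/2})+O(N\log N)+O(N)=O(N^{3/2})$ time. The memory is a constant number of length-$N$ arrays plus the single-step buffers from \cref{lem:coefficient-array-construction}, so it is $O(N)$.

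The main obstacle is not the complexity accounting but the exactness of the decomposition in Stage 1, namely that the disjoint cover combined with the multiplier conventions reproduces $\mult$ without double counting. The diagonal $x=y\le B$ needs particular care: it lies in the part $\{x\le B\}$, and the $a\le B$ part must exclude it through the condition $y+s>B$. I would verify this by checking membership of a generic quadruple in each piece, in the same way as in the proof of \cref{lem:assembled-layer}.
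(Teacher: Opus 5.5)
Your proposal is correct and follows the paper's own proof. The paper builds $G_0,G_1,G_2$ via the coefficient-array lemma in $O(N^{3/2})$ time and $O(N)$ memory, applies the M\"obius convolution in $O(N\log N)$, and then prefix-sums and recovers $F_1(n)$ through the event-recovery lemma in linear time. Your additional checks are all sound: the multiplier bookkeeping, the inequality $L-K=(a-b)(x-y)\ge 0$, and the harmlessness of truncating at $L\le N$. They simply make explicit what the paper distributes across the surrounding lemmas.
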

	\begin{proof}
		By \cref{lem:coefficient-array-construction}, the divisor-free coefficient arrays are built in $O(N^{3/2})$ time. The M\"obius convolution \eqref{eq:mobius-event-convolution} costs $O(N\log N)$, which is absorbed by $O(N^{3/2})$ for $N\ge2$. Prefixing the three event arrays and applying \eqref{eq:all-values-recover} for all $n\le N$ is linear. Adding the closed form for $F_0(n)$ for every $n$ is also linear.
	\end{proof}
	
	\section{Asymptotics}\label{sec:asymptotics}
	
	In this section we state the final two-term asymptotic expansion and explain the structure of its proof. This part of the paper is somewhat more independent than the algorithmic sections, but we include it here because it gives a stringent large-$n$ consistency check for the exact values computed by the algorithms. The full derivation is deferred to \cref{app:second-term}.
	
	The directions collected in $F_0(n)$ contribute only at order $n^4$, so the logarithmic term comes entirely from the primitive directions with $u>v\ge 1$. The key observation is that the remaining part admits an inclusion--exclusion decomposition associated with the covering
	\[
	\{\text{admissible quadruples}\}=\{u,v\le \sqrt n\}\cup\{a,b\le \sqrt n\},
	\]
	whose precise form is proved later as the covering identity in \cref{lem:asym-cover}. Writing $S_1(n)$ for the contribution of the region $u,v\le \sqrt n$, $S_2(n)$ for the contribution of the region $a,b\le \sqrt n$, and $S_{12}(n)$ for their overlap, one therefore has
	$
	F(n)-F_0(n)=S_1(n)+S_2(n)-S_{12}(n).
	$
	
	The terms $S_1(n)$ and $S_2(n)$ are symmetric: after rescaling, each is governed by the same integral kernel and therefore contributes the same coefficient to the $n^4\log n$ term. The overlap term $S_{12}(n)$ is counted in both pieces and must be subtracted once by inclusion--exclusion; it contributes only at order $n^4$. Thus the logarithmic main term is obtained by computing one of the two symmetric pieces and doubling its contribution, while the constant-order term comes from keeping track of all three pieces together with $F_0(n)$.
	
	\begin{theorem}[two-term asymptotic expansion]\label{thm:main-asymptotic}
		Let $F(n)$ be the total number of lattice rectangles contained in $[0,n)\times[0,n)$. Then
		$
		F(n)=A\,n^4\log n+B\,n^4+o(n^4),
		$
		where
		\[
		A=\frac{4\log 2-1}{\pi^2}, \ \ 
		B=
		-\frac{4\log 2-1}{6}\frac{\zeta'(2)}{\zeta(2)^2}
		+\frac{24(4\log 2-1)\gamma+72\log^2 2-76\log 2+1}{12\pi^2}
		-\frac14.
		\]
	\end{theorem}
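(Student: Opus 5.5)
We follow the covering strategy outlined just before the statement. Write $F_1(n)=F(n)-F_0(n)$ as the sum over quadruples $(u,v,x,y)$ with $u>v\ge1$, $\gcd(u,v)=1$, $x\ge y\ge1$, $xu+yv\le n$, weighted by $\mult(x,y)(n-xu-yv)(n-xv-yu)$. Since $xu\le n$, every admissible quadruple has $u\le\sqrt n$ or $x\le\sqrt n$. In the notation of the sketch, the region $\{u,v\le\sqrt n\}$ coincides with $\{u\le\sqrt n\}$ because $v<u$, and likewise $\{a,b\le\sqrt n\}$ coincides with $\{x\le\sqrt n\}$. Hence $F_1=S_1+S_2-S_{12}$ holds exactly (this is \cref{lem:asym-cover}). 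The closed form of $F_0(n)$ contributes $\tfrac14 n^4+\tfrac1{12}n^4$ at leading order. The $-\tfrac14$ in $B$ should arise after this combines with the boundary and diagonal pieces, so we keep track of $F_0$ only to order $n^4$.

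\emph{The piece $S_1$ (small directions).} Fix a primitive $(u,v)$ with $u\le\sqrt n$. The inner sum over $(x,y)$ is a lattice-point sum of a quadratic weight over the triangle $\{x\ge y\ge1,\ xu+yv\le n\}$. By Euler--Maclaurin this sum equals $n^4 I(u,v)+O(n^3/u)$, where
\[
I(u,v)=4\!\!\iint_{\substack{\xi\ge\eta\ge0\\ \xi u+\eta v\le1}}\!\!(1-\xi u-\eta v)(1-\xi v-\eta u)\,d\xi\,d\eta .
\]
This integral is homogeneous of degree $-2$, namely $I(u,v)=u^{-2}h(v/u)$ for an explicit rational function $h$. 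The diagonal multiplier $2$ versus $4$ only produces lower-order error terms. Summing $u^{-2}h(v/u)$ over primitive pairs with $u\le\sqrt n$, we will use
\[
\sum_{\substack{v<u,\ \gcd(u,v)=1}}h(v/u)=\varphi(u)\int_0^1h+\text{(Möbius-controlled error)}
\]
together with
\[
\sum_{u\le X}\frac{\varphi(u)}{u^2}=\frac{6}{\pi^2}\Bigl(\log X+\gamma-\frac{\zeta'(2)}{\zeta(2)}\Bigr)+o(1).
\]
This yields $\tfrac{6}{\pi^2}\bigl(\int_0^1 h\bigr)\tfrac12 n^4\log n$ plus an $n^4$ constant involving $\gamma$ and $\zeta'(2)/\zeta(2)$. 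The error sum $\sum_{u\le\sqrt n}\varphi(u)\,n^3/u=O(n^{3.5})$ is $o(n^4)$. The Euler--Maclaurin second-order terms also need care: the boundary corrections of relative order $u/n$ summed over $u\le\sqrt n$ give $O(n^{3.5})$, which is fine. Finally, computing $\int_0^1h$ should produce $(4\log 2-1)/6$ up to normalization, so that doubling gives $A=(4\log2-1)/\pi^2$.

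\emph{The piece $S_2$ (small sides).} Fix $x\ge y$ with $x\le\sqrt n$, and sum over primitive $(u,v)$ in the triangle $\{u>v\ge1,\ xu+yv\le n\}$. Counting primitive points weighted by a smooth polynomial over a dilated polygon gives
\[
\frac{6}{\pi^2}n^4 J(x,y)+O(n^3\log n/x),
\]
where $J(x,y)=x^{-2}h(y/x)$ by the same duality (the weight is symmetric under exchanging the pairs). Summing over all $x\ge y$ with $x\le\sqrt n$, without any coprimality restriction, gives
\[
\frac6{\pi^2}n^4\Bigl(\int_0^1h\Bigr)\Bigl(\tfrac12\log n+\gamma'\Bigr),
\]
where now $\gamma'$ comes from $\sum_{x\le X}1/x=\log X+\gamma$, with the $x=y$ diagonal treated separately. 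This explains why $B$ contains $\gamma$ with the factor $4\log2-1$ but the $\zeta'$ term only once.

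\emph{The overlap $S_{12}$ and the constants.} The overlap is the set $u,x\le\sqrt n$. After rescaling $u=\sqrt n\,\alpha$ and $x=\sqrt n\,\beta$, it becomes a Riemann sum of a homogeneous density over the region $\alpha,\beta\le1$, $\alpha\beta\le1$, and it equals $c\,n^4+o(n^4)$ with $c$ given by a double integral. Because the kernel is $u^{-2}x^{-2}$-type, the leading terms cancel the $\log$ divergences of the two truncated pieces; this cancellation is where the $\log^2 2$ and $\log2$ constants emerge.

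The main obstacle is the bookkeeping of the $n^4$ constant: obtaining uniform $o(n^4)$ error terms in the primitive-point count for $S_2$ when $x$ is close to $\sqrt n$ (there the triangles are thin, of size about $\sqrt n$), and evaluating the constants of $h$ and of the overlap integral exactly. For the former we will first try the Möbius expansion of primitive counts with the error $O(n^3\log n/x)$ summed over the $O(x)$ values of $y$, which totals $O(n^{3.5}\log n)$. The explicit integrals will be verified against the numerical data.
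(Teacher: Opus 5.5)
Your skeleton is the paper's: the exact covering $\{u\le\sqrt n\}\cup\{x\le\sqrt n\}$ giving $F=F_0+S_1+S_2-S_{12}$, per-direction Euler--Maclaurin for $S_1$, M\"obius inversion plus Euler--Maclaurin for $S_2$, and a Riemann sum for $S_{12}$. Your error bounds suffice. The trivial bound $O(n^3\log n/x)$ in $S_2$ already sums to $O(n^{7/2}\log n)$, and the triangles there are not thin, since all their angles are at least $\pi/4$. The leading coefficient is also right: in your normalization $h(\lambda)=\frac{3-\lambda}{6(1+\lambda)}$, so $\int_0^1h=\frac{4\log 2-1}{6}$, and $S_1$ and $S_2$ each give $\frac{4\log2-1}{2\pi^2}n^4\log n$.

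The gap is $B$, which is the real content of the theorem. Deferring its evaluation to numerical data is not a proof. Worse, your accounting of the order-$n^4$ constants would produce a wrong $B$.

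In $S_1$, the ``M\"obius-controlled error'' $E(u)=\sum_{v<u,\,(u,v)=1}h(v/u)-\varphi(u)\int_0^1h$ is $O(\tau(u))$. Hence $n^4\sum_{u\le\sqrt n}E(u)/u^2$ converges to a nonzero constant times $n^4$; it is not negligible. Explicitly, $E(u)=\frac{2u}{3}\sum_{d\mid u}\frac{\mu(d)}{d}\bigl(H_{2u/d-1}-H_{u/d}-\log 2\bigr)$ for $u\ge2$, and $u=1$ needs a separate correction, since $\varphi(1)=1$ but there is no $v<1$.

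Likewise, in $S_2$ the constant is not just the $\gamma$ from $\sum 1/x$. One has $\sum_{1\le y<x}h(y/x)+\frac12h(1)-x\int_0^1h=\frac{2x}{3}\bigl(H_{2x-1}-H_x-\log2\bigr)+\frac14=O(1)$, which after the weight $x^{-2}$ again contributes at order $n^4$.

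Both evaluations hinge on $\sum_{m\ge1}(H_{2m-1}-H_m-\log2)/m=\log^22-\pi^2/6$, which your plan never identifies. These corrections produce $\frac{4\log^22}{\pi^2}-\frac12$ inside the $S_1$ constant and $\frac{4\log^22}{\pi^2}-\frac5{12}$ inside the $S_2$ constant.

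Your picture of the overlap is also wrong. $S_{12}$ does not cancel any logarithmic divergence: the logarithms of $S_1$ and $S_2$ add, which is why $A$ is twice the single-piece coefficient. Instead, $S_{12}$ is a bounded four-dimensional Riemann sum with primitive density $6/\pi^2$. It equals $B_{12}n^4+o(n^4)$ with $B_{12}=\frac{12}{\pi^2}\iint_T I$, where $I(x,y)$ is the inner area integral, given by different formulas on the two sides of $x+y=1$, and still has to be computed. The overlap is the only source of the $-\frac{76\log2}{12\pi^2}$ term in $B$, but it supplies only part of the $\log^22$ term.

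Finally, the $-\frac14$ cannot come from boundary or diagonal Euler--Maclaurin pieces, which you yourself bound by $O(n^{7/2})$. It is $\frac13-\frac12-\frac5{12}+\frac13$, coming from $F_0$, $S_1$, $S_2$ and $-S_{12}$ respectively.
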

	
	The proof, including the decomposition into $S_1$, $S_2$, and $S_{12}$ and the evaluation of the corresponding constants to order $o(n^4)$, is given in \cref{app:second-term}.
	A numerical validation of the constants using exact values at powers of two is reported in \cref{sec:large-values}.
	
	\section{Experiments}\label{sec:experiments}
	
	All benchmark implementations, reference Python versions, the CUDA code used for the largest one-value computations, and the compressed all-values data are available at \url{https://github.com/flykiller/lattice-rectangles}. The one-value algorithms were benchmarked in single-threaded C++ on one core of an Intel i7-13700 at 5.2\,GHz, compiled with \texttt{clang++} 20.1.8 and \texttt{-O3 -march=native}, using \texttt{std::int128\_t} throughout the tested range. The plots in \cref{fig:timings} report wall-clock seconds for \(n=2^{13},\dots,2^{30}\); each point is the median of repeated runs, with relative variation below \(0.5\%\).
	
	\subsection{Comparison of one-value algorithms}
	
	\Cref{fig:timings} compares all five implemented one-value algorithms: the quadratic primitive-direction sweep, the square-root decomposition, the cubic-root moment reduction, the ten-moment weighted floor-sum reduction, and the divisor-layer algorithm of \cref{sec:one-value}. The log--log plots are nearly linear with the expected slopes. The right panel normalizes each curve by its proved complexity, namely by \(n^2\), \(n^{3/2}\log n\), \(n^{4/3}\log n\), \(n\log^3 n\), and \(n\log^2 n\), respectively.
	
	The comparison shows the practical effect of the successive reductions. The square-root and cubic-root reorganizations already reduce the growth rate substantially; the weighted floor-sum kernel improves the one-value computation further; and the divisor-layer method gives the best asymptotic bound among the tested one-value algorithms. Its curve is included in the same plot because it is computing the same quantity \(F(n)\), with the M\"obius divisor layers replacing the older outer organization.
	
	\begin{figure}[t]
		\centering
		\includegraphics[width=0.9\textwidth]{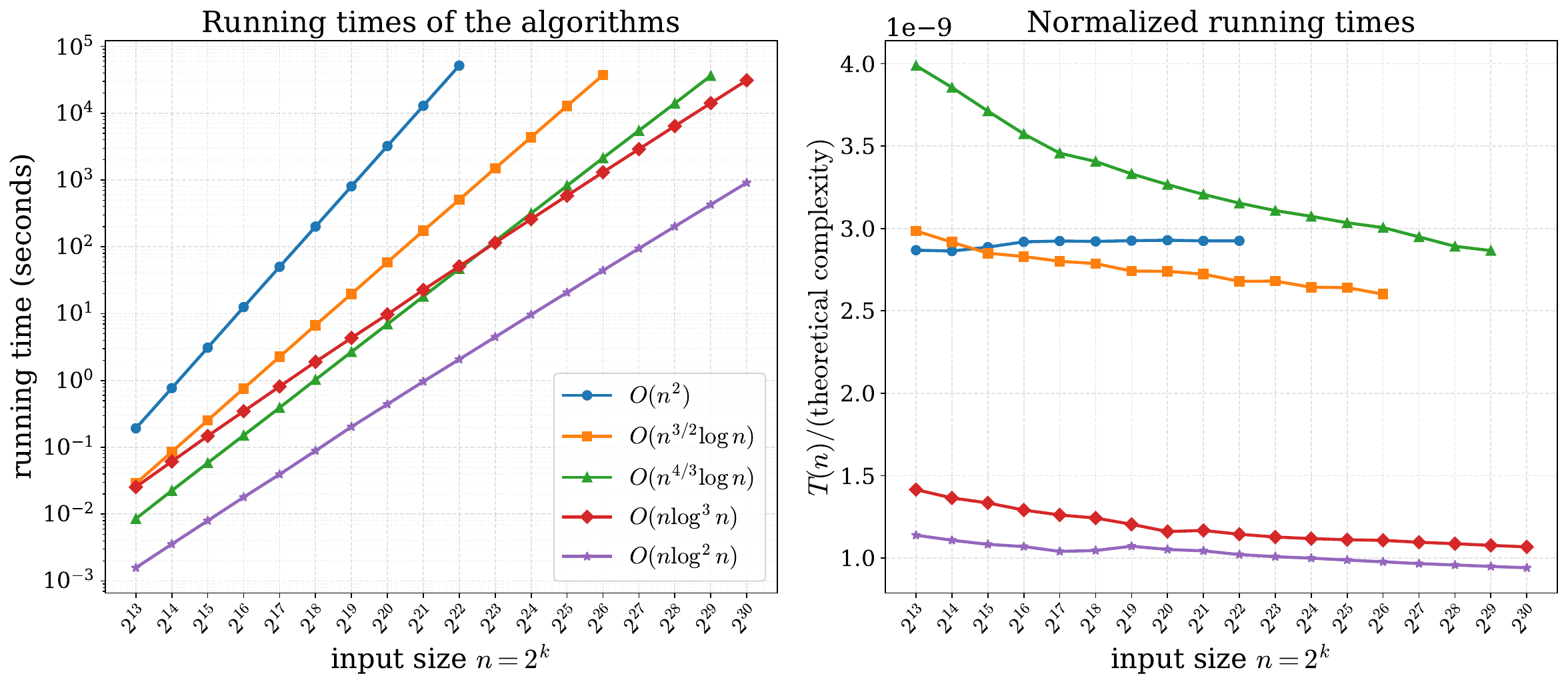}
		\caption{Wall-clock running times of the five tested one-value algorithms on inputs $n=2^{13},\dots,2^{30}$ (left) and after normalization by their respective theoretical complexities (right).}
		\label{fig:timings}
	\end{figure}
	
	\subsection{All-values experiment}
	
	We also benchmarked the all-values algorithm of \cref{sec:all-values}, which computes the whole table \(F(1),\ldots,F(N)\) in one run. The implementation was tested for \(N=2^5,2^6,\ldots,2^{24}\). In addition, we used the same \(O(N^{3/2})\) implementation to compute all exact values up to \(N=10^8\). The full table is too large to include here, but the results are available in compressed form in the GitHub repository mentioned above. In \cref{fig:all-values-timings}, the left panel gives the total wall-clock time, while the right panel plots the normalized quantity \(10^6T(N)/N^{3/2}\).
	
	The normalized curve is not perfectly flat: after the smallest inputs, it gradually increases in the larger range of the experiment. This does not contradict the \(O(N^{3/2})\) operation count. The all-values implementation maintains several arrays of length \(N\) and performs repeated strided updates and flushes through these arrays. For small \(N\), most of this working set stays in cache; as \(N\) grows, the coefficient arrays, event arrays, and temporary buffers exceed the faster cache levels, and the computation becomes increasingly limited by cache misses, TLB pressure, and memory bandwidth. Thus the observed growth of \(T(N)/N^{3/2}\) is best understood as a memory-hierarchy effect rather than as a change in the arithmetic complexity.
	
	This algorithm is also much less convenient to parallelize than the one-value divisor-layer computation. Its updates are not independent outputs: many iterations write into the same coefficient arrays, event arrays, or residue-class buffers. A parallel implementation must therefore either keep large private copies of these arrays for different workers and merge them afterwards, which greatly increases memory consumption, or let workers share the arrays and pay for synchronization or atomic writes, which creates many write conflicts. For this reason the experiment above is intended as a single-threaded baseline for the \(O(N^{3/2})\) all-values method.
	
	\begin{figure}[t]
		\centering
		\includegraphics[width=0.9\textwidth]{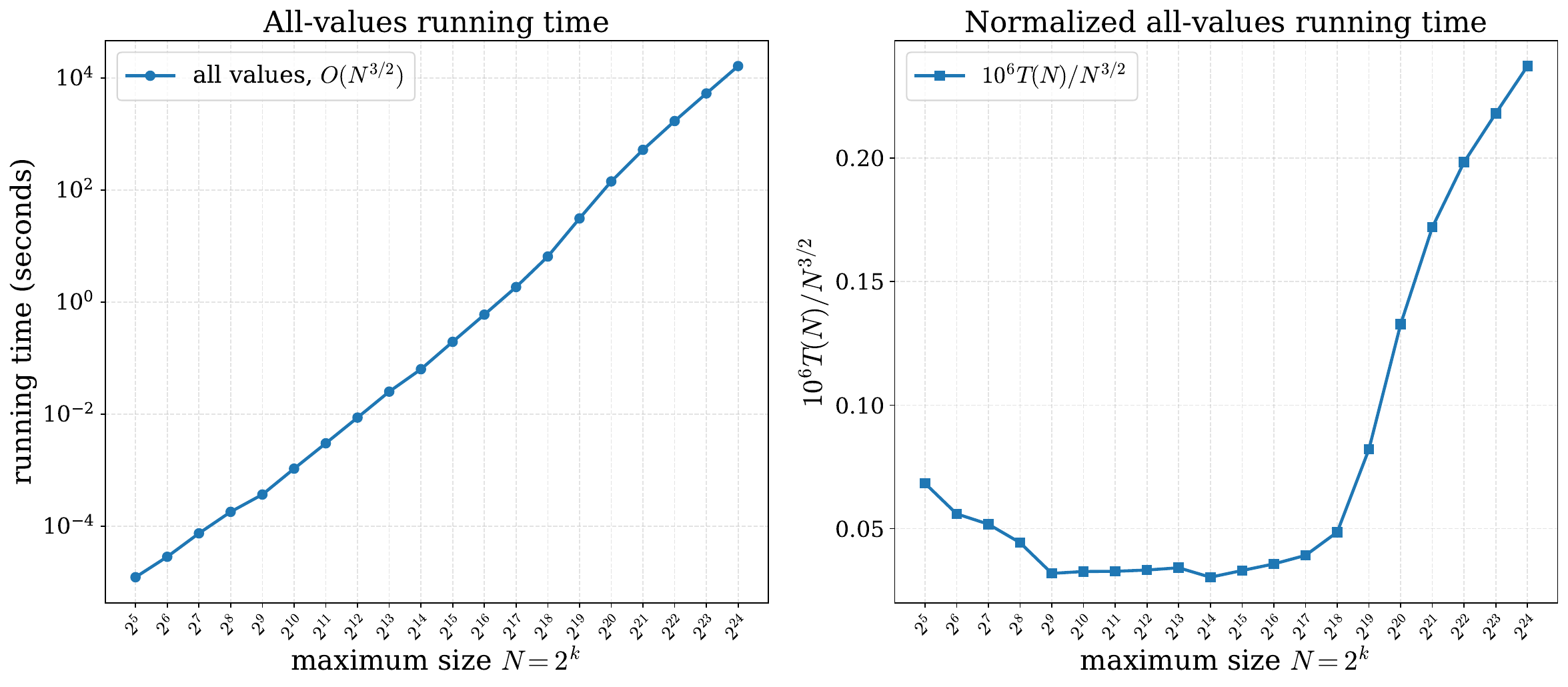}
		\caption{All-values experiment for \(N=2^5,\dots,2^{24}\): total running time (left) and normalized running time \(10^6T(N)/N^{3/2}\) (right).}
		\label{fig:all-values-timings}
	\end{figure}
	
	\subsection{Large exact values and asymptotic check}\label{sec:large-values}
	
	For convenience we record exact values of $F(2^k)$ for $1\le k\le40$. Smaller entries were cross-checked against slower exact methods whenever feasible. The largest entries, up to $2^{40}$, were obtained with a separate CUDA implementation of the divisor-layer one-value formula of \cref{sec:one-value}. At a high level, the CUDA code uses a segmented GPU sieve to build M\"obius-summed divisor layers, splits each layer into independent arithmetic-progression work items, processes these work items with persistent CUDA blocks, accumulates exact partial sums in a fixed-width 192-bit representation, and performs a final reduction. These values also serve as input for the asymptotic comparison below.

	\begin{table}[ht]
		\centering
		{
			\renewcommand{\arraystretch}{1.00}
			\setlength{\tabcolsep}{4pt}
			\resizebox{\textwidth}{!}{%
				\begin{tabular}{@{}rr|rr@{}}
					\toprule
					$k$ & $F(2^k)$ & $k$ & $F(2^k)$\\
					\midrule
					$1$ & $1$ & $21$ & $48931868439876126051425552$ \\
					$2$ & $44$ & $22$ & $821437615651793675198669752$ \\
					$3$ & $1192$ & $23$ & $13759445380252558103053449112$ \\
					$4$ & $27128$ & $24$ & $230014222561387209679445816240$ \\
					$5$ & $564120$ & $25$ & $3838037104619867210112196814232$ \\
					$6$ & $11114080$ & $26$ & $63933546372113490066412405897360$ \\
					$7$ & $211224480$ & $27$ & $1063335985124949941305863686097296$ \\
					$8$ & $3914221216$ & $28$ & $17659763652737469299382592232330696$ \\
					$9$ & $71182606216$ & $29$ & $292898424695610564494215857912343064$ \\
					$10$ & $1275797150128$ & $30$ & $4851850095158746095561485451592336296$ \\
					$11$ & $22602804487208$ & $31$ & $80277206323003614389748671287223855080$ \\
					$12$ & $396685572297544$ & $32$ & $1326796977975476403092689286862986516504$ \\
					$13$ & $6907621416632376$ & $33$ & $21906538476526319541299023010218991588136$ \\
					$14$ & $119492377263166968$ & $34$ & $361349204887120272089523042249821840571528$ \\
					$15$ & $2055404973525169560$ & $35$ & $5955100706397110811260922659812491131662432$ \\
					$16$ & $35182910663019639384$ & $36$ & $98057826153604756744005601368029402514221504$ \\
					$17$ & $599669468453524178752$ & $37$ & $1613344656077691850026984888873116366804460232$ \\
					$18$ & $10182597857710132553464$ & $38$ & $26524225499163321460061315970545176007812869616$ \\
					$19$ & $172327747508964813792096$ & $39$ & $435758984017337173124103405065600778830350047408$ \\
					$20$ & $2907742868855598433202344$ & $40$ & $7154085760768979246024995359851578213153827420872$ \\
					\bottomrule
				\end{tabular}%
			}
		}
		\caption{Exact values of $F(2^k)$ for $1\le k\le40$.}
		\label{tab:powers-of-two-values}
	\end{table}
	
	\medskip
	\noindent\textbf{Two-term fit.}
	With the numerical constants \(A=(4\log 2-1)/\pi^2\) and \(B\approx -0.084567061533\), evaluated in the appendix, it is natural to inspect
	$
	\frac{F(n)-A n^4\log n}{n^4}.
	$
	For the computed values \(n=2^k\), this quantity appears to stabilize rapidly near the predicted constant \(B\). Thus the data are fully consistent with the two-term asymptotic \(F(n)=A n^4\log n+B n^4+o(n^4)\), and provide a simple numerical check.

	\section{Summary and future work}
	
	We obtained exact one-value algorithms of complexity $O(n^2)$ for the classical primitive-direction sweep, $O(n^{3/2}\log n)$ for the square-root decomposition, $O(n^{4/3}\log n)$ for the cubic-root decomposition with floor moments, $O(n\log^3 n)$ for the ten-moment weighted floor-sum reduction, and $O(n\log^2 n)$ for the divisor-layer square-root-cover algorithm. We also obtained an all-values algorithm computing $F(1),\ldots,F(N)$ in $O(N^{3/2})$ time and $O(N)$ memory. The key structural feature of the fastest stages is the collapse of the geometric summation to one-dimensional arithmetic kernels or exact-threshold coefficient arrays: a six-state floor-moment kernel in the $O(n^{4/3}\log n)$ and $O(n\log^2 n)$ algorithms, a ten-state Euclidean kernel in the $O(n\log^3 n)$ reduction, and arithmetic-progression event updates in the all-values algorithm. \Cref{app:engineering} records implementation-level simplifications that preserve the asymptotic complexity while improving the practical running time by more than a factor of three.
	
	Natural directions include further reducing the one-value complexity below $O(n\log^2 n)$, sharpening the remainder beyond \(o(n^4)\), extending the methods to other lattice objects and higher dimensions, and improving the all-values construction below the square-root barrier. A genuinely sublinear algorithm for a single value $F(n)$ would likely require a different idea: even in the divisor-layer formulation there are $\Theta(n)$ natural outer scales to account for, so an \(\Omega(n)\)-type barrier appears heuristically unavoidable unless one finds additional global cancellations or a new transform-level description of the count.
	
	\newpage
	
	\appendix

	\section{Algorithmic appendix: expanded pseudocode}\label{app:pseudocode}
	
	This appendix records more explicit procedural versions of the algorithmic stages. The aim is not to specify low-level implementation details, but to make the control flow of each algorithm and the interfaces between the geometric summation and the one-dimensional kernels transparent. Throughout this appendix, the three directions $(0,1)$, $(1,0)$, and $(1,1)$ are excluded from the loops and are handled separately through the closed-form term $F_0(n)$.
	
	\setlength{\abovedisplayskip}{5pt}
	\setlength{\belowdisplayskip}{5pt}
	\algrenewcommand\algorithmicindent{1.1em}
	\captionsetup[algorithm]{font=small,skip=4pt}
	\makeatletter
	\renewcommand\ALG@name{Algorithm}
	\makeatother
	
	\begin{algorithm}[H]
		\small
		\caption{Classical quadratic baseline.}
		\begin{algorithmic}[1]
			\State Initialize $ans\gets 0$.
			\For{each primitive $(u,v)$ with $u>v>0$}
			\State determine $y_{\min}\le y\le y_{\max}$ from \cref{eq:basic-constraints}
			\For{$y=y_{\min},\dots,y_{\max}$}
			\State determine the induced interval $x_{\min}(y)\le x\le x_{\max}(y)$
			\State if nonempty, add the closed polynomial sum of $\mult(x,y)$ times \eqref{eq:placements} over $x$
			\EndFor
			\EndFor
			\State \Return $F_0(n)+ans$.
		\end{algorithmic}
	\end{algorithm}
	
	\begin{algorithm}[H]
		\small
		\caption{Square-root decomposition.}
		\begin{algorithmic}[1]
			\State Set $B\gets \Floor{\sqrt n}$ and $ans\gets 0$.
			\For{each primitive $(u,v)$ with $u\le B$ and $u>v>0$}
			\State evaluate $c_{u,v}(n)$ by the baseline routine and add $c_{u,v}(n)$
			\EndFor
			\For{each $(x,y)$ with $x\ge y\ge 1$ and $x\le \Floor{n/B}$}
			\State pass to the dual $(x,y)$-parametrization and determine $L\le v\le U$
			\State accumulate the primitive contribution on $[L,U]$ by coprime prefix sums, including the multiplier $\mult(x,y)$
			\EndFor
			\State \Return $F_0(n)+ans$.
		\end{algorithmic}
	\end{algorithm}
	
	\begin{algorithm}[H]
		\small
		\caption{Cubic-root decomposition.}
		\begin{algorithmic}[1]
			\State Set $B\gets \Floor{n^{2/3}}$ and $ans\gets 0$.
			\For{each primitive $(u,v)$ with $u\le B$ and $u>v>0$}
			\State rewrite $c_{u,v}(n)$ as a fixed linear combination of six moments
			\State evaluate the required values $\mathbf H(\cdot;\cdot,\cdot,\cdot)$ by \cref{app:six-moments} and add $c_{u,v}(n)$
			\EndFor
			\For{$u=B+1,\dots,n$}
			\State set $x_{\max}\gets \left\lfloor n/u\right\rfloor$
			\For{each $(x,y)$ with $1\le y\le x\le x_{\max}$}
			\State set $m\gets 2$ if $x=y$, and $m\gets 4$ otherwise
			\State set $v_{\max}\gets \min\!\left(u-1,\left\lfloor\frac{n-xu}{y}\right\rfloor\right)$
			\State if $v_{\max}<1$, continue to the next pair $(x,y)$
			\State evaluate $(C_0,C_1,C_2)$ on $1\le v\le v_{\max}$ by coprime prefix sums
			\State add $m\bigl((n-xu)(n-yu)C_0-(x(n-xu)+y(n-yu))C_1+xyC_2\bigr)$
			\EndFor
			\EndFor
			\State \Return $F_0(n)+ans$.
		\end{algorithmic}
	\end{algorithm}
	
	\begin{algorithm}[H]
		\small
		\caption{Ten-moment weighted-floor-sum algorithm.}
		\begin{algorithmic}[1]
			\State Precompute the squarefree divisor lists
			$D(u)=\{(d,\mu(d)): d\mid u,\ d\text{ squarefree}\}$ for all $u\le n$.
			\State Initialize $ans\gets 0$.
			\For{$u=2,\dots,n$ and each $y$ with $1\le y\le \lfloor n/u\rfloor$}
			\For{each $(d,\mu(d))\in D(u)$ with $d<u$ and $y(u+d)\le n$}
			\State form the M\"obius-expanded quantity $R_{u,y,d}$
			\State split $R_{u,y,d}$ into its polynomial and weighted floor-sum parts
			\State evaluate the weighted queries by \cref{app:ten-moments}
			\State add $\mu(d)R_{u,y,d}$ to $ans$
			\EndFor
			\EndFor
			\State \Return $F_0(n)+ans$.
		\end{algorithmic}
	\end{algorithm}
	
	\begin{algorithm}[H]
		\small
		\caption{Divisor-layer one-value algorithm.}
		\begin{algorithmic}[1]
			\State Precompute $\mu(d)$ for $d\le n$.
			\State Initialize $ans\gets 0$.
			\For{$d=1,\dots,n$ with $\mu(d)\ne0$}
			\State set $N_d\gets\lfloor n/d\rfloor$, $B\gets\lfloor\sqrt{N_d}\rfloor$, and $S\gets0$
			\Comment{use $\{x\le B\}\sqcup\{a\le B,\ x>B\}$}
			\For{$1\le q<p\le B$}
			\State evaluate the moments $M_{ij}(p,q;N_d)$ and $M^{\cap}_{ij}(p,q;N_d,B)$ for $0\le i+j\le2$ by \cref{app:six-moments}
			\State form $\Phi_{p,q}^{(d)}$, $\Delta_{p,q}^{(d)}$, $\Phi_{p,q}^{(d),\cap}$, and $\Delta_{p,q}^{(d),\cap}$
			\State add $8\Phi_{p,q}^{(d)}-6\Delta_{p,q}^{(d)}-4\Phi_{p,q}^{(d),\cap}+2\Delta_{p,q}^{(d),\cap}$ to $S$
			\EndFor
			\For{$t=1,\dots,B$}
			\State add the closed diagonal-side sum $C_t^{(d)}$ to $S$
			\EndFor
			\State add $\mu(d)S$ to $ans$
			\EndFor
			\State \Return $F_0(n)+ans$.
		\end{algorithmic}
	\end{algorithm}
	
	\begin{algorithm}[H]
		\small
		\caption{All-values event-array algorithm.}
		\begin{algorithmic}[1]
			\State Set $B\gets\lfloor\sqrt N\rfloor$ and initialize $G_0,G_1,G_2$ to zero.
			\State Insert the arithmetic-progression updates from the disjoint cover $\{x\le B\}\sqcup\{a\le B,\ x>B\}$.
			\State Flush the step buffers to materialize all entries of $G_0,G_1,G_2$.
			\State Precompute $\mu(d)$ for $d\le N$ and initialize $E_0,E_1,E_2$ to zero.
			\For{$d=1,\dots,N$ and all multiples $t=dL\le N$}
			\State add $\mu(d)G_0[L]$, $\mu(d)dG_1[L]$, and $\mu(d)d^2G_2[L]$ to $E_0[t]$, $E_1[t]$, and $E_2[t]$
			\EndFor
			\State Prefix the three event arrays to obtain $P_0,P_1,P_2$.
			\For{$n=1,\dots,N$}
			\State output $F(n)=F_0(n)+n^2P_0(n)-nP_1(n)+P_2(n)$.
			\EndFor
		\end{algorithmic}
	\end{algorithm}

	The squarefree divisor lists used in the ten-moment algorithm can be generated once by a sieve for the smallest prime factor together with the recursive construction of all squarefree divisors of each $u$. Their total size up to $n$ is $\sum_{u\le n}2^{\omega(u)}=O(n\log n)$, so this preprocessing does not affect the $O(n\log^3 n)$ complexity bound.
	
	\section{A six-moment weighted kernel}\label{app:six-moments}
	
	This appendix records the floor-moment kernel used in the small-direction part of the $O(n^{4/3}\log n)$ algorithm. In the main text, each contribution $c_{u,v}(n)$ with $2\le u\le B$ and $1\le v<u$ is reduced to a fixed linear combination of floor sums of the form $\sum x^p\lfloor(ax+b)/m\rfloor^q$ with $p+q\le 3$; the six-moment family below is exactly the closure of those sums under the Euclidean recursion.
	
	\subsection{Definition of the family}
	
	For integers $n\ge 0$, $m\ge 1$, and $a,b\ge 0$, define
	\[
	\mathcal H_{p,q}(n;m,a,b)
	:=
	\sum_{x=0}^{n-1} x^p
	\Floor{\frac{ax+b}{m}}^q,
	\qquad q\ge 1,\quad p+q\le 3.
	\]
	Equivalently, we work with the six quantities
	\[
	\mathbf H(n;m,a,b)
	:=
	(\mathcal H_{0,1},\ \mathcal H_{1,1},\ \mathcal H_{2,1},\ \mathcal H_{0,2},\ \mathcal H_{1,2},\ \mathcal H_{0,3}).
	\]
	We also use the power sums
	\[
	P_r(n):=\sum_{x=0}^{n-1} x^r,
	\qquad r=0,1,2,3,
	\]
	namely
	\[
	P_0(n)=n,\qquad
	P_1(n)=\frac{n(n-1)}{2},\qquad
	P_2(n)=\frac{n(n-1)(2n-1)}{6},\qquad
	P_3(n)=P_1(n)^2.
	\]
	
	\subsection{Base case}
	
	If $a=0$, then $\Floor{(ax+b)/m}=c:=\Floor{b/m}$ is constant, and therefore
	\[
	\mathbf H(n;m,0,b)
	=
	\bigl(cP_0,\ cP_1,\ cP_2,\ c^2P_0,\ c^2P_1,\ c^3P_0\bigr),
	\]
	where $P_r=P_r(n)$.
	
	\subsection{Affine step}
	
	Assume $a\ge m$ or $b\ge m$. Write
	\[
	a=Am+a',\qquad b=Bm+b',
	\qquad 0\le a',b'<m.
	\]
	Then
	\[
	\Floor{\frac{ax+b}{m}}
	=
	Ax+B+\Floor{\frac{a'x+b'}{m}}.
	\]
	Hence the six moments for $(m,a,b)$ are explicit linear combinations of the six moments for $(m,a',b')$ and of the power sums.
	
	Let
	\[
	\mathbf B:=\mathbf H(n;m,a',b')=(b_{01},b_{11},b_{21},b_{02},b_{12},b_{03}),
	\]
	and abbreviate
	\[
	P_r:=P_r(n).
	\]
	Then the affine reduction gives
	\begin{align*}
		\mathcal H_{0,1}(n;m,a,b)
		&=
		b_{01}+AP_1+BP_0,\\
		\mathcal H_{1,1}(n;m,a,b)
		&=
		b_{11}+AP_2+BP_1,\\
		\mathcal H_{2,1}(n;m,a,b)
		&=
		b_{21}+AP_3+BP_2,\\[0.5ex]
		\mathcal H_{0,2}(n;m,a,b)
		&=
		b_{02}+2Ab_{11}+2Bb_{01}+A^2P_2+2ABP_1+B^2P_0,\\
		\mathcal H_{1,2}(n;m,a,b)
		&=
		b_{12}+2Ab_{21}+2Bb_{11}+A^2P_3+2ABP_2+B^2P_1,\\[0.5ex]
		\mathcal H_{0,3}(n;m,a,b)
		&=
		b_{03}
		+3Ab_{12}+3Bb_{02}
		+3A^2b_{21}+6ABb_{11}+3B^2b_{01}\\
		&\qquad
		+A^3P_3+3A^2BP_2+3AB^2P_1+B^3P_0.
	\end{align*}
	
	\subsection{Reciprocal step}
	
	Now assume
	\[
	0\le a,b<m.
	\]
	Set
	\[
	Y:=\Floor{\frac{a(n-1)+b}{m}}.
	\]
	If $Y=0$, then all six moments vanish.
	
	Otherwise define
	\[
	g(t):=\Floor{\frac{mt+(m-b-1)}{a}},
	\qquad 0\le t\le Y-1.
	\]
	As usual, the reciprocal transformation exchanges the graph of
	\[
	f(x):=\Floor{\frac{ax+b}{m}}
	\]
	with the complementary staircase determined by $g$, and therefore expresses the moments for $(n;m,a,b)$ through the moments for $(Y;a,m,m-b-1)$.
	
	Write
	\[
	\mathbf G
	=
	\mathbf H(Y;a,m,m-b-1)
	=
	(g_{01},g_{11},g_{21},g_{02},g_{12},g_{03}),
	\]
	and let
	\[
	P_r:=P_r(n),\qquad Q_r:=P_r(Y).
	\]
	Then
	\begin{align*}
		\mathcal H_{0,1}(n;m,a,b)
		&=
		P_0Y-Q_0-g_{01},\\
		\mathcal H_{1,1}(n;m,a,b)
		&=
		\frac{2P_1Y-g_{01}-g_{02}}{2},\\
		\mathcal H_{2,1}(n;m,a,b)
		&=
		\frac{6P_2Y-g_{01}-3g_{02}-2g_{03}}{6},\\[0.5ex]
		\mathcal H_{0,2}(n;m,a,b)
		&=
		P_0Y^2-Q_0-g_{01}-2Q_1-2g_{11},\\
		\mathcal H_{1,2}(n;m,a,b)
		&=
		\frac{2P_1Y^2-g_{01}-g_{02}-2g_{11}-2g_{12}}{2},\\[0.5ex]
		\mathcal H_{0,3}(n;m,a,b)
		&=
		P_0Y^3-Q_0-g_{01}-3Q_1-3g_{11}-3Q_2-3g_{21}.
	\end{align*}
	
	\begin{lemma}\label{lem:six-moment-kernel}
		The six moments
		\[
		\mathcal H_{0,1},\ \mathcal H_{1,1},\ \mathcal H_{2,1},\ \mathcal H_{0,2},\ \mathcal H_{1,2},\ \mathcal H_{0,3}
		\]
		admit a recursive evaluation in $O(\log m)$ arithmetic operations.
	\end{lemma}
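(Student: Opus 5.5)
We prove the lemma by induction on the Euclidean pair $(m,a)$, using the three cases written down above: the base case, the affine step, and the reciprocal step. The recursive procedure $\mathrm{Eval}(n;m,a,b)$ returns the vector $\mathbf H(n;m,a,b)$ and works as follows. If $a=0$ it uses the base case. If $a\ge m$ or $b\ge m$ it reduces $(a,b)$ modulo $m$, calls itself on $(n;m,a',b')$, and applies the affine formulas. Otherwise $0\le a,b<m$; it computes $Y$, returns zero if $Y=0$, and else calls itself on $(Y;a,m,m-b-1)$ and applies the reciprocal formulas. Correctness of each transition is exactly the content of \cref{lem:six-affine,lem:six-reciprocal}, with the explicit coefficients displayed above. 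So the remaining work is to check that the recursion terminates, bound its depth, and bound the work per level.

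\emph{Work per level.} Each level performs a constant number of arithmetic operations. There are $O(1)$ divisions and modular reductions to form $A,B,a',b',Y$, $O(1)$ evaluations of the closed-form power sums $P_r,Q_r$ with $r\le3$, and a fixed linear combination of six returned values with constant coefficients. All divisions by $2$ and $6$ in the reciprocal formulas are exact, since they arise from regrouping integer sums.

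\emph{Depth.} Pair the steps. After an affine step we have $a'<m$. The following reciprocal call has new modulus $a'$ and new slope $m$, and its immediate affine step replaces the slope by $m\bmod a'$. So one affine-plus-reciprocal cycle maps the pair $(m,a)$ to $(a\bmod m,\ m\bmod(a\bmod m))$, which is exactly one round of the ordinary Euclidean algorithm on $(m,a)$. The standard Lamé/Fibonacci argument then shows there are $O(\log m)$ cycles before the slope becomes $0$ (base case) or $Y=0$. The parameter $n$ only decreases: $Y\le n$ because $a<m$. Hence no extra levels arise from $n$, and the overall bound is $O(\log m)$ operations. The same argument also covers the initial call when $a\ge m$, which costs one extra affine step.

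\emph{Main obstacle: the $Y=0$ and degenerate cases.} We need to check that $Y=0$ indeed forces all six moments to vanish. This holds because $f(x)=0$ on the whole range when $a(n-1)+b<m$ and $a,b<m$. We also need the reciprocal parameters $(Y;a,m,m-b-1)$ to satisfy the kernel's standing hypotheses, namely $a\ge1$ and $m-b-1\ge0$. Both hold since $a=0$ is intercepted by the base case first and $b<m$. Once these cases are handled, the Euclidean-depth argument completes the proof.
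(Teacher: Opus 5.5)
Your proposal is correct and follows the same route as the paper. You alternate the affine reduction of $(a,b)$ modulo $m$ with the reciprocal swap $(n;m,a,b)\mapsto(Y;a,m,m-b-1)$, spend $O(1)$ work per level on the six stored moments and the power sums, and bound the depth by the Euclidean algorithm. Your pairing of steps makes explicit a point the paper leaves implicit: the successive moduli run through the Euclidean remainder sequence of $(m,a)$, so Lam\'e's bound applies. The paper's one-line remark that the modulus strictly decreases would, on its own, only give depth $O(m)$.
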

	
	\begin{proof}
		The affine step reduces $(a,b)$ modulo $m$. The reciprocal step replaces $(n;m,a,b)$ by
		\[
		(Y;a,m,m-b-1),
		\qquad
		Y=\Floor{\frac{a(n-1)+b}{m}},
		\]
		so the first modulus strictly decreases from $m$ to $a<m$. Hence the recursion has Euclidean depth $O(\log m)$. Each step performs only $O(1)$ arithmetic operations on the six stored moments and the power sums.
	\end{proof}
	
	\subsection{Complexity bound}
	
	Each recursive call either reduces $a$ and $b$ modulo $m$ (the affine step) or swaps the roles of $(a,m)$ with strictly smaller Euclidean parameters (the reciprocal step). Hence the recursion depth is $O(\log m)=O(\log n)$. Every step performs only $O(1)$ arithmetic operations on a constant-size family of moments.
	
	\section{A ten-moment weighted kernel}\label{app:ten-moments}
	
	This appendix records a compact floor-moment kernel sufficient for the weighted reduction of \cref{eq:main-floor-family}. In the main text, each outer term $R_{u,y,d}$ is reduced to a constant number of weighted floor sums whose polynomial weights have total degree at most $4$; the ten moments below are the minimal closed family we use to evaluate those queries recursively.
	
	\subsection{Definition of the family}
	
	For integers $n\ge 0$, $m\ge 1$, and $a,b\ge 0$, define
	\[
	\mathcal H_{p,q}(n;m,a,b)
	:=
	\sum_{x=0}^{n-1} x^p
	\Floor{\frac{ax+b}{m}}^q,
	\qquad q\ge 1,\quad p+q\le 4.
	\]
	Equivalently, we work with the ten quantities
	\[
	\mathbf H(n;m,a,b):= (\mathcal H_{0,1},\ \mathcal H_{1,1},\ \mathcal H_{2,1},\ \mathcal H_{3,1},\ 
	\mathcal H_{0,2},\ \mathcal H_{1,2},\ \mathcal H_{2,2},\ 
	\mathcal H_{0,3},\ \mathcal H_{1,3},\ 
	\mathcal H_{0,4}).
	\]
	We also use the power sums
	\[
	P_r(n):=\sum_{x=0}^{n-1} x^r,
	\qquad r=0,1,2,3,4,
	\]
	namely
	\[
	P_0(n)=n,\qquad
	P_1(n)=\frac{n(n-1)}{2},\qquad
	P_2(n)=\frac{n(n-1)(2n-1)}{6},
	\]
	\[
	P_3(n)=\left(\frac{n(n-1)}{2}\right)^2,
	\qquad
	P_4(n)=\frac{n(n-1)(2n-1)(3n^2-3n-1)}{30}.
	\]

	\subsection{Base case}
	If $a=0$, then $\Floor{(ax+b)/m}=c:=\Floor{b/m}$ is constant, and therefore
	\[
	\mathbf H(n;m,0,b)
	=
	\bigl(
	cP_0,\ cP_1,\ cP_2,\ cP_3,\ 
	c^2P_0,\ c^2P_1,\ c^2P_2,\ 
	c^3P_0,\ c^3P_1,\ 
	c^4P_0
	\bigr),
	\]
	where $P_r=P_r(n)$.
	
	\subsection{Affine step}
	
	Assume $a\ge m$ or $b\ge m$.  Write
	\[
	a=Am+a',\qquad b=Bm+b',
	\qquad 0\le a',b'<m.
	\]
	Then
	\[
	\Floor{\frac{ax+b}{m}}
	=
	Ax+B+\Floor{\frac{a'x+b'}{m}}.
	\]
	Hence the ten moments for $(m,a,b)$ are explicit linear combinations of the ten moments for $(m,a',b')$ and of the power sums.
	
	Let
	\[
	\mathbf B:=\mathbf H(n;m,a',b')
	=
	(b_{01},b_{11},b_{21},b_{31},b_{02},b_{12},b_{22},b_{03},b_{13},b_{04}),
	\]
	and abbreviate
	\[
	P_r:=P_r(n).
	\]
	Then the affine reduction gives
	\begin{align*}
		\mathcal H_{0,1}(n;m,a,b)
		&=
		b_{01}+AP_1+BP_0,\\
		\mathcal H_{1,1}(n;m,a,b)
		&=
		b_{11}+AP_2+BP_1,\\
		\mathcal H_{2,1}(n;m,a,b)
		&=
		b_{21}+AP_3+BP_2,\\
		\mathcal H_{3,1}(n;m,a,b)
		&=
		b_{31}+AP_4+BP_3,\\[0.5ex]
		\mathcal H_{0,2}(n;m,a,b)
		&=
		b_{02}+2Ab_{11}+2Bb_{01}+A^2P_2+2ABP_1+B^2P_0,\\
		\mathcal H_{1,2}(n;m,a,b)
		&=
		b_{12}+2Ab_{21}+2Bb_{11}+A^2P_3+2ABP_2+B^2P_1,\\
		\mathcal H_{2,2}(n;m,a,b)
		&=
		b_{22}+2Ab_{31}+2Bb_{21}+A^2P_4+2ABP_3+B^2P_2,\\[0.5ex]
		\mathcal H_{0,3}(n;m,a,b)
		&=
		b_{03}
		+3Ab_{12}+3Bb_{02}
		+3A^2b_{21}+6ABb_{11}+3B^2b_{01}\\
		&\qquad
		+A^3P_3+3A^2BP_2+3AB^2P_1+B^3P_0,\\
		\mathcal H_{1,3}(n;m,a,b)
		&=
		b_{13}
		+3Ab_{22}+3Bb_{12}
		+3A^2b_{31}+6ABb_{21}+3B^2b_{11}\\
		&\qquad
		+A^3P_4+3A^2BP_3+3AB^2P_2+B^3P_1,\\[0.5ex]
		\mathcal H_{0,4}(n;m,a,b)
		&=
		b_{04}
		+4Ab_{13}+4Bb_{03}
		+6A^2b_{22}+12ABb_{12}+6B^2b_{02}\\
		&\qquad
		+4A^3b_{31}+12A^2Bb_{21}+12AB^2b_{11}+4B^3b_{01}\\
		&\qquad
		+A^4P_4+4A^3BP_3+6A^2B^2P_2+4AB^3P_1+B^4P_0.
	\end{align*}
	
	\subsection{Reciprocal step}
	
	Now assume
	\[
	0\le a,b<m.
	\]
	Set
	\[
	Y:=\Floor{\frac{a(n-1)+b}{m}}.
	\]
	If $Y=0$, then all ten moments vanish.
	
	Otherwise define
	\[
	g(t):=\Floor{\frac{mt+(m-b-1)}{a}},
	\qquad 0\le t\le Y-1.
	\]
	As usual, the reciprocal transformation exchanges the graph of
	\[
	f(x):=\Floor{\frac{ax+b}{m}}
	\]
	with the complementary staircase determined by $g$, and therefore expresses the moments for $(n;m,a,b)$ through the moments for $(Y;a,m,m-b-1)$.
	
	Write
	\[
	\mathbf G
	=
	\mathbf H(Y;a,m,m-b-1)
	=
	(g_{01},g_{11},g_{21},g_{31},g_{02},g_{12},g_{22},g_{03},g_{13},g_{04}),
	\]
	and let
	\[
	P_r:=P_r(n),
	\qquad
	Q_r:=P_r(Y).
	\]
	Then
	\begin{align*}
		\mathcal H_{0,1}(n;m,a,b)
		&=
		P_0Y-Q_0-g_{01},\\
		\mathcal H_{1,1}(n;m,a,b)
		&=
		\frac{2P_1Y-g_{01}-g_{02}}{2},\\
		\mathcal H_{2,1}(n;m,a,b)
		&=
		\frac{6P_2Y-g_{01}-3g_{02}-2g_{03}}{6},\\
		\mathcal H_{3,1}(n;m,a,b)
		&=
		\frac{4P_3Y-g_{02}-2g_{03}-g_{04}}{4},\\[0.5ex]
		\mathcal H_{0,2}(n;m,a,b)
		&=
		P_0Y^2-Q_0-g_{01}-2Q_1-2g_{11},\\
		\mathcal H_{1,2}(n;m,a,b)
		&=
		\frac{2P_1Y^2-g_{01}-g_{02}-2g_{11}-2g_{12}}{2},\\
		\mathcal H_{2,2}(n;m,a,b)
		&=
		\frac{6P_2Y^2-g_{01}-3g_{02}-2g_{03}-2g_{11}-6g_{12}-4g_{13}}{6},\\[0.5ex]
		\mathcal H_{0,3}(n;m,a,b)
		&=
		P_0Y^3-Q_0-g_{01}-3Q_1-3g_{11}-3Q_2-3g_{21},\\
		\mathcal H_{1,3}(n;m,a,b)
		&=
		\frac{2P_1Y^3-g_{01}-g_{02}-3g_{11}-3g_{12}-3g_{21}-3g_{22}}{2},\\
		\mathcal H_{0,4}(n;m,a,b)
		&=
		P_0Y^4-Q_0-g_{01}-4Q_1-4g_{11}-6Q_2-6g_{21}-4Q_3-4g_{31}.
	\end{align*}
	
	\begin{lemma}\label{lem:ten-moment-kernel}
		The ten moments
		\[
		\mathcal H_{0,1},\ \mathcal H_{1,1},\ \mathcal H_{2,1},\ \mathcal H_{3,1},\
		\mathcal H_{0,2},\ \mathcal H_{1,2},\ \mathcal H_{2,2},\
		\mathcal H_{0,3},\ \mathcal H_{1,3},\
		\mathcal H_{0,4}
		\]
		admit a recursive evaluation in $O(\log m)$ arithmetic operations.
	\end{lemma}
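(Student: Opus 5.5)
The plan mirrors the proof of \cref{lem:six-moment-kernel}: define a recursive procedure on the state $(n;m,a,b)$ that returns the full ten-vector $\mathbf H(n;m,a,b)$. Show that each call does $O(1)$ work and recurses at most once, and that the moduli along the recursion follow the Euclidean algorithm on $(m,a)$.

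\emph{Step 1: the three cases.}
\begin{itemize}
\item If $a=0$, return the base-case vector. It uses only $c=\Floor{b/m}$ and the power sums $P_0,\dots,P_4$, so it costs $O(1)$.
\item If $a\ge m$ or $b\ge m$, perform the affine step. Compute $A,B,a',b'$ by two divisions, make one recursive call on $(n;m,a',b')$, and assemble the result with the displayed affine formulas. Each formula comes from expanding $(Ax+B+f'(x))^q$ with $f'(x)=\Floor{(a'x+b')/m}$. Every term $x^{p+i}f'(x)^{q-i-j}$ with $q-i-j\ge1$ has total degree at most $p+q\le4$, so it lies in the family. Terms with $q-i-j=0$ are power sums $P_r$ with $r\le4$. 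Closure is therefore immediate, and the step costs $O(1)$ operations.
\item Otherwise $0\le a,b<m$ with $a\ge1$. Perform the reciprocal step: compute $Y$, return zero if $Y=0$, and otherwise make one call on $(Y;a,m,m-b-1)$ and assemble the result with the displayed reciprocal formulas.
\end{itemize}

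\emph{Step 2: justifying the reciprocal formulas.} Write $f(x)^q=\sum_{t=0}^{f(x)-1}\bigl((t+1)^q-t^q\bigr)$ and use $f(x)\ge t+1\iff x>g(t)$. Exchanging the order of summation gives
\[
\mathcal H_{p,q}=\sum_{t=0}^{Y-1}\bigl((t+1)^q-t^q\bigr)\sum_{x=g(t)+1}^{n-1}x^p .
\]
The inner sum equals $P_p(n)-P_p(g(t)+1)$, and $P_p(g+1)$ is a polynomial of degree $p+1$ in $g$ with no constant term. The factor $(t+1)^q-t^q$ has degree $q-1$ in $t$. Each product is therefore a combination of $t^{i}g(t)^{j}$ with $j\ge1$ and $i+j\le (q-1)+(p+1)=p+q\le4$, which lies in the family for $(Y;a,m,m-b-1)$. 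The remaining terms are power sums $Q_r$ and $P_p(n)\cdot(\text{polynomial in }Y)$. Matching coefficients gives the displayed identities; this is only a routine check.

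\emph{Step 3: counting operations.} The main point is the depth bound. An affine call is always followed by a call with $a',b'<m$, so it is either a base case or reciprocal. A reciprocal call passes the pair (modulus, slope) from $(m,a)$ with $a<m$ to $(a,m)$. The next affine step then reduces this to $(a,\,m\bmod a)$. The sequence of moduli is therefore exactly the remainder sequence of the Euclidean algorithm on $(m,a)$, so there are $O(\log m)$ reciprocal–affine pairs (Lamé's theorem).

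The obstacle is minor: one must check that the offset $m-b-1$ and the new length $Y$ cannot trigger extra non-Euclidean calls. They cannot, because both the affine and reciprocal steps cost one call each and $Y=0$ terminates the recursion. Since the state size is ten and each transition uses $O(1)$ arithmetic operations with the precomputed $P_r(n)$, $Q_r$, the total cost is $O(\log m)$.
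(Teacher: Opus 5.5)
Your proposal is correct and follows the paper's route: reduce $(a,b)$ modulo $m$ (affine step), swap to $(Y;a,m,m-b-1)$ (reciprocal step), and bound the depth by the Euclidean algorithm with $O(1)$ work per level on the ten-state vector. You go further than the paper by deriving the reciprocal identities and by tracking the (modulus, slope) pairs through the Euclidean remainder sequence (Lam\'e), which is what actually justifies $O(\log m)$ rather than mere strict decrease. The only slip is harmless: $P_p(g+1)$ does have a constant term when $p=0$, since $P_0(g+1)=g+1$, and that constant is exactly what produces the $Q_r$ terms you already list.
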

	
	\begin{proof}
		The affine step reduces $(a,b)$ modulo $m$.  The reciprocal step replaces $(n;m,a,b)$ by
		\[
		(Y;a,m,m-b-1),
		\qquad
		Y=\Floor{\frac{a(n-1)+b}{m}},
		\]
		so the first modulus strictly decreases from $m$ to $a<m$.  Hence the recursion has Euclidean depth $O(\log m)$.  Each step performs only $O(1)$ arithmetic operations on the ten stored moments and the power sums.
	\end{proof}

	\subsection{Complexity bound}
	
	Each recursive call either reduces $a$ and $b$ modulo $m$ (the affine step) or swaps the roles of $(a,m)$ with strictly smaller Euclidean parameters (the reciprocal step).  Hence the recursion depth is $O(\log m)=O(\log n)$.  Every step performs only $O(1)$ arithmetic operations on a constant-size family of moments.  This proves Corollary~\ref{cor:weighted-kernel-log}.

	\section{Implementation notes}\label{app:engineering}
	
	This appendix records two implementation-level simplifications for the ten-moment weighted-floor kernel. They do not change the asymptotic complexity, but together they reduce the constant factor substantially; in our implementation, the combined speedup is a little over $3\times$.
	
	First, we use explicit affine and reciprocal transition formulas instead of constructing the symbolic expansions at run time. This turns each update into a fixed straight-line computation and eliminates most temporary algebra.
	
	Second, we special-case the two most common affine subcases, namely $B=0$ and, within it, $A=1$. Since these patterns occur frequently in the Euclidean recursion, handling them with short handwritten formulas removes another layer of arithmetic overhead.

	\section{Proof of the two-term asymptotic expansion}\label{app:second-term}
	
	In this appendix we prove Theorem~\ref{thm:main-asymptotic}. We write
	\[
	S_1(n):=
	2\sum_{\substack{u>v\ge 1\\ (u,v)=1\\ u,v\le \sqrt n}}
	\ \sum_{\substack{a,b\ge 1\\ au+bv\le n\\ av+bu\le n}}
	(n-au-bv)(n-av-bu),
	\]
	\[
	S_2(n):=
	2\sum_{1\le a,b\le \sqrt n}
	\ \sum_{\substack{u>v\ge 1\\ (u,v)=1\\ au+bv\le n\\ av+bu\le n}}
	(n-au-bv)(n-av-bu),
	\]
	\[
	S_{12}(n):=
	2\sum_{\substack{u>v\ge 1\\ (u,v)=1\\ u,v\le \sqrt n}}
	\ \sum_{\substack{1\le a,b\le \sqrt n\\ au+bv\le n\\ av+bu\le n}}
	(n-au-bv)(n-av-bu).
	\]
	Thus $F(n)=S_1(n)+S_2(n)-S_{12}(n)+F_0(n)$. We derive the required expansions in the order $S_1$, $S_2$, $S_{12}$, and then assemble them. All remainders in this appendix are tracked only up to $o(n^4)$.
	For the linear constraints defining the summation regions, we use non-strict inequalities throughout, since on the boundary the weight $(n-au-bv)(n-av-bu)$ vanishes identically. Thus replacing a condition of the form $au+bv<n$ or $av+bu<n$ by the corresponding non-strict version does not change the summand at any lattice point. The only genuinely strict inequality that remains is the ordering constraint $u>v$; for the primitive-direction parameters below we work on the closed triangle $T=\{(x,y):0\le y\le x\le 1\}$.

	\subsection{\texorpdfstring{Shared inputs for $S_1$ and $S_2$}{Shared inputs for S1 and S2}}
	
	The next two lemmas are used in both the $S_1$ and $S_2$ analyses.
	
	\begin{lemma}[evaluation of the kernel]\label{lem:asym-kernel}
		For $p\ge q>0$, let
		\[
		K(p,q):=\iint_{\substack{\alpha,\beta>0\\ p\alpha+q\beta<1\\ q\alpha+p\beta<1}}
		(1-p\alpha-q\beta)(1-q\alpha-p\beta)\,d\alpha\,d\beta.
		\]
		Then $K(p,q)=\frac{3p-q}{12p^2(p+q)}=\frac{1}{3p(p+q)}-\frac{1}{12p^2}$ for $p>q$, while $K(p,p)=\frac{1}{12p^2}$.
	\end{lemma}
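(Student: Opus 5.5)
The plan is to remove the dependence on $p,q$ by a linear change of variables and then integrate a quadratic polynomial over an explicit polygon. Put $s=p\alpha+q\beta$ and $t=q\alpha+p\beta$. For $p>q$ the Jacobian is $p^2-q^2>0$, so $d\alpha\,d\beta=ds\,dt/(p^2-q^2)$ and the integrand becomes $(1-s)(1-t)$. The constraints $s,t<1$ stay simple. Inverting gives
\[
\alpha=\frac{ps-qt}{p^2-q^2},\qquad \beta=\frac{pt-qs}{p^2-q^2},
\]
so $\alpha,\beta>0$ becomes $t<(p/q)s$ and $s<(p/q)t$. Hence the region is the part of the unit square $[0,1]^2$ lying in the cone $rs<t<s/r$, where $r:=q/p<1$.

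By the symmetry $s\leftrightarrow t$, I would compute twice the part with $t\le s$, that is, $rs\le t\le s$ for $0\le s\le1$; this part lies automatically inside the square. The inner integral is
\[
\int_{rs}^{s}(1-t)\,dt=(1-r)s-\tfrac{(1-r^2)s^2}{2}.
\]
Multiplying by $(1-s)$ and integrating over $[0,1]$ uses $\int_0^1 s(1-s)\,ds=1/6$ and $\int_0^1 s^2(1-s)\,ds=1/12$. This gives
\[
(1-r)\Bigl(\tfrac16-\tfrac{1+r}{24}\Bigr)=\frac{(1-r)(3-r)}{24}.
\]
Doubling yields $(1-r)(3-r)/12$. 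Dividing by $p^2-q^2=p^2(1-r)(1+r)$ then gives
\[
K(p,q)=\frac{3-r}{12p^2(1+r)}=\frac{3p-q}{12p^2(p+q)}.
\]
The second form in the statement follows from the partial-fraction identity $\frac{1}{3p(p+q)}-\frac{1}{12p^2}=\frac{4p-(p+q)}{12p^2(p+q)}$, which is a one-line check.

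For $p=q$ the substitution degenerates, since both linear forms equal $p(\alpha+\beta)$. Here I would set $w=p(\alpha+\beta)$ and integrate over the triangle $\alpha,\beta>0$, $w<1$. The level sets $\alpha+\beta=w/p$ have length $w/p$ in the $\alpha$-parametrization, so
\[
K(p,p)=\frac{1}{p^2}\int_0^1 w(1-w)^2\,dw=\frac{1}{12p^2}.
\]
The $p>q$ formula tends to the same value as $q\to p$, which serves as a consistency check.

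The main point needing care is describing the image region correctly. Specifically, one has to confirm that the cone condition together with $s,t<1$ really is the whole image, and that within the half $t\le s$ no further truncation by $t<1$ occurs (it does not, since $t\le s<1$). Beyond that, the argument is routine polynomial integration.
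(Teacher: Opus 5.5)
Your proof is correct and follows essentially the same route as the paper: the same substitution $s=p\alpha+q\beta$, $t=q\alpha+p\beta$, the same Jacobian $p^2-q^2$, the same image region $rs<t<s/r$ inside the unit square, and a direct integration over the triangle for $p=q$. The only difference is cosmetic. You fold the region along the diagonal using the $s\leftrightarrow t$ symmetry, whereas the paper splits the $x$-range at $x=q/p$. Your version avoids that split and actually carries out the integrations that the paper leaves as ``elementary.''
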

	
	\begin{proof}
		For $p>q$, set $x=p\alpha+q\beta$ and $y=q\alpha+p\beta$. The Jacobian is $p^2-q^2$, and the inverse map is
		\[
		\alpha=\frac{px-qy}{p^2-q^2},\qquad \beta=\frac{py-qx}{p^2-q^2}.
		\]
		Hence
		\[
		K(p,q)=\frac1{p^2-q^2}
		\iint_{\substack{0<x<1,\ 0<y<1\\ qx<py,\ qy<px}}(1-x)(1-y)\,dx\,dy.
		\]
		For fixed $x\in(0,1)$, the variable $y$ ranges over $\frac{q}{p}x<y<\min(\frac{p}{q}x,1)$, so splitting at $x=q/p$ gives
		\[
		K(p,q)=\frac1{p^2-q^2}
		\left(
		\int_0^{q/p}\!\int_{(q/p)x}^{(p/q)x}(1-x)(1-y)\,dy\,dx
		+\int_{q/p}^{1}\!\int_{(q/p)x}^{1}(1-x)(1-y)\,dy\,dx
		\right).
		\]
		The required elementary integrations are polynomial integrations over triangular regions, a special case of integration over simplices~\cite{BaldoniBerlineDeLoeraKoppeVergne}. Evaluating them yields $K(p,q)=\frac{3p-q}{12p^2(p+q)}$. For $p=q$, the region is $\alpha+\beta<1/p$, hence
		\[
		K(p,p)=\int_0^{1/p}\int_0^{1/p-\alpha}(1-p\alpha-p\beta)^2\,d\beta\,d\alpha=\frac1{12p^2}.\qedhere
		\]
	\end{proof}
	
	\begin{lemma}\label{lem:asym-harmonic}
		One has $\sum_{m\ge 1}(H_{2m-1}-H_m-\log 2)/m=\log^2 2-\pi^2/6$.
	\end{lemma}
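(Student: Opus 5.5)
We need to prove $\sum_{m\ge1}(H_{2m-1}-H_m-\log 2)/m=\log^2 2-\pi^2/6$.

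The plan is to turn the summand into an integral. Write $H_{2m-1}-H_m=\sum_{k=m+1}^{2m-1}1/k$. Since $\log 2=\sum_{k\ge1}(-1)^{k+1}/k$ and $H_{2m}-H_m=\sum_{j=1}^{2m}(-1)^{j+1}/j$, we get
\[
H_{2m-1}-H_m-\log 2=-\frac1{2m}-\sum_{j>2m}\frac{(-1)^{j+1}}{j}.
\]
The alternating tail has the representation
\[
\sum_{j>2m}\frac{(-1)^{j+1}}{j}=\int_0^1\frac{x^{2m}}{1+x}\,dx .
\]
So the summand equals $-\frac{1}{2m^2}-\frac1m\int_0^1\frac{x^{2m}}{1+x}dx$, which is $O(1/m^2)$, and the series converges absolutely.

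Next we sum over $m$. The first part gives $-\tfrac12\zeta(2)=-\pi^2/12$. For the second part, interchange sum and integral by monotone convergence, since all terms are nonnegative. Using $\sum_{m\ge1}x^{2m}/m=-\log(1-x^2)$, it becomes
\[
+\int_0^1\frac{\log(1-x^2)}{1+x}\,dx=\int_0^1\frac{\log(1-x)}{1+x}dx+\int_0^1\frac{\log(1+x)}{1+x}dx .
\]
The second integral is elementary and equals $\tfrac12\log^2 2$.

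The main obstacle is evaluating the first integral, $I=\int_0^1\frac{\log(1-x)}{1+x}dx$. Substituting $t=(1-x)/2$ gives $1+x=2(1-t)$ and $dx=-2\,dt$, hence
\[
I=\int_0^{1/2}\frac{\log 2+\log t}{1-t}\,dt .
\]
The $\log 2$ part contributes $\log^2 2$. For the $\log t$ part, integrate by parts against $-\log(1-t)$ and then use the dilogarithm:
\[
\int_0^{1/2}\frac{\log t}{1-t}dt=\log^2 2-\int_0^{1/2}\frac{\log(1-t)}{t}dt=\log^2 2+\mathrm{Li}_2(\tfrac12).
\]
Here we need $\mathrm{Li}_2(\tfrac12)=\pi^2/12-\tfrac12\log^2 2$, which follows from Euler's reflection formula $\mathrm{Li}_2(t)+\mathrm{Li}_2(1-t)=\pi^2/6-\log t\log(1-t)$ at $t=\tfrac12$.

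This gives
\[
I=\log^2 2+\log^2 2+\frac{\pi^2}{12}-\frac12\log^2 2=\frac32\log^2 2+\frac{\pi^2}{12}.
\]

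Before assembling, the signs need checking. The summand is $-\frac1{2m^2}-\frac1m\int_0^1\frac{x^{2m}}{1+x}dx$. Summing the second term gives $-\int_0^1\frac{-\log(1-x^2)}{1+x}dx=\int_0^1\frac{\log(1-x^2)}{1+x}dx$. The first integral in this split is $I$, which is negative since $\log(1-x)<0$ on $(0,1)$. So the direct substitution result for $I$ must be re-signed: $I=-\tfrac32\log^2 2-\tfrac{\pi^2}{12}$. The total is then
\[
-\frac{\pi^2}{12}+\left(-\frac32\log^2 2-\frac{\pi^2}{12}\right)+\frac12\log^2 2=-\log^2 2-\frac{\pi^2}{6}.
\]
This disagrees with the claimed value $\log^2 2-\pi^2/6$ in the sign of the $\log^2 2$ term. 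I expect the discrepancy to come from a sign slip in my evaluation of $I$ (an orientation or integration-by-parts sign), and I will recheck exactly that step.

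As an independent check, the statement predicts $\log^2 2-\pi^2/6\approx 0.4805-1.6449=-1.164$. A few terms can be compared numerically. For $m=1$ the summand is $0-\log 2=-0.693$. For $m=2$ it is $(1/2+1/3-1/2-\log 2)/2\approx-0.180$. The tail behaves like $-\sum_{m\ge3}1/(m^2)\cdot(3/4)\approx-0.29$, for a total near $-1.16$. This matches $\log^2 2-\pi^2/6$ and confirms that my evaluation of $I$ carries the sign slip.

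The final write-up will therefore carefully recompute $I$ via the $t=(1-x)/2$ substitution, tracking the orientation of the integration limits. The step I expect to be most delicate is exactly this dilogarithm evaluation at $\tfrac12$, together with its signs.
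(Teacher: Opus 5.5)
\textbf{Your approach is different from the paper's and, once one sign error is fixed, it works; as written, however, it reaches the wrong value.} The paper uses the generating function $\sum_n H_n x^n/n=\mathrm{Li}_2(x)+\frac12\log^2(1-x)$, extracts the even indices, applies the duplication formula for $\mathrm{Li}_2$, and lets $x\to1^-$. You instead write the summand as $-\frac1{2m^2}-\frac1m\int_0^1\frac{x^{2m}}{1+x}\,dx$ and swap sum and integral by Tonelli. This is more elementary: it avoids the Abelian limit and needs only $\mathrm{Li}_2(\tfrac12)$. But your proposal ends at $-\log^2 2-\pi^2/6$, and the fix you propose is itself wrong.

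\textbf{The error is in the integration by parts.} With $v(t)=-\log(1-t)$, so $v'(t)=1/(1-t)$,
\[
\int_0^{1/2}\frac{\log t}{1-t}\,dt=\Bigl[-\log t\,\log(1-t)\Bigr]_0^{1/2}+\int_0^{1/2}\frac{\log(1-t)}{t}\,dt=-\log^2 2-\mathrm{Li}_2(\tfrac12).
\]
The boundary term at $t=\tfrac12$ is $-(\log\tfrac12)^2=-\log^2 2$, and at $t\to0$ it vanishes since $\log t\,\log(1-t)\sim -t\log t$. The remaining integral is $-\mathrm{Li}_2(\tfrac12)$. You had both signs reversed. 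A quick check catches this: the integrand is negative on $(0,\tfrac12)$, yet your value $\log^2 2+\mathrm{Li}_2(\tfrac12)$ is positive.

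Hence
\[
I=\log^2 2-\log^2 2-\mathrm{Li}_2(\tfrac12)=-\mathrm{Li}_2(\tfrac12)=\tfrac12\log^2 2-\tfrac{\pi^2}{12}.
\]

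\textbf{Your re-signing of $I$ is not legitimate.} Changing $I$ to $-\frac32\log^2 2-\frac{\pi^2}{12}$ is not a valid correction. The $\log 2$-part correctly contributes $+\log^2 2$; only the $\log t$-part was wrong. Knowing that $I<0$ does not determine its value.

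\textbf{With the corrected $I$, the assembly gives the claim.}
\[
-\tfrac{\pi^2}{12}+\bigl(\tfrac12\log^2 2-\tfrac{\pi^2}{12}\bigr)+\tfrac12\log^2 2=\log^2 2-\tfrac{\pi^2}{6}.
\]
Your other steps are sound: the alternating-tail integral representation, the Tonelli interchange, $\int_0^1\frac{\log(1+x)}{1+x}\,dx=\frac12\log^2 2$, and $\mathrm{Li}_2(\tfrac12)=\frac{\pi^2}{12}-\frac12\log^2 2$ from the reflection formula. So with this correction your argument is a complete proof. The numerical check was a good sanity test, but it cannot stand in for the computation.
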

	
	\begin{proof}
		Set $U:=\sum_{m\ge 1}(H_{2m}-H_m-\log 2)/m$. Since $H_{2m-1}=H_{2m}-1/(2m)$,
		\[
		\sum_{m\ge 1}\frac{H_{2m-1}-H_m-\log 2}{m}=U-\frac12\sum_{m\ge 1}\frac1{m^2}=U-\frac{\pi^2}{12}.
		\]
		Also $H_{2m}-H_m-\log 2=O(1/m)$, so the defining series for $U$ is absolutely convergent. Using the classical generating function \cite[Sec.~6.3]{ConcreteMath}
		\[
		\sum_{n\ge 1}\frac{H_n}{n}x^n=\operatorname{Li}_2(x)+\frac12\log^2(1-x)\qquad (|x|<1),
		\]
		we get for $0<x<1$,
		\[
		\sum_{m\ge 1}\frac{H_{2m}-H_m-\log 2}{m}x^m
		=F(\sqrt x)+F(-\sqrt x)-F(x)+\log 2\,\log(1-x),
		\]
		where $F(t):=\operatorname{Li}_2(t)+\frac12\log^2(1-t)$. The duplication identity
		\[
		\operatorname{Li}_2(t)+\operatorname{Li}_2(-t)=\frac12\operatorname{Li}_2(t^2)
		\]
		(compare \cite[Eq.~25.12.12]{DLMF}) shows that the right-hand side tends to $\log^2 2-\pi^2/12$ as $x\to 1^-$. Since the coefficients are absolutely summable, the series on the left converges absolutely at $x=1$ and uniformly for $x\in[0,1]$. Therefore one may pass to the limit $x\to1^-$ termwise, and the left-hand side tends to $U$. Hence $U=\log^2 2-\pi^2/12$, and the claim follows.
	\end{proof}
	
	\begin{remark}[parametric Euler--Maclaurin patching]\label{rem:parametric-EM}
		We use a standard local form of weighted Euler--Maclaurin summation on polytopal families; compare, for example, the polytope expansions in~\cite{BrandoliniColzaniGariboldiGiganteMonguzzi}. Let $\Theta$ be a compact parameter set, and suppose that it is covered by finitely many relatively open cells $\Theta_\nu$ such that on each cell the family of polygons $P_\theta$ has fixed combinatorial type, the defining affine inequalities depend $C^2$-smoothly on $\theta\in\Theta_\nu$, and the weights $w_\theta\in C^2(P_\theta)$ depend $C^2$-smoothly on $\theta$ with bounds uniform on compact subsets of $\Theta_\nu$. On each cell, the weighted Euler--Maclaurin formula therefore gives an expansion
		\[
		\sum_{m\in TP_\theta\cap\mathbb Z^2} w_\theta(m/T)=T^2A_\nu(\theta)+TB_\nu(\theta)+O(1)
		\qquad (T\ge 1),
		\]
		uniformly for $\theta$ in compact subsets of $\Theta_\nu$.
		
		In the applications below, one checks directly that the area coefficient
		\[
		A(\theta):=\iint_{P_\theta} w_\theta(x)\,dx
		\]
		extends continuously and remains bounded on all of $\Theta$, and, when the linear term is needed, that the boundary coefficient
		\[
		B(\theta):=\frac12\sum_{E\subset\partial P_\theta}\int_E w_\theta\,d\sigma_E
		\]
		does as well. Once these bounded continuous extensions across the finitely many transition strata have been verified, the cellwise expansions may be patched into a single uniform estimate on $\Theta$ of the same shape:
		\[
		\sum_{m\in TP_\theta\cap\mathbb Z^2} w_\theta(m/T)=T^2A(\theta)+TB(\theta)+O(1),
		\]
		or, if only the area term is used, simply
		\[
		\sum_{m\in TP_\theta\cap\mathbb Z^2} w_\theta(m/T)=T^2A(\theta)+O(T).
		\]
		This is exactly the way Euler--Maclaurin is invoked in the analyses of $S_1$, $S_2$, and $S_{12}$.
	\end{remark}
	
	\begin{remark}[replacing $\sqrt n$ by $\lfloor\sqrt n\rfloor$]\label{rem:floor-sqrt-safe}
		Let $N:=\lfloor\sqrt n\rfloor$. Then
		\[
		N=\sqrt n+O(1),
		\qquad
		\log N=\frac12\log n+o(1),
		\qquad
		N^8=n^4+o(n^4).
		\]
		In particular,
		\[
		n^3\log n=o(n^4),\qquad n^3N=o(n^4),\qquad N^7=o(n^4),\qquad N^7\log N=o(n^4).
		\]
	\end{remark}
	
	\subsection{\texorpdfstring{The contribution $S_1$}{The contribution S1}}
	
	\begin{lemma}\label{lem:asym-phi}
		As $N\to\infty$, one has $\sum_{u\le N}\varphi(u)/u^2=\frac{6}{\pi^2}\log N+\bigl(\frac{6\gamma}{\pi^2}-\frac{\zeta'(2)}{\zeta(2)^2}\bigr)+o(1)$.
	\end{lemma}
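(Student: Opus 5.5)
We prove the asymptotic for $\sum_{u\le N}\varphi(u)/u^2$ by Möbius inversion followed by the standard harmonic-sum expansion. Writing $\varphi(u)/u=\sum_{d\mid u}\mu(d)/d$ and setting $u=dk$, we get
\[
\sum_{u\le N}\frac{\varphi(u)}{u^2}
=\sum_{u\le N}\frac1u\sum_{d\mid u}\frac{\mu(d)}{d}
=\sum_{d\le N}\frac{\mu(d)}{d^2}\sum_{k\le N/d}\frac1k .
\]
All sums are finite, so the interchange is harmless.

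Next we insert $H_X:=\sum_{k\le X}1/k=\log X+\gamma+O(1/X)$, valid uniformly for real $X\ge1$. With $X=N/d$ this gives
\[
\sum_{d\le N}\frac{\mu(d)}{d^2}\Bigl(\log N-\log d+\gamma\Bigr)
+O\!\Bigl(\sum_{d\le N}\frac{1}{d^2}\cdot\frac dN\Bigr).
\]
The error term is $O(\log N/N)=o(1)$.

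We then complete each Dirichlet series to infinity. We use $\sum_{d\ge1}\mu(d)/d^2=1/\zeta(2)=6/\pi^2$, whose tail beyond $N$ is $O(1/N)$; multiplied by $\log N+\gamma$, this tail is still $o(1)$. For the remaining sum, we differentiate $1/\zeta(s)=\sum\mu(d)d^{-s}$ at $s=2$, which gives $\sum_{d\ge1}\mu(d)\log d/d^2=-\zeta'(2)/\zeta(2)^2$, with tail $O(\log N/N)$. Since the $\log d$ term carries a minus sign, its contribution is $+\zeta'(2)/\zeta(2)^2$? We check the sign carefully: $-\sum\mu(d)\log d/d^2=-(-\zeta'(2)/\zeta(2)^2)=+\zeta'(2)/\zeta(2)^2$.

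This conflicts with the stated sign, so we recheck $\frac{d}{ds}\zeta(s)^{-1}=-\zeta'(s)/\zeta(s)^2$. We also have $\frac{d}{ds}\sum\mu(d)d^{-s}=-\sum\mu(d)\log d\, d^{-s}$. Equating the two gives $\sum\mu(d)\log d/d^2=\zeta'(2)/\zeta(2)^2$, so the $-\log d$ term contributes $-\zeta'(2)/\zeta(2)^2$. This matches the statement. The constant is therefore $\frac{6\gamma}{\pi^2}-\frac{\zeta'(2)}{\zeta(2)^2}$, and the leading term is $\frac{6}{\pi^2}\log N$.

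The only delicate point is the bookkeeping of the sign of the $\zeta'(2)$ term together with the uniform tail estimates. Absolute convergence of $\sum|\mu(d)|\log d/d^2$ justifies termwise differentiation for $\Re s>1$.
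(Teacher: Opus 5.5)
Your argument is correct, but it takes a different route from the paper.

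The paper argues analytically. It identifies $\sum_{n\ge1}\varphi(n)n^{-s-1}=\zeta(s)/\zeta(s+1)$ and reads the constant $\frac{6\gamma}{\pi^2}-\frac{\zeta'(2)}{\zeta(2)^2}$ off the Laurent expansion at $s=1$. It then cites a Delange-type Tauberian theorem to transfer that expansion to the partial sums.

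You instead use $\varphi(u)/u=\sum_{d\mid u}\mu(d)/d$, swap the finite sums, and insert $H_X=\log X+\gamma+O(1/X)$. You then complete the two M\"obius series, so the constant comes from $\sum_d\mu(d)d^{-2}=1/\zeta(2)$ and $\sum_d\mu(d)\log d\,d^{-2}=\zeta'(2)/\zeta(2)^2$.

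Your route is elementary and self-contained, and it gives the explicit remainder $O(\log N/N)$ instead of $o(1)$. It also avoids the point the paper leaves to the citation. Matching the \emph{constant} term of the Laurent expansion (not just the residue) with the constant in the partial sums needs more than a leading-order Tauberian statement. One way to supply it is partial summation plus a Mellin identification of the constant. Another is a Littlewood-type theorem using $\varphi(n)/n^2=O(1/n)$. The paper's approach, in turn, makes the origin of the constant transparent and carries over to Dirichlet series that lack a convenient convolution identity.

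In a write-up, drop your initial wrong claim $\sum_d\mu(d)\log d/d^2=-\zeta'(2)/\zeta(2)^2$. State only the correct identity obtained by differentiating $1/\zeta(s)=\sum_d\mu(d)d^{-s}$, so that the $-\log d$ term visibly contributes $-\zeta'(2)/\zeta(2)^2$.
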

	
	\begin{proof}
		Put $a_n:=\varphi(n)/n$. Then
		\[
		\sum_{n\ge 1}\frac{a_n}{n^s}=\sum_{n\ge 1}\frac{\varphi(n)}{n^{s+1}}=\frac{\zeta(s)}{\zeta(s+1)} \qquad (\Re s>1).
		\]
		As $s\to 1^+$,
		\[
		\frac{\zeta(s)}{\zeta(s+1)}=\frac{6}{\pi^2}\frac1{s-1}+\left(\frac{6\gamma}{\pi^2}-\frac{\zeta'(2)}{\zeta(2)^2}\right)+O(s-1).
		\]
		By Delange's theorem for Dirichlet series with a simple pole (see \cite[Ch.~II.5, Th.~3]{Tenenbaum}; compare also \cite[Ch.~5, \S1]{MontgomeryVaughan}), this implies
		\[
		\sum_{n\le N}\frac{a_n}{n}=\sum_{n\le N}\frac{\varphi(n)}{n^2}
		=\frac{6}{\pi^2}\log N+\left(\frac{6\gamma}{\pi^2}-\frac{\zeta'(2)}{\zeta(2)^2}\right)+o(1),
		\]
		as claimed.
	\end{proof}
	
	\begin{lemma}[kernel row sum]\label{lem:asym-kernel-rows}
		For $u\ge 2$, let $k(u):=\sum_{\substack{1\le v<u\\ (u,v)=1}}K(u,v)$. Then
		\[
		k(u)=\frac{4\log 2-1}{12}\frac{\varphi(u)}{u^2}+r(u),\qquad r(u)=O\!\left(\frac{\tau(u)}{u^2}\right),
		\]
		and therefore
		\[
		\sum_{u\le N}k(u)=\frac{4\log 2-1}{2\pi^2}\log N+D_1+o(1),
		\]
		where
		\[
		D_1=\frac{4\log 2-1}{12}\left(\frac{6\gamma}{\pi^2}-\frac{\zeta'(2)}{\zeta(2)^2}-1\right)+\frac{2\log^2 2}{\pi^2}+\frac{\log 2}{3}-\frac13.
		\]
	\end{lemma}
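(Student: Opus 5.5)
The plan is to make $k(u)$ explicit using \cref{lem:asym-kernel}, remove the coprimality condition by M\"obius inversion, and isolate a main term $\log 2\,\varphi(u)/u$ plus a remainder controlled by \cref{lem:asym-harmonic}.

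\emph{Exact form of $k(u)$.} For $1\le v<u$ we have $u>v$, so \cref{lem:asym-kernel} gives $K(u,v)=\frac{1}{3u(u+v)}-\frac{1}{12u^2}$. Hence
\[
k(u)=\frac{1}{3u}\sum_{\substack{1\le v<u\\ (u,v)=1}}\frac{1}{u+v}-\frac{\varphi(u)}{12u^2}.
\]
Insert $\1_{(u,v)=1}=\sum_{d\mid u,\,d\mid v}\mu(d)$ and write $u=dm$, $v=dw$ with $1\le w<m$. Then
\[
\sum_{\substack{v<u\\ (u,v)=1}}\frac1{u+v}=\sum_{d\mid u}\frac{\mu(d)}{d}\sum_{w=1}^{m-1}\frac{1}{m+w}=\sum_{d\mid u}\frac{\mu(d)}{d}\bigl(H_{2m-1}-H_m\bigr),\qquad m=u/d.
\]

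\emph{Main term and remainder.} Put $e(m):=H_{2m-1}-H_m-\log 2$; the standard expansion $H_k=\log k+\gamma+O(1/k)$ gives $e(m)=O(1/m)$. Since $\sum_{d\mid u}\mu(d)/d=\varphi(u)/u$, the $\log 2$ part contributes $\frac{\log 2}{3}\frac{\varphi(u)}{u^2}$. Together with $-\varphi(u)/(12u^2)$ this gives the main term $\frac{4\log2-1}{12}\frac{\varphi(u)}{u^2}$. What remains is
\[
r(u)=\frac{1}{3u}\sum_{d\mid u}\frac{\mu(d)}{d}\,e(u/d),
\]
and each summand is $O\bigl(\frac1d\cdot\frac du\bigr)=O(1/u)$. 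This gives $r(u)=O(\tau(u)/u^2)$.

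\emph{Summation.} The main term is handled by \cref{lem:asym-phi}: multiplying by $\frac{4\log2-1}{12}$ and using $\frac{4\log 2-1}{12}\cdot\frac{6}{\pi^2}=\frac{4\log2-1}{2\pi^2}$ yields the $\log N$ coefficient and the part $\frac{4\log 2-1}{12}\bigl(\frac{6\gamma}{\pi^2}-\frac{\zeta'(2)}{\zeta(2)^2}\bigr)$ of $D_1$. Since $\sum_u\tau(u)/u^2<\infty$, the series $\sum_u r(u)$ converges absolutely, and its tail beyond $N$ is $o(1)$.

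\emph{Evaluating $\sum_u r(u)$.} Take the defining formula for $r(u)$ over all $u\ge1$, which is absolutely convergent. Setting $u=dm$ factorizes it as
\[
\frac13\sum_d\frac{\mu(d)}{d^2}\sum_m\frac{e(m)}{m}=\frac13\cdot\frac{6}{\pi^2}\Bigl(\log^22-\frac{\pi^2}{6}\Bigr)=\frac{2\log^22}{\pi^2}-\frac13,
\]
by \cref{lem:asym-harmonic}.

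\emph{Main obstacle: boundary bookkeeping.} I expect the delicate step to be the boundary term at $u=1$ and its effect on the constant. \cref{lem:asym-phi} sums from $u=1$, where $\varphi(1)/1=1$, while $k$ is defined for $u\ge2$ (with $k(1)=0$ for the empty row). The $u=1$ term must therefore be subtracted once from the main-term constant, contributing $-\frac{4\log2-1}{12}$. At the same time, the value of the $r$-formula at $u=1$, which is zero because the $w$-range is empty, must be treated consistently so that nothing is double counted. I would carry out this bookkeeping explicitly, checking which ranges of $u$ and $v$ each piece actually covers. I would then verify that the resulting constant terms combine to the stated $D_1$ (the pieces $-\frac{4\log2-1}{12}+\frac{\log2}{3}-\frac13$). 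If they do not, I would recheck the convention $u>v$ against the diagonal $K(p,p)$ case.
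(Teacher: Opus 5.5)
Your route is the same as the paper's: the explicit $K(u,v)$ from \cref{lem:asym-kernel}, M\"obius inversion to $H_{2m-1}-H_m$, splitting off $\log 2$ to get the main term, the bound $r(u)=O(\tau(u)/u^2)$, and evaluating $\sum_u r(u)$ by factorizing over $u=dm$ and applying \cref{lem:asym-harmonic}. All of this is correct, including the value $\sum_{u\ge1}r(u)=\frac{2\log^2 2}{\pi^2}-\frac13$.

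The gap is in the boundary bookkeeping. You leave it as a plan, and the one concrete claim you make there is false: the $r$-formula at $u=1$ is not zero. The empty $w$-range gives only $H_{2m-1}-H_m=H_1-H_1=0$ at $m=1$. The remainder uses $e(1)=H_1-H_1-\log 2=-\log 2$, so $r(1)=\frac13 e(1)=-\frac{\log 2}{3}$. Your evaluation of $\sum_u r(u)$ runs over all $u\ge 1$; equivalently, the series in \cref{lem:asym-harmonic} starts at $m=1$ and contains $e(1)$. So this term is included in $\frac{2\log^2 2}{\pi^2}-\frac13$ and must be removed when you restrict to $u\ge 2$. With $r(1)=0$ your constant would come out as $D_1-\frac{\log 2}{3}$, and nothing in your argument produces the piece $+\frac{\log 2}{3}$ that you list among the expected terms. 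Your fallback of rechecking the diagonal case $K(p,p)$ would not locate the error, which is exactly the nonzero value $r(1)$.

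The repair is one line. The identity $k(u)=\frac{4\log 2-1}{12}\frac{\varphi(u)}{u^2}+r(u)$ holds for $u\ge 2$ but fails at $u=1$. There the right-hand side equals $\frac{4\log 2-1}{12}-\frac{\log 2}{3}=-\frac1{12}$, because $\varphi(1)=1$ while $\{1\le v<1\}$ is empty, whereas $k(1)=0$. Hence
\[
\sum_{2\le u\le N}k(u)=\frac{4\log 2-1}{12}\Bigl(\sum_{u\le N}\frac{\varphi(u)}{u^2}-1\Bigr)+\sum_{u\le N}r(u)+\frac{\log 2}{3}.
\]
Combining \cref{lem:asym-phi} with $\sum_{u\le N}r(u)=\frac{2\log^2 2}{\pi^2}-\frac13+o(1)$ then gives exactly $D_1$; the boundary and remainder pieces $-\frac{4\log 2-1}{12}+\frac{\log 2}{3}-\frac13$ sum to $-\frac14$.

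For comparison, the paper's own proof at this point displays $\frac{2\log^2 2}{\pi^2}-\frac16$ for the remainder sum and subtracts only $\frac{4\log 2-1}{12}$ for $u=1$. Your value $-\frac13$ for $\frac{1}{3\zeta(2)}\bigl(\log^2 2-\frac{\pi^2}{6}\bigr)$ is the correct one. The stated $D_1$ is precisely what the corrected bookkeeping above produces.
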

	
	\begin{proof}
		By Lemma~\ref{lem:asym-kernel},
		\[
		k(u)=\frac1{3u}\sum_{\substack{1\le v<u\\ (u,v)=1}}\frac1{u+v}-\frac{\varphi(u)}{12u^2}.
		\]
		Using inclusion--exclusion,
		\[
		\sum_{\substack{1\le v<u\\ (u,v)=1}}\frac1{u+v}
		=\sum_{d\mid u}\mu(d)\sum_{\substack{1\le v<u\\ d\mid v}}\frac1{u+v}
		=\sum_{d\mid u}\frac{\mu(d)}{d}\sum_{1\le m<u/d}\frac1{u/d+m}.
		\]
		Since $\sum_{1\le m<M}(M+m)^{-1}=H_{2M-1}-H_M$, this becomes
		\[
		\sum_{\substack{1\le v<u\\ (u,v)=1}}\frac1{u+v}
		=\sum_{d\mid u}\frac{\mu(d)}{d}\Bigl(H_{2u/d-1}-H_{u/d}\Bigr).
		\]
		Hence
		\[
		k(u)=\frac{4\log 2-1}{12}\frac{\varphi(u)}{u^2}
		+\frac1{3u}\sum_{d\mid u}\frac{\mu(d)}{d}\Bigl(H_{2u/d-1}-H_{u/d}-\log 2\Bigr).
		\]
		The displayed remainder is $O(\tau(u)/u^2)$ because $H_{2m-1}-H_m-\log 2=O(1/m)$ uniformly in $m\ge 1$.
		Let
		\[
		R(N):=\sum_{u\le N}\frac1{3u}\sum_{d\mid u}\frac{\mu(d)}{d}\Bigl(H_{2u/d-1}-H_{u/d}-\log 2\Bigr).
		\]
		Setting $u=dm$ and exchanging the order of summation gives
		\[
		R(N)=\frac13\sum_{m\le N}\frac{H_{2m-1}-H_m-\log 2}{m}\sum_{d\le N/m}\frac{\mu(d)}{d^2}.
		\]
		Since $H_{2m-1}-H_m-\log 2=O(1/m)$, the series
		\[
		\sum_{m\ge 1}\frac{|H_{2m-1}-H_m-\log 2|}{m}
		\]
		converges. Therefore the rearrangement above is absolutely summable, and the passage to the limit below is justified by dominated convergence. Since $\sum_{d\le X}\mu(d)/d^2=1/\zeta(2)+O(X^{-1})$, we obtain
		\[
		R(N)=\frac{1}{3\zeta(2)}\sum_{m\ge 1}\frac{H_{2m-1}-H_m-\log 2}{m}+o(1)
		=\frac{2\log^2 2}{\pi^2}-\frac16+o(1).
		\]
		Combining this with Lemma~\ref{lem:asym-phi} gives
		\[
		\sum_{u\le N}k(u)=\frac{4\log 2-1}{2\pi^2}\log N+\frac{4\log 2-1}{12}\Bigl(\frac{6\gamma}{\pi^2}-\frac{\zeta'(2)}{\zeta(2)^2}\Bigr)+\frac{2\log^2 2}{\pi^2}-\frac16+o(1).
		\]
		Since the lemma sums over $u\ge 2$, we must subtract the $u=1$ contribution $(4\log 2-1)/12$. This yields exactly the stated constant $D_1$.
	\end{proof}
	
	\begin{theorem}[asymptotic for $S_1$]\label{thm:asym-S1}
		One has
		\[
		S_1(n)=A_1n^4\log n+B_1n^4+o(n^4),
		\]
		where $A_1=\frac{4\log 2-1}{2\pi^2}$ and
		\[
		B_1=\frac{4\log 2-1}{6}\left(\frac{6\gamma}{\pi^2}-\frac{\zeta'(2)}{\zeta(2)^2}-1\right)+\frac{4\log^2 2}{\pi^2}+\frac{2\log 2}{3}-\frac23.
		\]
	\end{theorem}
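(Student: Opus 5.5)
The plan is to reduce $S_1(n)$ to the row sums $k(u)$ of \cref{lem:asym-kernel-rows} by approximating, for each fixed primitive direction, the inner lattice sum over $(a,b)$ by $n^4$ times the kernel integral $K(u,v)$ of \cref{lem:asym-kernel}. Concretely, fix a primitive pair $u>v\ge1$ with $u\le\sqrt n$ (so automatically $v\le\sqrt n$). Put $T:=n/u$ and $\theta:=v/u\in(0,1)$, and rescale $\alpha=a/T$, $\beta=b/T$. Then
\[
(n-au-bv)(n-av-bu)=n^2(1-\alpha-\theta\beta)(1-\theta\alpha-\beta),
\]
and the region becomes $T P_\theta$, where $P_\theta=\{\alpha,\beta\ge0,\ \alpha+\theta\beta\le1,\ \theta\alpha+\beta\le1\}$. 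We sum only over $a,b\ge1$, so we apply the lattice-point expansion to the quarter-plane polygon and treat the removed axis points as part of the boundary term.

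By \cref{rem:parametric-EM} applied to the compact parameter set $\theta\in[0,1]$, the inner sum equals
\[
n^2\bigl(T^2A(\theta)+O(T)\bigr),\qquad A(\theta)=\iint_{P_\theta}(1-\alpha-\theta\beta)(1-\theta\alpha-\beta)\,d\alpha\,d\beta .
\]
Two facts make this usable:
\begin{itemize}
\item The combinatorial type of $P_\theta$ is constant for $0<\theta<1$ and degenerates only at the endpoints. There $A(\theta)$ extends continuously, as can be read off from the closed form of \cref{lem:asym-kernel}. The axis points $a=0$ or $b=0$ contribute $O(T)$ lattice points with weight $O(1)$, so they are absorbed into the $O(T)$ error as well.
\item A change of variables shows $u^2A(\theta)=K(u,v)$. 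Hence the inner sum is $n^4K(u,v)+O(n^3/u)$, uniformly in $(u,v)$.
\end{itemize}
Verifying this uniformity across all $\theta$, in particular as $\theta\to1$ where the two slanted edges merge, is the step I expect to need the most care. I would first try to check directly that the boundary coefficient stays bounded there, which suffices for the $O(T)$ form of the remark.

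Summing over primitive $(u,v)$ with $2\le u\le N:=\lfloor\sqrt n\rfloor$ and including the factor $2$ gives
\[
S_1(n)=2n^4\sum_{u\le N}k(u)+O\Bigl(n^3\sum_{u\le N}\frac{\varphi(u)}{u}\Bigr)=2n^4\sum_{u\le N}k(u)+O(n^{3}\sqrt n).
\]
The error term is $o(n^4)$; here $u=1$ is absent because $v\ge1$, $v<u$. By \cref{lem:asym-kernel-rows},
\[
\sum_{u\le N}k(u)=\frac{4\log2-1}{2\pi^2}\log N+D_1+o(1),
\]
and by \cref{rem:floor-sqrt-safe} we have $\log N=\tfrac12\log n+o(1)$. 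Therefore
\[
S_1(n)=\frac{4\log2-1}{2\pi^2}\,n^4\log n+2D_1n^4+o(n^4).
\]
Finally, expanding $2D_1$ using the formula for $D_1$ gives exactly the stated $B_1$.
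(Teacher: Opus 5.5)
Your proof is correct and follows the paper's route: each primitive direction's rescaled lattice sum is approximated by $n^4K(u,v)$, and summing via \cref{lem:asym-kernel-rows} with $\log N=\tfrac12\log n+o(1)$ gives $A_1$ and $B_1=2D_1$. The paper also computes the $n^3$ boundary-and-axis term $-\frac{3u-v}{6u^2}n^3$ before discarding it as $O(n^3N)$; your one-term $O(n^3/u)$ bound avoids this, and it needs no boundary coefficient, since counting lattice points in a convex set of bounded perimeter with a Lipschitz weight already gives it. One slip: the rescaling gives $A(\theta)=u^2K(u,v)$, not $u^2A(\theta)=K(u,v)$, though your next step $n^2T^2A(\theta)=n^4K(u,v)$ uses the correct relation.
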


	\begin{proof}
		Let $N:=\lfloor\sqrt n\rfloor$.
		For fixed primitive $u>v$, define
		\[
		G_n(u,v):=\sum_{\substack{a,b\ge 1\\ au+bv\le n\\ av+bu\le n}}(n-au-bv)(n-av-bu).
		\]
		Write
		\[
		P_{u,v}:=\{(\alpha,\beta)\in\mathbb R_{\ge 0}^2: u\alpha+v\beta\le 1,\ v\alpha+u\beta\le 1\}
		\]
		and
		\[
		w_{u,v}(\alpha,\beta):=(1-u\alpha-v\beta)(1-v\alpha-u\beta).
		\]
		Then
		\[
		G_n(u,v)=n^2\sum_{(a,b)\in nP_{u,v}\cap \mathbb Z_{\ge 1}^2} w_{u,v}(a/n,b/n).
		\]
		After the normalization $(\tilde\alpha,\tilde\beta)=(u\alpha,u\beta)$ and $\lambda:=v/u\in[0,1]$, the polygons become
		\[
		\widetilde P_\lambda:=\{(\tilde\alpha,\tilde\beta)\in\mathbb R_{\ge 0}^2: \tilde\alpha+\lambda\tilde\beta\le 1,\ \lambda\tilde\alpha+\tilde\beta\le 1\},
		\]
		and the weights become
		\[
		\widetilde w_\lambda(\tilde\alpha,\tilde\beta):=(1-\tilde\alpha-\lambda\tilde\beta)(1-\lambda\tilde\alpha-\tilde\beta).
		\]
		Thus $(\widetilde P_\lambda,\widetilde w_\lambda)$ form a one-parameter piecewise-$C^2$ family for $\lambda\in[0,1]$. As $\lambda\to1$, the polygon converges continuously to the limiting triangle $\{\tilde\alpha,\tilde\beta\ge0,\ \tilde\alpha+\tilde\beta\le1\}$, and the weight remains uniformly $C^2$. In particular, the area term and the linear boundary term stay uniformly bounded and vary continuously up to $\lambda=1$. Therefore Remark~\ref{rem:parametric-EM} yields, after checking the continuous extension of the area and boundary coefficients to $\lambda=1$, uniformly for $u>v\ge1$,
		\[
		n^2\sum_{(a,b)\in nP_{u,v}\cap \mathbb Z_{\ge 0}^2} w_{u,v}(a/n,b/n)
		=n^4K(u,v)+n^3L(u,v)+O(n^2),
		\]
		where the two oblique sides contribute $0$ because $w_{u,v}$ vanishes identically there, while the horizontal and vertical sides contribute
		\[
		\int_0^{1/u}(1-ut)(1-vt)\,dt=\frac{3u-v}{6u^2}
		\]
		each. Hence
		\[
		L(u,v)=\frac12\cdot 2\int_0^{1/u}(1-ut)(1-vt)\,dt=\frac{3u-v}{6u^2}.
		\]
		Now pass from $\mathbb Z_{\ge 0}^2$ to $\mathbb Z_{\ge 1}^2$ by deleting the two coordinate-axis slices and then adding back the doubly deleted corner $(0,0)$. Along the horizontal axis one has
		\[
		n^2\sum_{0\le b\le n/u} w_{u,v}(0,b/n)
		= n^2\sum_{0\le b\le n/u}(1-vb/n)(1-ub/n)
		= n^3\int_0^{1/u}(1-vt)(1-ut)\,dt+O(n^2),
		\]
		by the one-dimensional Euler--Maclaurin formula, and the same estimate holds on the vertical axis. The one-dimensional Euler--Maclaurin remainder here is uniform in $u,v\le N$: the interval has length $1/u$, while for $f_{u,v}(t):=(1-vt)(1-ut)$ one has $|f_{u,v}(t)|\le 1$ and $|f_{u,v}^{\prime\prime}(t)|=2uv$ on $[0,1/u]$. Thus the usual one-dimensional remainder is $O(1)$ before the outside factor $n^2$, hence $O(n^2)$ after rescaling. Hence the two deleted slices together subtract
		\[
		2n^3\int_0^{1/u}(1-vt)(1-ut)\,dt+O(n^2)=\frac{3u-v}{3u^2}n^3+O(n^2)
		\]
		from the expansion over $\mathbb Z_{\ge 0}^2$, while the corner $(0,0)$ contributes only $n^2w_{u,v}(0,0)=n^2$. Therefore
		\[
		G_n(u,v)=n^4K(u,v)-\frac{3u-v}{6u^2}n^3+O(n^2),
		\]
		uniformly for $u>v\ge 1$ and $u,v\le N$.
		Summing over primitive $u>v$ yields
		\[
		S_1(n)=2n^4\sum_{u\le N}k(u)-n^3\sum_{u\le N}\sum_{\substack{1\le v<u\\ (u,v)=1}}\frac{3u-v}{3u^2}+o(n^4).
		\]
		Since $\#\{(u,v):1\le v<u\le N,\ (u,v)=1\}=O(N^2)=O(n)$, the accumulated $O(n^2)$ error from the individual expansions is $O(n^3)=o(n^4)$, which is absorbed into the remainder above. Since $\sum_{\substack{1\le v<u\\ (u,v)=1}} v=u\varphi(u)/2$ for $u>1$,
		\[
		\sum_{\substack{1\le v<u\\ (u,v)=1}}\frac{3u-v}{3u^2}=\frac56\frac{\varphi(u)}{u},
		\]
		so
		\[
		S_1(n)=2n^4\sum_{u\le N}k(u)-\frac56n^3\sum_{u\le N}\frac{\varphi(u)}{u}+o(n^4).
		\]
		The second sum is $o(n^4)$ by Remark~\ref{rem:floor-sqrt-safe}. Therefore Lemma~\ref{lem:asym-kernel-rows} gives
		\[
		2n^4\sum_{u\le N}k(u)=A_1n^4\log n+B_1n^4+o(n^4),
		\]
		and the claimed asymptotic follows.
	\end{proof}
	
	\subsection{\texorpdfstring{The contribution $S_2$}{The contribution S2}}
	
	Throughout this subsection we write
	\[
	M(x):=\sum_{d\le x}\mu(d),\qquad
	\varepsilon(x):=\exp\!\Bigl(-c(\log x)^{3/5}(\log\log x)^{-1/5}\Bigr),
	\]
	where $c>0$ is an absolute constant. By the classical Walfisz estimate for the Mertens function \cite[Ch.~I.4]{Tenenbaum},
	\[
	M(x)=O\!\bigl(x\varepsilon(x)\bigr).
	\]
	
	\begin{lemma}[summed second-order input for $S_2$]\label{lem:key-S2-input}
		Let $N:=\lfloor \sqrt n \rfloor$. For $1\le a,b\le N$, define
		\[
		\begin{aligned}
			H_n(a,b)&:=\sum_{\substack{u>v\ge 1\\ (u,v)=1\\ au+bv\le n\\ av+bu\le n}}(n-au-bv)(n-av-bu),\\
			\mathcal J(a,b)&:=\iint_{\substack{x>y>0\\ ax+by\le 1\\ ay+bx\le 1}}(1-ax-by)(1-ay-bx)\,dx\,dy.
		\end{aligned}
		\]
		Then
		\[
		\sum_{1\le a,b\le N} H_n(a,b)=\frac{6}{\pi^2}n^4\sum_{1\le a,b\le N}\mathcal J(a,b)+o(n^4).
		\]
	\end{lemma}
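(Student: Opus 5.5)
The plan is to remove the coprimality condition by M\"obius inversion and to compare each resulting primitive-free lattice sum with its integral. For fixed $a,b$ set
\[
W_n(a,b;d):=\sum_{\substack{s>t\ge1\\ ast+\dots}}\,(n-d(as+bt))(n-d(at+bs)),
\]
the sum over all $s>t\ge1$ with $d(as+bt)\le n$ and $d(at+bs)\le n$ (this is the configuration $u=ds$, $v=dt$). Then
\[
H_n(a,b)=\sum_{d\ge1}\mu(d)\,W_n(a,b;d).
\]
Write $m:=n/d$. Rescaling $(s,t)=m(x,y)$ shows that $W_n(a,b;d)=d^{-2}\cdot\bigl(\text{lattice sum of } n^2\,w(\cdot/m)\text{ over } mP_{a,b}\bigr)$, where $P_{a,b}=\{x>y>0,\ ax+by\le1,\ ay+bx\le1\}$ and $w=(1-ax-by)(1-ay-bx)$. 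This polygon has diameter $\asymp 1/\max(a,b)$.

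The first key step is a uniform lattice-point estimate for a single $d$. Treat the sum as a two-dimensional Riemann sum. The weight is bounded by $1$ and vanishes on the two oblique edges; its gradient is $O(a+b)$ on a region of area $\asymp 1/(ab+\max(a,b)^2)$. This gives
\[
W_n(a,b;d)=\frac{n^4}{d^4}\,\mathcal J(a,b)+O\!\Bigl(\frac{n^2 m}{d^2\,\max(a,b)}+\frac{n^2}{d^2}\Bigr).
\]
The first error term is boundary (perimeter times $m$) and the second is corner/$O(1)$. Here Remark~\ref{rem:parametric-EM} is used only at the crude "area term $+\,O(T)$" level, with $T=m\max(a,b)$ after the normalization $\tilde x=\max(a,b)x$, exactly as in the proof of Theorem~\ref{thm:asym-S1}. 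One also notes that $W_n(a,b;d)=0$ unless $d\le n/(a+b)$, since $s\ge2$, $t\ge1$ forces $d(2a+b)\le n$.

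The second step is to sum over $d$. The main terms give $n^4\mathcal J(a,b)\sum_{d\le n/(a+b)}\mu(d)d^{-4}$, and we replace $\sum\mu(d)d^{-4}$ by $1/\zeta(4)$? No: the scaling must be checked carefully. Since $\mathcal J$ is measured in the $(u,v)$ variables directly, i.e. $u=ds$, one has $W\approx (n/d)^2\cdot n^2\mathcal J\cdot$(area scaling $d^{-2}$ counts points $u\in d\mathbb Z$). So the main term is $n^4\mathcal J(a,b)\,d^{-2}$, and the sum gives $\sum_d\mu(d)d^{-2}=6/\pi^2$ with tail $O(\mathcal J\, n^4 (a+b)/n)$. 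Since $\mathcal J(a,b)\asymp \max(a,b)^{-2}\cdot(\ldots)$, summing this tail over $a,b\le N$ gives $O(n^3\log N)=o(n^4)$.

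The error terms are the main obstacle, because a naive bound $\sum_d n^2/d^2\cdot (n/(d\max(a,b)))$ summed over $a,b\le N$ gives $n^3\sum_{a,b\le N}1/\max(a,b)\asymp n^3N=n^{7/2}$, which is fine. The corner term $\sum_{d\le n/(a+b)}n^2/d^2=O(n^2)$ per pair gives $O(n^2N^2)=O(n^3)$, also fine. So the real obstacle is to justify that the $O(n^2/d^2)$ corner error is genuinely uniform when $m\max(a,b)$ is small, i.e. when the dilated polygon contains only $O(1)$ lattice points. In that regime I will bound $W$ trivially by $n^2\cdot\#\{\text{points}\}$ and bound the main term by the same quantity, so no Euler--Maclaurin is needed there. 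The Walfisz bound on $M(x)$ is held in reserve in case one needs the cancellation in $\sum\mu(d)$ against a boundary term of size $n^3/(d\max(a,b))$ that does not carry $d^{-2}$. If that happens, partial summation with $M(x)=O(x\varepsilon(x))$ turns $\sum_{d\le X}\mu(d)/d$ into $o(1)$ uniformly, which makes the summed boundary contribution $o(n^3 N\log N)$ and hence $o(n^4)$.
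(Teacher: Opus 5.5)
Your route is the paper's. You use M\"obius inversion into the primitive-free layers $W_n(a,b;d)$ (the paper's $d^2F_{n/d}(a,b)$), the crude area-term Euler--Maclaurin estimate on the normalized compact family, the cutoff $d\le n/(a+b)$ with the tail $\sum_{d>X}d^{-2}$, and a final sum over $a,b\le N$. That route proves the lemma, but the scaling in your per-layer estimate is wrong, and this misplaces the difficulty.

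Since $(n-d(as+bt))(n-d(at+bs))=n^2w(s/m,t/m)$, one has $W_n(a,b;d)=n^2\sum_{(s,t)\in mP_{a,b}\cap\mathbb Z^2}w((s,t)/m)$, with no factor $d^{-2}$. The normalization $\tilde x=\max(a,b)x$ gives the dilation parameter $T=m/\max(a,b)$, not $m\max(a,b)$. The correct layer estimate is
\[
W_n(a,b;d)=\frac{n^4}{d^2}\,\mathcal J(a,b)+O\Bigl(\frac{n^3}{d\,\max(a,b)}\Bigr),\qquad d\le \frac{n}{a+b}.
\]
In this range $T\ge (a+b)/\max(a,b)\ge 1$, so the small-$T$ regime you call the real obstacle never occurs. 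Your separate corner term is absorbed into $O(T)$, as is the passage from the open wedge $x>y>0$ to the closed one; if you keep the corner term, it is $O(n^2)$ per layer, not $O(n^2/d^2)$. Your first display has $n^4/d^4$ in the main term and error terms too small by a factor $d^2$. You later repaired the main term, but not the errors.

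With the correct sizes, the boundary error is exactly the non-$d^{-2}$ term you held in reserve, and it is not summable in $d$:
\[
\sum_{d\le n/(a+b)}\frac{n^3}{d\max(a,b)}=O\Bigl(\frac{n^3\log n}{\max(a,b)}\Bigr).
\]
Summed over $a,b\le N$ this gives $O(n^3N\log n)=O(n^{7/2}\log n)=o(n^4)$. So the trivial bound closes the argument and no Walfisz cancellation is needed. Nor could you legitimately apply $\sum_{d\le x}\mu(d)/d=o(1)$ here: an $O(\cdot)$ bound carries no sign information, and exploiting cancellation would require the exact linear coefficient $TB(\theta)$ with an $O(1)$ remainder. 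The paper also routes this term through the Walfisz bound, but the crude estimate already suffices.

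Finally, your tail total is off. Since $\mathcal J(a,b)\asymp\max(a,b)^{-2}$, one has $(a+b)\mathcal J(a,b)\asymp 1/\max(a,b)$, so $\sum_{a,b\le N}(a+b)\mathcal J(a,b)\asymp N$. The tail therefore contributes $O(n^3N)$, not $O(n^3\log N)$, which is still $o(n^4)$. With these corrections your argument is complete.
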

	
	\begin{proof}
		For $t\ge a+b$, let
		\[
		F_t(a,b):=\sum_{\substack{u>v\ge 1\\ au+bv\le t\\ av+bu\le t}}(t-au-bv)(t-av-bu),
		\]
		where the sum is over all $u>v\ge 1$, without the coprimality condition. For each pair $(a,b)$, define
		\[
		X_{a,b}:=\Bigl\lfloor\frac{n}{a+b}\Bigr\rfloor.
		\]
		Then M\"obius inversion gives
		\[
		H_n(a,b)=\sum_{d\le X_{a,b}}\mu(d)d^2F_{n/d}(a,b).
		\]
		Put $m:=\max(a,b)$, $p:=a/m$, $q:=b/m$, and $\tau:=t/m$. Then $(p,q)$ ranges over the compact set
		\[
		\Theta:=\{(p,q)\in[0,1]^2:\max(p,q)=1\},
		\]
		and
		\[
		\begin{aligned}
			\Omega_{p,q}&:=\{(x,y):x>y>0,\ px+qy\le 1,\ qx+py\le 1\},\\
			w_{p,q}(x,y)&:=(1-px-qy)(1-qx-py).
		\end{aligned}
		\]
		give the rescaled representation
		\[
		F_t(a,b)=t^2\sum_{(u,v)\in \tau\Omega_{p,q}\cap\mathbb Z^2} w_{p,q}(u/\tau,v/\tau).
		\]
		The family $(\Omega_{p,q},w_{p,q})_{(p,q)\in\Theta}$ is compact and piecewise-$C^2$. Let
		\[
		\overline\Omega_{p,q}:=\{(x,y):x\ge y\ge 0,\ px+qy\le 1,\ qx+py\le 1\}.
		\]
		Passing from the open wedge $x>y>0$ to the closed wedge $x\ge y\ge 0$ adds only points on the diagonal and coordinate axes. For each fixed $(p,q)$, these boundary slices contain $O(\tau)$ lattice points and the weight is uniformly bounded, so the resulting correction to $F_t(a,b)$ is $t^2\cdot O(\tau)=O(t^3/m)=O(t^3)$, which is harmless for the final summation. Thus we may invoke Euler--Maclaurin on $\overline\Omega_{p,q}$.
		Away from the endpoint $(p,q)=(1,1)$ the combinatorial type is constant on each branch of $\Theta$, so Remark~\ref{rem:parametric-EM} applies directly there. At $(p,q)=(1,1)$ the quadrilateral degenerates continuously to the limiting triangle. This causes no problem for the coefficient bounds used below: the two oblique sides satisfy $w_{p,q}=0$ identically, while the remaining boundary pieces lie on $x=y$ and on the coordinate axes; along these edges both the edge lengths and the restricted weights stay uniformly bounded as $(p,q)\to(1,1)$. Hence the area coefficient extends continuously and remains uniformly bounded on all of $\Theta$; since only this coefficient is used below, Remark~\ref{rem:parametric-EM} gives, uniformly in $(p,q)\in\Theta$ and $\tau\ge 1$,
		\[
		\sum_{(u,v)\in \tau\overline\Omega_{p,q}\cap\mathbb Z^2} w_{p,q}(u/\tau,v/\tau)=\tau^2A(p,q)+O(\tau),
		\]
		with implied constants independent of $(p,q)$ and $\tau$. Therefore
		\[
		F_t(a,b)=t^4\mathcal J(a,b)+O\!\left(\frac{t^3}{\max(a,b)}\right)+O(t^2),
		\]
		where $\mathcal J(a,b)=A(p,q)/m^2$, uniformly for $1\le a,b\le N$ and $t\ge a+b$.
		Substituting this expansion into the M\"obius formula and using
		\[
		\sum_{d\le x}\frac{\mu(d)}{d}=O\!\bigl(\varepsilon(x)\bigr),\qquad
		\sum_{d\le x}\frac{\mu(d)}{d^2}=\frac1{\zeta(2)}+O\!\left(\frac{\varepsilon(x)}{x}\right),\qquad
		M(x)=\sum_{d\le x}\mu(d)=O\!\bigl(x\varepsilon(x)\bigr),
		\]
		we obtain
		\[
		H_n(a,b)=\frac{6}{\pi^2}n^4\mathcal J(a,b)
		+O\!\left(\varepsilon_n\frac{n^4\mathcal J(a,b)}{X_{a,b}}\right)
		+O\!\left(\varepsilon_n n^3\frac{1}{\max(a,b)}\right)
		+O\!\left(n^2\bigl|M(X_{a,b})\bigr|\right),
		\]
		where $\varepsilon_n:=\varepsilon(\sqrt n/2)$ and we used $X_{a,b}\ge \frac12\sqrt n$ for $a,b\le N=\lfloor\sqrt n\rfloor$.
		Since $X_{a,b}=\lfloor n/(a+b)\rfloor$, one has for all sufficiently large $n$,
		\[
		\frac{n}{2(a+b)}\le X_{a,b}\le \frac{n}{a+b}.
		\]
		Moreover $\mathcal J(a,b)=\frac12K(\max(a,b),\min(a,b))$, so Lemma~\ref{lem:asym-kernel} implies
		\[
		\sum_{a,b\le N}(a+b)\mathcal J(a,b)=O(N).
		\]
		Also,
		\[
		\sum_{a,b\le N}\frac{1}{\max(a,b)}\le C_2\sum_{m\le N}\frac{2m-1}{m}=O(N).
		\]
		Using again $M(X_{a,b})=O(X_{a,b}\varepsilon_n)$ uniformly in $a,b$, we obtain
		\[
		\sum_{a,b\le N}\varepsilon_n\frac{n^4\mathcal J(a,b)}{X_{a,b}}=O(\varepsilon_n n^3N),
		\qquad
		\sum_{a,b\le N}\varepsilon_n n^3\frac{1}{\max(a,b)}=O(\varepsilon_n n^3N),
		\]
		and
		\[
		\sum_{a,b\le N} n^2\bigl|M(X_{a,b})\bigr|
		\le C_4\varepsilon_n n^2 \sum_{a,b\le N} X_{a,b}
		\le C_5\varepsilon_n n^3\sum_{a,b\le N}\frac{1}{a+b}
		=O(\varepsilon_n n^3 N)=o(n^4).
		\]
		Since $N=\lfloor\sqrt n\rfloor$, we have $n^3N=n^{7/2}=o(n^4)$, and since $\varepsilon_n\to 0$, the first two displayed sums are also $o(n^4)$. This proves the claim.
	\end{proof}
	
	\begin{theorem}[asymptotic for $S_2$]\label{thm:asym-S2}
		One has
		\[
		S_2(n)=A_2n^4\log n+B_2n^4+o(n^4),
		\]
		where $A_2=\frac{4\log 2-1}{2\pi^2}$ and $B_2=\frac{(4\log 2-1)\gamma+4\log^2 2}{\pi^2}-\frac{5}{12}$.
	\end{theorem}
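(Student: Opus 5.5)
The plan is to reduce $S_2$ to an explicit one-dimensional harmonic sum by combining \cref{lem:key-S2-input} with the closed form of the kernel in \cref{lem:asym-kernel}, and then to read off the constants using \cref{lem:asym-harmonic}. Put $N=\lfloor\sqrt n\rfloor$. By definition $S_2(n)=2\sum_{1\le a,b\le N}H_n(a,b)$, so \cref{lem:key-S2-input} gives
\[
S_2(n)=\frac{12}{\pi^2}\,n^4\sum_{1\le a,b\le N}\mathcal J(a,b)+o(n^4).
\]
It therefore suffices to prove
\[
\Sigma(N):=\sum_{1\le a,b\le N}\mathcal J(a,b)=\frac{4\log 2-1}{12}\log N+c+o(1)
\]
for an explicit constant $c$. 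Then \cref{rem:floor-sqrt-safe} converts $\log N$ into $\tfrac12\log n+o(1)$.

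\emph{Symmetrising.} Next I use $\mathcal J(a,b)=\tfrac12K(\max(a,b),\min(a,b))$, which is already noted in the proof of \cref{lem:key-S2-input}. The unordered pair $\{p,q\}$ with $p>q$ occurs as two ordered pairs $(a,b)$, and the diagonal pair occurs once. Hence
\[
\Sigma(N)=\sum_{p\le N}\Bigl(\sum_{1\le q<p}K(p,q)+\tfrac12K(p,p)\Bigr).
\]
Now insert $K(p,q)=\frac1{3p(p+q)}-\frac1{12p^2}$ for $p>q$ and $K(p,p)=\frac1{12p^2}$. Together with $\sum_{1\le q<p}(p+q)^{-1}=H_{2p-1}-H_p$, the $p$-th summand becomes
\[
\frac{H_{2p-1}-H_p}{3p}-\frac{p-1}{12p^2}+\frac1{24p^2}
=\frac{H_{2p-1}-H_p-\log2}{3p}+\frac{4\log2-1}{12}\cdot\frac1p+\frac{1}{8p^2}.
\]

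\emph{Summing over $p\le N$.} The first piece is $O(1/p^2)$, so its series converges; by \cref{lem:asym-harmonic} its sum up to $N$ is $\tfrac13(\log^22-\pi^2/6)+o(1)$. The second piece gives $\frac{4\log2-1}{12}(\log N+\gamma)+o(1)$. The third gives $\tfrac18\zeta(2)+o(1)=\pi^2/48+o(1)$. Multiplying by $\frac{12}{\pi^2}n^4$ and using $\log N=\tfrac12\log n+o(1)$ yields the leading coefficient $A_2=\frac{4\log2-1}{2\pi^2}$. The constant term is
\[
\frac{4\log^22}{\pi^2}-\frac23+\frac{(4\log2-1)\gamma}{\pi^2}+\frac14,
\]
which equals the stated $B_2$ because $-\tfrac23+\tfrac14=-\tfrac5{12}$.

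\emph{Main obstacle.} The analytic difficulty, namely uniformity of the Euler--Maclaurin expansion and control of the Möbius tails, has already been isolated in \cref{lem:key-S2-input}. What remains is bookkeeping. The step requiring care is the symmetrisation: the diagonal $a=b$ must be weighted by $\tfrac12K(p,p)$ and not by $K(p,p)$, and the $O(1)$ error from $\log N$ versus $\tfrac12\log n$ must be tracked so that it contributes only $o(n^4)$ after multiplication by $n^4$. I would double-check the diagonal weight against the definition of $\mathcal J$, which integrates over the wedge $x>y$: on the diagonal this wedge is exactly half of the symmetric region defining $K(p,p)$.
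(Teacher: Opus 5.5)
Your proposal is correct and follows essentially the same route as the paper: \cref{lem:key-S2-input}, then the symmetrisation $\mathcal J(a,b)=\tfrac12K(\max(a,b),\min(a,b))$ with the diagonal weighted by $\tfrac12K(p,p)$, then the closed-form row sums from \cref{lem:asym-kernel}, \cref{lem:asym-harmonic}, and finally $\log N=\tfrac12\log n+o(1)$. Your constant $\tfrac13(\log^2 2-\tfrac{\pi^2}{6})+\tfrac{4\log 2-1}{12}\gamma+\tfrac{\pi^2}{48}$ equals the paper's $C_0=\tfrac{4\log 2-1}{12}\gamma+\tfrac{\log^2 2}{3}-\tfrac{5\pi^2}{144}$, so your $B_2=12C_0/\pi^2$ agrees with the stated value.
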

	
	\begin{proof}
		By definition,
		\[
		S_2(n)=2\sum_{1\le a,b\le N}H_n(a,b),\qquad N:=\lfloor \sqrt n\rfloor.
		\]
		By Lemma~\ref{lem:key-S2-input},
		\[
		S_2(n)=\frac{12}{\pi^2}n^4\sum_{1\le a,b\le N}\mathcal J(a,b)+o(n^4).
		\]
		Since $\mathcal J(a,b)=\frac12K(\max(a,b),\min(a,b))$, we have
		\[
		\sum_{1\le a,b\le N}\mathcal J(a,b)=\sum_{u\le N}\sum_{v<u}K(u,v)+\frac12\sum_{u\le N}K(u,u).
		\]
		A direct summation using Lemma~\ref{lem:asym-kernel} gives
		\[
		\sum_{v=1}^{u-1}K(u,v)=\frac{H_{2u-1}-H_u}{3u}-\frac{u-1}{12u^2},\qquad K(u,u)=\frac{1}{12u^2}.
		\]
		Summing over $u\le N$, using Lemma~\ref{lem:asym-harmonic}, and also
		\[
		\sum_{u\le N}\frac{1}{u^2}=\frac{\pi^2}{6}-\frac{1}{N}+O(N^{-2}),
		\]
		one obtains
		\[
		\sum_{u\le N}\sum_{v<u}K(u,v)+\frac12\sum_{u\le N}K(u,u)=\frac{4\log 2-1}{12}\log N+C_0+o(1),
		\]
		where $C_0=\frac{4\log 2-1}{12}\gamma+\frac{\log^2 2}{3}-\frac{5\pi^2}{144}$.
		Multiplying by $(12/\pi^2)n^4$ yields
		\[
		S_2(n)=\frac{4\log 2-1}{\pi^2}n^4\log N+\frac{12C_0}{\pi^2}n^4+o(n^4).
		\]
		Since $N=\lfloor\sqrt n\rfloor$, one has $N/\sqrt n\to 1$, hence
		\[
		\log N=\frac12\log n+o(1).
		\]
		Therefore
		\[
		S_2(n)=A_2n^4\log n+B_2n^4+o(n^4),
		\]
		with $B_2=12C_0/\pi^2=\frac{(4\log 2-1)\gamma+4\log^2 2}{\pi^2}-\frac{5}{12}$.
	\end{proof}
	
	\subsection{\texorpdfstring{The contribution $S_{12}$}{The contribution S12}}
	
	\begin{lemma}[weighted primitive points]\label{lem:primitive-weighted}
		Let
		\[
		T:=\{(x,y)\in\mathbb R^2:0\le y\le x\le 1\}.
		\]
		Assume that $g$ is continuous on $T$, globally Lipschitz on $T$, and that there is a finite partition of $T$ into polygons with piecewise $C^2$ boundaries such that $g$ is $C^2$ on each cell and its first derivatives are uniformly bounded there. Then
		\[
		\sum_{\substack{u>v\ge 1\\ (u,v)=1\\ u,v\le N}}g\!\left(\frac{u}{N},\frac{v}{N}\right)
		=\frac{6}{\pi^2}N^2\iint_T g(x,y)\,dx\,dy+O_g(N\log N).
		\]
	\end{lemma}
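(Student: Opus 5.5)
We remove the coprimality condition by M\"obius inversion and then compare each unrestricted lattice sum with an integral. Write
\[
\Sigma(N):=\sum_{\substack{u>v\ge 1\\ (u,v)=1\\ u,v\le N}}g\!\left(\frac uN,\frac vN\right)
=\sum_{d\le N}\mu(d)\sum_{\substack{s>t\ge 1\\ s,t\le N/d}}g\!\left(\frac{ds}{N},\frac{dt}{N}\right),
\]
where we substitute $u=ds$, $v=dt$. Put $M:=N/d$. The inner sum is a Riemann-type sum of $g$ over the lattice points $(s/M,t/M)$ lying in the triangle $T$, excluding the diagonal and the axis $t=0$.

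The key step is the uniform estimate
\[
\sum_{\substack{s>t\ge1\\ s\le M}} g\!\left(\frac sM,\frac tM\right)=M^2\iint_T g+O_g(M+1)\qquad (M\ge 1 \text{ real}).
\]
We prove it without Euler--Maclaurin, using only the Lipschitz hypothesis. Attach to each lattice point $(s,t)$ with $1\le t<s\le \lfloor M\rfloor$ the unit cell $[s-1,s)\times[t-1,t)$ and rescale by $1/M$. On each such cell $|g(s/M,t/M)-g(x,y)|\le \mathrm{Lip}(g)\cdot\sqrt2/M$, so the sum differs from $M^2$ times the integral of $g$ over the union of the rescaled cells by $O(M^2\cdot M^{-1})=O(M)$. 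This union differs from $T$ only in a strip of width $O(1/M)$ along $\partial T$, plus the fractional-part strip near $x=1$ when $M\notin\mathbb Z$. That region has area $O(1/M)$, and $g$ is bounded, so the discrepancy is again $O(M)$ after multiplying by $M^2$. Because only global Lipschitz continuity and boundedness are used, the piecewise-$C^2$ cell structure is not even needed at this level of precision. We do not attempt the sharper remainder that Remark~\ref{rem:parametric-EM} would supply.

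Substituting this estimate gives
\[
\Sigma(N)=N^2\iint_T g\sum_{d\le N}\frac{\mu(d)}{d^2}+O\Bigl(\sum_{d\le N}\Bigl(\frac Nd+1\Bigr)\Bigr).
\]
The error term is $O(N\log N)$. For the main term, $\sum_{d\le N}\mu(d)/d^2=6/\pi^2+O(1/N)$, which contributes an additional $O(N)$. This yields the claim with the constant $6/\pi^2$.

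The main obstacle is keeping the boundary errors uniform in $d$. For large $d$, $M=N/d$ is small and possibly below $1$, where the estimate must degenerate gracefully. This is why we write the error as $O(M+1)$: the inner sum has at most $O(M^2+1)$ terms, each bounded by $\|g\|_\infty$, and $M^2\iint_T|g|=O(M^2)$, so the bound $O(M+1)$ holds trivially when $M<1$. It is this ``$+1$'' summed over $d\le N$ that contributes only $O(N)$. The remaining work is routine bookkeeping.
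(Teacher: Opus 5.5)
Your proof is correct. It differs from the paper's at one step, the estimate for the inner sum, and your version is more elementary. Both arguments use the same M\"obius expansion with $M=N/d$ and the same outer bookkeeping: a main term $N^2\iint_T g\sum_{d\le N}\mu(d)/d^2$ and an error $O\bigl(N\sum_{d\le N}1/d\bigr)=O(N\log N)$.

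\textbf{The paper's route.} For the inner sum, the paper first uses the Lipschitz bound to pass from scaling by $1/M$ to scaling by $1/M_d$, where $M_d=\lfloor M\rfloor$. It then applies the weighted polygonal Euler--Maclaurin expansion on the cells of the given partition of $T$ to get $A_gM_d^2+O(M_d)$. This is the only place the piecewise-$C^2$ hypothesis is used.

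\textbf{Your route.} You compare the sum directly with the integral over the union of rescaled unit cells $[s-1,s)\times[t-1,t)$. These cells lie inside $MT$ because $y<t\le s-1\le x$ and $x<\lfloor M\rfloor\le M$. The per-cell Lipschitz error of $O(1/M)$ therefore costs $O(M)$ in total. The area discrepancy of $O(1/M)$, coming from the diagonal staircase and the strip $\lfloor M\rfloor\le x\le M$, also costs $O(M)$. The non-integer $M$ is handled in this same step.

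\textbf{What each buys.} Your argument shows that the piecewise-$C^2$ assumption is unnecessary for an $O(N\log N)$ remainder. The Euler--Maclaurin route gives consistency with the framework of Remark~\ref{rem:parametric-EM}, and in principle an explicit boundary term. That extra precision is never used here, because the remainder is dominated by $N\sum_{d\le N}|\mu(d)|/d$ anyway.

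\textbf{Minor point.} The case $M<1$ that you guard against does not occur, since $d\le N$ forces $M\ge1$. The extra ``$+1$'' in your error term is harmless.
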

	
	\begin{proof}
		By M\"obius inversion,
		\[
		\sum_{\substack{u>v\ge 1\\ (u,v)=1\\ u,v\le N}}g\!\left(\frac{u}{N},\frac{v}{N}\right)=\sum_{d\le N}\mu(d)\sum_{\substack{m>n\ge 1\\ m,n\le N/d}}g\!\left(\frac{dm}{N},\frac{dn}{N}\right).
		\]
		Let $M:=N/d$ and $M_d:=\lfloor M\rfloor$. Then the inner sum is
		\[
		\sum_{\substack{m>n\ge 1\\ m,n\le M_d}}g\!\left(\frac{m}{M},\frac{n}{M}\right).
		\]
		Replacing the cutoff $M_d$ by $M$ changes the summation region only in a boundary strip containing $O(M)$ lattice points, hence contributes $O_g(M)$. On the common part, the Lipschitz estimate and $|M^{-1}-M_d^{-1}|=O(M^{-2})$ give
		\[
		\left|g\!\left(\frac{m}{M},\frac{n}{M}\right)-g\!\left(\frac{m}{M_d},\frac{n}{M_d}\right)\right|=O_g(M^{-1})
		\]
		for each lattice point, hence another total $O_g(M)$. Therefore the inner sum differs by at most $O_g(M)$ from
		\[
		\sum_{\substack{m>n\ge 1\\ m,n\le M_d}}g\!\left(\frac{m}{M_d},\frac{n}{M_d}\right).
		\]
		Applying the weighted polygonal Euler--Maclaurin expansion separately on the finitely many cells of the partition of the closed triangle $T$ yields
		\[
		\sum_{\substack{m>n\ge 1\\ m,n\le M_d}}g\!\left(\frac{m}{M_d},\frac{n}{M_d}\right)=A_gM_d^2+O_g(M_d),
		\qquad A_g:=\iint_T g(x,y)\,dx\,dy,
		\]
		uniformly in $M_d\ge 1$. Since $M_d=M+O(1)$, this becomes
		\[
		\sum_{\substack{m>n\ge 1\\ m,n\le M_d}}g\!\left(\frac{m}{M},\frac{n}{M}\right)=A_gM^2+O_g(M).
		\]
		Hence
		\[
		\sum_{\substack{u>v\ge 1\\ (u,v)=1\\ u,v\le N}}g\!\left(\frac{u}{N},\frac{v}{N}\right)
		=N^2A_g\sum_{d\le N}\frac{\mu(d)}{d^2}+O_g\!\left(N\sum_{d\le N}\frac{|\mu(d)|}{d}\right).
		\]
		Using
		\[
		\sum_{d\le N}\frac{\mu(d)}{d^2}=\frac1{\zeta(2)}+O\!\left(\frac1N\right),
		\qquad
		\sum_{d\le N}\frac{|\mu(d)|}{d}=O(\log N),
		\]
		where the second bound is only the standard squarefree-indicator estimate, we obtain the claim. In particular, the $N\log N$ remainder comes exactly from the factor $N\sum_{d\le N}|\mu(d)|/d$.
	\end{proof}
	
	\begin{theorem}[asymptotic for $S_{12}$]\label{thm:asym-S12}
		One has
		\[
		S_{12}(n)=B_{12}n^4+o(n^4),
		\]
		where $B_{12}=-\frac13+\frac{2\log^2 2}{\pi^2}+\frac{19\log 2}{3\pi^2}-\frac{1}{12\pi^2}$.
	\end{theorem}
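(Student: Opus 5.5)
Everything in $S_{12}$ lives at scale $\sqrt n$, so I will rescale all four variables by $N:=\lfloor\sqrt n\rfloor$ and read the sum as a Riemann sum. Put $u=N\xi$, $v=N\eta$, $a=N\alpha$, $b=N\beta$. By Remark~\ref{rem:floor-sqrt-safe} we have $n=N^2(1+O(1/N))$. Hence the constraints $au+bv\le n$, $av+bu\le n$ become
\[
\alpha\xi+\beta\eta\le 1,\qquad \alpha\eta+\beta\xi\le 1,
\]
up to boundary perturbations of relative size $O(1/N)$. The weight becomes $n^2w(\xi,\eta,\alpha,\beta)$ with $w=(1-\alpha\xi-\beta\eta)(1-\alpha\eta-\beta\xi)$, and all four rescaled variables lie in $[0,1]$.

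\textbf{Step 1 (inner sum).} For fixed primitive $(u,v)$ I sum over $(a,b)$. The region is the polygon $nP_{u,v}\cap[1,N]^2$, and the weight is bounded by $n^2$. Using the area term of Remark~\ref{rem:parametric-EM} (or simply a Lipschitz Riemann-sum bound with $N^2$ points and $O(N)$ boundary points), I get
\[
\sum_{a,b}(\cdots)=n^2N^2\,h(\xi,\eta)+O(n^2N),\qquad h(\xi,\eta):=\iint_{[0,1]^2}\mathbf 1_{\{\alpha\xi+\beta\eta\le1,\ \alpha\eta+\beta\xi\le1\}}\,w\,d\alpha\,d\beta .
\]
This holds uniformly in $(\xi,\eta)\in T$. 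The factor mismatch between $n$ and $N^2$ only costs $O(n^2N)$.

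\textbf{Step 2 (outer sum).} Summing over the $O(N^2)$ primitive pairs, the accumulated error is $O(n^2N^3)=O(n^{7/2})=o(n^4)$. I then apply Lemma~\ref{lem:primitive-weighted} with $g=h$, which gives
\[
S_{12}(n)=2\cdot\frac{6}{\pi^2}\,n^4\iint_T h\,d\xi\,d\eta+o(n^4).
\]
To use the lemma I must check that $h$ is continuous, globally Lipschitz, and piecewise $C^2$ on $T$.
\begin{itemize}
\item The integrand $w\,\mathbf 1$ is continuous, because $w$ vanishes on the oblique constraint lines. So $h$ is continuous.
\item Differentiating under the integral, only moving boundaries would produce boundary terms, and those terms vanish where $w=0$. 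The only boundaries that are not killed this way are the fixed faces $\alpha=1$ and $\beta=1$ of the box, which do not move. So $h$ is Lipschitz.
\item The combinatorial type changes only along finitely many algebraic curves, for example $\xi+\eta=1$ and $\xi=1$ meeting the box corners. This gives the piecewise-$C^2$ cells.
\end{itemize}

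\textbf{Step 3 (the constant).} I will compute
\[
I:=\int_T\iint_{[0,1]^2}\mathbf 1\cdot w
\]
explicitly and verify $12I/\pi^2=B_{12}$. My plan is to swap the order and integrate over $(\xi,\eta)$ first, for fixed $(\alpha,\beta)$. By the $\alpha\leftrightarrow\beta$ symmetry, together with the $\xi\leftrightarrow\eta$ symmetry of the region, I may take $\alpha\ge\beta$ and double. For fixed $(\alpha,\beta)$ the constraint region in $(\xi,\eta)$ is exactly the same type of quadrilateral as in Lemma~\ref{lem:asym-kernel}, now additionally cut by the box $\xi\le 1$.
\begin{itemize}
\item Where $1/\alpha\ge 1$, the box cut is inactive. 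The inner integral is then a $K$-type quantity, $K(\alpha,\beta)$ restricted to the wedge $\eta\le\xi$, i.e.\ $\tfrac12K$.
\item Where $\alpha>1$, I must split the inner region at $\xi=1$.
\end{itemize}
The outer integrals of rational functions such as $\frac{3\alpha-\beta}{\alpha^2(\alpha+\beta)}$ over $[0,1]^2$ produce logarithms, hence the $\log 2$ and $\log^2 2$ terms. The term $-\frac13$, not divided by $\pi^2$, indicates that part of the stated constant does not carry the factor $6/\pi^2$. I therefore expect either an elementary piece, or that the claimed form groups terms differently. I will reconcile this by tracking carefully what the factor $2$ in the definition of $S_{12}$ multiplies.

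\textbf{Main obstacle.} The main obstacle is Step 3: the piecewise integration and getting exactly the stated mix of rational and $1/\pi^2$ terms. A secondary risk is the Lipschitz verification for $h$ near the corner $\xi=\eta=1$, where the quadrilateral degenerates. There I would argue as in the proof of Theorem~\ref{thm:asym-S1}: the oblique sides carry zero weight, so the degeneration is harmless.
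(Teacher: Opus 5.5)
Your Steps 1 and 2 are sound and follow the paper's proof: you rescale by $N=\lfloor\sqrt n\rfloor$, approximate each inner $(a,b)$-sum uniformly by $N^6h(u/N,v/N)+O(N^5)$, absorb the replacement of $n/N^2$ by $1$, and apply Lemma~\ref{lem:primitive-weighted} to get $S_{12}(n)=\frac{12}{\pi^2}n^4\iint_T h+o(n^4)$. Your observation that $\nabla h$ picks up no boundary terms, because $w$ vanishes on the moving edges, is a clean way to get the Lipschitz hypothesis. The gap is Step 3, which carries the actual content of the statement. You never evaluate $\iint_T h$, and the plan you sketch would fail. Since $\alpha\in[0,1]$, your case $\alpha>1$ is empty. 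In your other case the inequality is reversed: the $K$-type region in $(\xi,\eta)$ extends to $\xi=1/\alpha\ge 1$, so the cut $\xi\le 1$ is active for every $\alpha<1$, not inactive. Replacing the inner integral by $\frac12K(\alpha,\beta)$ would give $\iint K$ over a neighbourhood of the origin, which diverges, because $K(\alpha,\beta)\ge 1/(12\alpha^2)$ for $\alpha\ge\beta$. The box cut is exactly what makes the integral finite. The correct dichotomy in your order is $\alpha+\beta\le 1$ versus $\alpha+\beta>1$. However, the integrand and constraints are symmetric under $(\xi,\eta)\leftrightarrow(\alpha,\beta)$, so your swapped inner integral is just $\frac12 h(\alpha,\beta)$ and the swap gains nothing.

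The route that works, and the one the paper takes, is to split $T$ along $\xi+\eta=1$. On $T_0=\{\xi+\eta\le 1\}$ the whole box $[0,1]^2$ is admissible, and $h=1-\xi-\eta+\frac{\xi^2+\eta^2}{4}+\frac{2\xi\eta}{3}$. On $T_1$ the admissible set is the $K(\xi,\eta)$-quadrilateral of Lemma~\ref{lem:asym-kernel} minus two congruent corner triangles lying beyond $\alpha=1$ and $\beta=1$. So one may write $h=K(\xi,\eta)-2E(\xi,\eta)$, where $E$ is the integral of $w$ over the triangle with vertices $(1,0)$, $(1/\xi,0)$, $(1,(1-\xi)/\eta)$. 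Integrating gives $\iint_T h=-\frac{1}{144}-\frac{\pi^2}{36}+\frac{\log^2 2}{6}+\frac{19\log 2}{36}$. This also settles your worry about $-\frac13$: no part of $S_{12}$ escapes the factor $\frac{12}{\pi^2}$. Instead, the integral itself contains $-\frac{\pi^2}{36}$, coming from dilogarithm values such as $\int_{1/2}^{1}\frac{\log(1-x)}{x}\,dx=-\frac{\pi^2}{12}-\frac{\log^2 2}{2}$, and $\frac{12}{\pi^2}\cdot\bigl(-\frac{\pi^2}{36}\bigr)=-\frac13$. Looking for an elementary term without the $6/\pi^2$ factor, or re-examining the factor $2$, would lead you in the wrong direction.
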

	
	\begin{proof}
		Let $N:=\lfloor\sqrt n\rfloor$ and put $x:=u/N$, $y:=v/N$. For primitive $u>v$ with $u,v\le N$, write
		\[
		Q_N(u,v):=\sum_{\substack{1\le a,b\le N\\ au+bv\le n\\ av+bu\le n}}(n-au-bv)(n-av-bu).
		\]
		After the scaling $a=N\alpha$ and $b=N\beta$, the admissible region becomes
		\[
		D_{x,y}:=[0,1]^2\cap\{x\alpha+y\beta\le 1\}\cap\{y\alpha+x\beta\le 1\}.
		\]
		Therefore, with $\rho_n:=n/N^2$, one has
		\[
		Q_N(u,v)=N^4\sum_{(\alpha,\beta)\in D_{x,y}^{(\rho_n)}\cap (N^{-1}\mathbb Z)^2}(\rho_n-x\alpha-y\beta)(\rho_n-y\alpha-x\beta),
		\]
		where
		\[
		D_{x,y}^{(\rho)}:=[0,1]^2\cap\{x\alpha+y\beta\le \rho\}\cap\{y\alpha+x\beta\le \rho\}.
		\]
		By Remark~\ref{rem:floor-sqrt-safe}, $\rho_n=1+o(1)$.
		Write
		\[
		T_0:=T\cap\{x+y\le1\},\qquad T_1:=T\cap\{x+y\ge1\}.
		\]
		On the relative interiors of $T_0$ and $T_1$, and for $\rho$ in a neighborhood of $1$, the family $(D_{x,y}^{(\rho)},(\rho-x\alpha-y\beta)(\rho-y\alpha-x\beta))$ has fixed combinatorial type and depends piecewise-$C^2$ on $(x,y,\rho)$ with uniform bounds. Hence Remark~\ref{rem:parametric-EM} applies separately on each relative interior cell and gives there the uniform area-term expansion
		\[
		Q_N(u,v)=N^6I_{\rho_n}(x,y)+O(N^5),
		\]
		where
		\[
		I_{\rho}(x,y):=\iint_{D_{x,y}^{(\rho)}}(\rho-x\alpha-y\beta)(\rho-y\alpha-x\beta)\,d\alpha\,d\beta.
		\]
		Along the interface $x+y=1$ the combinatorial type changes, but the domains vary continuously, and the only edges that appear or disappear are the oblique edges on which the weight vanishes identically. Hence the area coefficient matches continuously across the interface and remains uniformly bounded there, so the same expansion holds uniformly for all $(x,y)\in T$ and $\rho$ near $1$.
		Since the family is piecewise $C^1$ in $\rho$ near $1$, uniformly in $(x,y)$ one has
		\[
		I_{\rho_n}(x,y)=I(x,y)+O(|\rho_n-1|),\qquad I:=I_1.
		\]
		The resulting change in the main contribution to $S_{12}(n)$ is bounded by
		\[
		N^6\cdot O(|\rho_n-1|)\cdot \#\{(u,v):u>v\ge 1,\ u,v\le N,\ (u,v)=1\}=O(N^8|\rho_n-1|)=o(n^4),
		\]
		because $\#\{(u,v):u>v\ge 1,\ u,v\le N,\ (u,v)=1\}=O(N^2)$, $N^8=n^4+o(n^4)$, and $\rho_n-1=o(1)$. Thus it is enough to work with $\rho=1$, that is, with
		\[
		D_{x,y}=D_{x,y}^{(1)}=[0,1]^2\cap\{x\alpha+y\beta\le 1\}\cap\{y\alpha+x\beta\le 1\}.
		\]
		Thus $I$ is continuous on $T$. On the relative interiors of $T_0$ and $T_1$, $I$ is piecewise $C^1$ with bounded first derivatives, as follows from the explicit formulas below; hence $I$ is globally Lipschitz on $T$ and satisfies the assumptions of Lemma~\ref{lem:primitive-weighted}.
		Applying Lemma~\ref{lem:primitive-weighted} to $I$ gives
		\[
		N^6\sum_{\substack{u>v\ge 1\\ (u,v)=1\\ u,v\le N}} I\!\left(\frac{u}{N},\frac{v}{N}\right)=\frac{12}{\pi^2}N^8\iint_T I(x,y)\,dx\,dy+O(N^7\log N).
		\]
		The uniform $O(N^5)$ remainder contributes at most
		\[
		O(N^5)\cdot \#\{(u,v):u>v\ge 1,\ u,v\le N,\ (u,v)=1\}=O(N^7)=o(n^4).
		\]
		Therefore
		\[
		S_{12}(n)=\frac{12}{\pi^2}N^8\iint_T I(x,y)\,dx\,dy+O(N^7\log N).
		\]
		To evaluate the bulk integral, split $T$ into the two regions $T_0:=T\cap\{x+y\le 1\}$ and $T_1:=T\cap\{x+y>1\}$. On $T_0$ one has $D_{x,y}=[0,1]^2$, hence
		\[
		I(x,y)=\int_0^1\!\!\int_0^1 (1-x\alpha-y\beta)(1-y\alpha-x\beta)\,d\alpha\,d\beta
		=1-x-y+\frac{x^2+y^2}{4}+\frac{2xy}{3}.
		\]
		On $T_1$, with
		\[
		W_{x,y}(\alpha,\beta):=(1-x\alpha-y\beta)(1-y\alpha-x\beta),
		\]
		and
		\[
		\beta_1(\alpha):=\frac{1-x\alpha}{y},\qquad \beta_2(\alpha):=\frac{1-y\alpha}{x},
		\]
		one has
		\[
		\begin{aligned}
			I(x,y)={}&\int_0^{\frac{1-x}{y}}\!\int_0^1 W_{x,y}(\alpha,\beta)\,d\beta\,d\alpha\\
			&+\int_{\frac{1-x}{y}}^{\frac{1-y}{x}}\!\int_0^{\beta_1(\alpha)} W_{x,y}(\alpha,\beta)\,d\beta\,d\alpha\\
			&+\int_{\frac{1-y}{x}}^{1}\!\int_0^{\beta_2(\alpha)} W_{x,y}(\alpha,\beta)\,d\beta\,d\alpha,
		\end{aligned}
		\]
		and evaluating these elementary integrals gives
		\[
		\begin{aligned}
			I(x,y)={}&1-x-y+\frac{x^2+y^2}{4}+\frac{2xy}{3}\\
			&-\frac{(x+y-1)^4}{24x^6y^6}\,\Bigl(
			x^{10}-4x^9y+7x^8y^2-8x^7y^3+7x^6y^4-4x^5y^5\\
			&\hspace{4.3em}{}+7x^4y^6-8x^3y^7+7x^2y^8-4xy^9+y^{10}\Bigr).
		\end{aligned}
		\]
		The two formulas agree continuously along $x+y=1$, so $I$ is continuous on $T$. Integrating them over $T_0$ and $T_1$ gives
		\[
		\iint_T I(x,y)\,dx\,dy=-\frac{1}{144}-\frac{\pi^2}{36}+\frac{\log^2 2}{6}+\frac{19\log 2}{36}.
		\]
		Thus $\frac{12}{\pi^2}\iint_T I=B_{12}$. Since $N=\lfloor\sqrt n\rfloor$, Remark~\ref{rem:floor-sqrt-safe} gives $N^8=n^4+o(n^4)$ and $N^7\log N=o(n^4)$. Therefore
		\[
		S_{12}(n)=B_{12}n^4+o(n^4),
		\]
		as claimed.
	\end{proof}
	
	\subsection{Assembly}
	
	\begin{lemma}[covering identity]\label{lem:asym-cover}
		For every admissible quadruple $(u,v,a,b)$ with $u>v\ge 1$, $(u,v)=1$, $a,b\ge 1$, $au+bv\le n$, and $av+bu\le n$, one has either $u,v\le \sqrt n$ or $a,b\le \sqrt n$. Consequently,
		\[
		F(n)=S_1(n)+S_2(n)-S_{12}(n)+F_0(n).
		\]
		Here $F_0(n)$ is exactly the exceptional contribution of the three directions $(1,0)$, $(0,1)$, and $(1,1)$ that were removed from the primitive summations at the start of the appendix.
	\end{lemma}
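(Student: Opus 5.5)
The covering property is a pigeonhole argument on the two linear constraints, and the decomposition of $F(n)$ then follows by inclusion--exclusion. First I show that $u>\sqrt n$ and $a>\sqrt n$ cannot both hold, and likewise for the other three pairings. From $au\le au+bv\le n$ we get $\min(a,u)\le\sqrt n$. In the same way, $bu\le av+bu\le n$ gives $\min(b,u)\le \sqrt n$, $av\le n$ gives $\min(a,v)\le\sqrt n$, and $bv\le n$ gives $\min(b,v)\le\sqrt n$.

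Now suppose $u,v\le\sqrt n$ fails. Since $u>v$, this means $u>\sqrt n$. The first two inequalities then force $a\le\sqrt n$ and $b\le\sqrt n$, which is the second alternative. So every admissible quadruple lies in $\{u,v\le\sqrt n\}\cup\{a,b\le\sqrt n\}$. Here we also use $u>v$, so that $\max(u,v)=u$.

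For the identity, the plan is to show that $S_1$, $S_2$ and $S_{12}$ are sums of the same summand $2(n-au-bv)(n-av-bu)$ over three index sets: the quadruples in region $\{u,v\le\sqrt n\}$, in region $\{a,b\le\sqrt n\}$, and in their intersection. The inclusion--exclusion identity $\mathbf 1_{X\cup Y}=\mathbf 1_X+\mathbf 1_Y-\mathbf 1_{X\cap Y}$ then gives $S_1+S_2-S_{12}$ as the sum of the summand over all admissible quadruples with $u>v\ge1$ and $(u,v)=1$. In $S_2$ and $S_{12}$ the range $1\le a,b\le\sqrt n$ is to be read as $a,b\le\lfloor\sqrt n\rfloor$; this gives the same sets of integers.

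The remaining step is to identify this total with $F(n)-F_0(n)=F_1(n)$, and this is the step needing care. In \eqref{eq:xyuv-standard}, the factor $\mult(x,y)$ together with the restriction $x\ge y$ counts each ordered pair $(a,b)$ with weight $2$: unordered pairs with $a\neq b$ get total weight $4$ and the diagonal gets weight $2$. The constraints and the weight are symmetric under $a\leftrightarrow b$, so this matches $\sum_{a,b}2(\cdots)$.

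Then I would check the boundary convention. The appendix sums over $u>v\ge1$. The direction $(1,0)$ has $v=0$ and is excluded, and $(1,1)$ fails $u>v$ and is excluded. All three exceptional directions $(1,0),(0,1),(1,1)$ are therefore exactly the ones absorbed into $F_0(n)$, as in \cref{sec:standard}.

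Finally I would confirm that the factor $2$ accounts for the reflection $(u,v)\mapsto(v,u)$ in the same way as in the standard parametrization. Comparing with \eqref{eq:xyuv-standard} term by term settles this: the weight there is also total $2$ per ordered $(a,b)$, with no additional orientation factor. With that bookkeeping confirmed, $F(n)=S_1+S_2-S_{12}+F_0(n)$.
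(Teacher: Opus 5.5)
Your proof is correct and follows the paper's argument: if $u>\sqrt n$, then $au\le n$ and $bu\le n$ force $a,b<\sqrt n$, and the identity follows by inclusion--exclusion over $\{u,v\le\sqrt n\}$, $\{a,b\le\sqrt n\}$ and their intersection. Your extra check that weight $2$ per ordered pair $(a,b)$ reproduces $\mult(x,y)$ in \eqref{eq:xyuv-standard}, and that the three exceptional directions are exactly those in $F_0(n)$, is bookkeeping the paper leaves implicit, and you do it correctly.
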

	
	\begin{proof}
		If $u>\sqrt n$, then $au<n$ and $bu<n$, hence $a,b<n/u<\sqrt n$. Thus every admissible quadruple lies in $\{u,v\le \sqrt n\}\cup\{a,b\le \sqrt n\}$, and the formula for $F(n)$ follows by inclusion--exclusion.
	\end{proof}

	\begin{theorem}[assembly of the two-term asymptotic]\label{thm:asym-assembly-main}
		One has
		\[
		F(n)=A\,n^4\log n+B\,n^4+o(n^4),
		\]
		where
		\[
		A=\frac{4\log 2-1}{\pi^2}
		\]
		and
		\[
		B=-\frac{4\log 2-1}{6}\frac{\zeta'(2)}{\zeta(2)^2}+\frac{24(4\log 2-1)\gamma+72\log^2 2-76\log 2+1}{12\pi^2}-\frac14.
		\]
	\end{theorem}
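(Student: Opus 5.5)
The plan is to assemble the pieces already proved. By the covering identity \cref{lem:asym-cover},
\[
F(n)=S_1(n)+S_2(n)-S_{12}(n)+F_0(n).
\]
I will substitute the expansions of \cref{thm:asym-S1,thm:asym-S2,thm:asym-S12}. For $F_0(n)$ I will use its closed form from \cref{sec:standard}, which gives $F_0(n)=\tfrac14 n^4+\tfrac1{12}n^4+O(n^3)=\tfrac13 n^4+O(n^3)$. All the error terms are $o(n^4)$, and finitely many of them add up to $o(n^4)$. Hence $F(n)=(A_1+A_2)n^4\log n+(B_1+B_2-B_{12}+\tfrac13)n^4+o(n^4)$.

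For the logarithmic coefficient, $A_1+A_2=2\cdot\frac{4\log2-1}{2\pi^2}=\frac{4\log 2-1}{\pi^2}=A$, and $S_{12}$ carries no logarithm.

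The constant term is the only real work, and it is bookkeeping. I will sort $B_1+B_2-B_{12}+\tfrac13$ by transcendental type: the $\zeta'(2)/\zeta(2)^2$ term, the $\gamma/\pi^2$ term, the $\log^2 2/\pi^2$, $\log 2/\pi^2$ and $1/\pi^2$ terms, and the purely rational and $\log 2$ terms.

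\emph{The $\zeta'(2)$ term.} This comes only from $B_1$, namely $-\frac{4\log2-1}{6}\frac{\zeta'(2)}{\zeta(2)^2}$, which matches $B$.

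\emph{The $\gamma$ terms.} $B_1$ contributes $\frac{(4\log2-1)\gamma}{\pi^2}$ and $B_2$ contributes the same, for a total of $\frac{24(4\log2-1)\gamma}{12\pi^2}$.

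\emph{The $\log^2 2$ terms.} These are $\frac{4}{\pi^2}+\frac4{\pi^2}-\frac2{\pi^2}=\frac{6\log^2 2}{\pi^2}=\frac{72\log^22}{12\pi^2}$.

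\emph{The $1/\pi^2$ and $\log2/\pi^2$ terms.} These come only from $-B_{12}$, giving $-\frac{19\log 2}{3\pi^2}+\frac1{12\pi^2}=\frac{-76\log2+1}{12\pi^2}$.

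\emph{The remaining rational and $\log 2$ terms.} From $B_1$ we get $-\frac{4\log 2-1}{6}+\frac{2\log2}{3}-\frac23=-\frac12$; the $\log2$ parts cancel here. From $B_2$ we get $-\frac5{12}$, from $-B_{12}$ we get $+\frac13$, and from $F_0$ we get $+\frac13$. These sum to $-\frac12-\frac5{12}+\frac13+\frac13=-\frac14$, which matches $B$.

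The main obstacle is not conceptual. It is making sure the leading coefficient of $F_0$ is right, since the direction $(1,1)$ contributes $\tfrac1{12}n^4$, and that the sign conventions of $B_{12}$ are consistent. I will double-check the $F_0$ leading term by expanding $\frac{n^2(n-1)^2}{4}+\frac{n(n-1)^2(n-2)}{12}$ directly. If the rational parts did not close to $-\tfrac14$, I would first re-examine whether $S_1$, $S_2$ and $S_{12}$ as defined in the appendix also include or exclude the $(1,1)$ direction, which has $u=v$, and adjust $F_0$ accordingly.
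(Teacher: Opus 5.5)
Your proposal is correct and is essentially the paper's proof: the covering identity \cref{lem:asym-cover}, together with \cref{thm:asym-S1,thm:asym-S2,thm:asym-S12} and $F_0(n)=\tfrac13 n^4+O(n^3)$, gives $A=A_1+A_2$ and $B=B_1+B_2-B_{12}+\tfrac13$. The paper only says ``substituting the explicit constants''; your term-by-term sorting of the constant is accurate, including the rational part $-\tfrac12-\tfrac5{12}+\tfrac13+\tfrac13=-\tfrac14$.
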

	
	\begin{proof}
		By Lemma~\ref{lem:asym-cover} and the closed form from Section~\ref{sec:standard},
		\[
		F(n)=S_1(n)+S_2(n)-S_{12}(n)+F_0(n),\qquad F_0(n)=\frac13n^4+O(n^3).
		\]
		Applying Theorems~\ref{thm:asym-S1}, \ref{thm:asym-S2}, and~\ref{thm:asym-S12}, we obtain
		\[
		A=A_1+A_2=\frac{4\log 2-1}{\pi^2},\qquad B=B_1+B_2-B_{12}+\frac13.
		\]
		Note in particular that $S_{12}(n)$ contributes only at order $n^4$, so the entire $n^4\log n$ term comes from $S_1(n)+S_2(n)$. Equivalently, the logarithm arises from the one-parameter sums in $S_1$ and $S_2$, whereas the overlap $S_{12}$ already lives on the doubly truncated scale $u,v,a,b\ll \sqrt n$ and therefore has size only $n^4$. Substituting the explicit constants yields the stated formula for $B$.
	\end{proof}

\end{document}